\documentclass[journal,onecolumn]{IEEEtran}
\usepackage{amsmath,amsfonts,amssymb,amsthm,mathtools}

\usepackage{algorithm}
\usepackage{algpseudocode}
\usepackage{array}
\usepackage[caption=false,font=normalsize,labelfont=sf,textfont=sf]{subfig}
\usepackage{textcomp}
\usepackage{stfloats}
\usepackage{url}
\usepackage{multirow}
\usepackage{verbatim}
\usepackage{graphicx}
\usepackage{cite}
\usepackage{geometry}
\usepackage{float}
\usepackage{enumitem}
\usepackage{tabularx, booktabs, ragged2e}
\newcolumntype{L}{>{\RaggedRight\arraybackslash}X}
\usepackage{placeins}
\usepackage{makecell}
\usepackage{etoolbox} 
\newcommand{\Dd}{\text{Dd}}
\newcommand{\Atom}{\text{Atom}}

\newcommand{\Cn}{\text{Cn}}

\newcommand{\E}{\mathbb{E}}
\newcommand{\Ess}{\operatorname{Ess}}

\newtheorem{assumption}{Assumption}[section]
\newtheorem{theorem}{Theorem}[section]
\newtheorem{lemma}{Lemma}[section]
\newtheorem{proposition}{Proposition}[section]
\newtheorem{corollary}{Corollary}[section]
\newtheorem{definition}{Definition}[section]
\newtheorem{remark}{Remark}[section]
\newtheorem{example}{Example}[section]

\numberwithin{equation}{section}

\makeatletter
\renewenvironment{proof}[1][\proofname]{\par
  \pushQED{\qed}%
  \normalfont \topsep6\p@\@plus6\p@\relax
  \trivlist
  \item[\hskip\labelsep
        \itshape
    #1\@addpunct{.}]\ignorespaces
}{%
  \popQED\endtrivlist\@endpefalse
}
\makeatother

\DeclareMathOperator{\Ind}{Ind}

\begin{document}

\title{Rate-Distortion Theory for Deductive Sources under Closure Fidelity}

\author{Jianfeng~Xu$^{1}$%
\thanks{$^{1}$Koguan School of Law, China Institute for Smart Justice,
School of Computer Science, Shanghai Jiao Tong University,
Shanghai 200030, China. Email: xujf@sjtu.edu.cn}%
}

\maketitle

\begin{abstract}
We study lossy compression of a finite statement source generated in a fixed deductive environment.
The source symbols are statements in a knowledge base endowed with a shared proof system, and reconstruction fidelity is measured by preservation of deductive closure rather than by symbolwise equality.
Fixing the proof system and a canonical scan order yields a decomposition of the source alphabet into an irredundant core and redundant stored consequences.
At zero distortion, each core symbol induces a set of distortion-free reconstructions.
In the nonconfusable (disjoint-core) regime, we show that the minimum zero-distortion rate equals the source mass of the core times the entropy of the source conditioned on that core.
In the general confusable-core regime, we characterise the exact zero-distortion rate via a hypergraph-entropy quantity induced by jointly realisable core subsets, with a reduction to K\"orner-style graph entropy under a natural pairwise realisability condition.
For reconstruction alphabets contained in the deductive closure of the source knowledge base, we further prove that the full rate--distortion function depends only on the core, so redundant states are invisible to both rate and distortion.
Finally, when the decoder is limited to a bounded inference-depth budget (a bounded number of iterations of the immediate-consequence operator), we obtain an exact rate--depth--distortion characterisation.
Under an additional order-robustness assumption identifying the chosen core with the order-free essential set, this characterisation interpolates between classical symbolwise compression and unconstrained deductive compression.
\end{abstract}

\begin{IEEEkeywords}
rate-distortion, source coding, data compression, structured sources,
deductive closure, closure fidelity
\end{IEEEkeywords}

\section{Introduction}
\label{sec:introduction}

\IEEEPARstart{C}{lassical} rate--distortion theory studies the minimum description rate required to represent a random source under a prescribed fidelity criterion \cite{shannon1948mathematical,cover2006elements,csiszar2011information}. In the standard formulation, the source alphabet is treated as a collection of symbols and distortion is imposed directly on source--reconstruction pairs. In many modern communication settings, however, the source is not naturally a flat alphabet but a finite knowledge base equipped with a shared proof system. In such settings, symbolwise reproduction can be unnecessarily stringent: if the receiver can reconstruct the same deductive consequences from a different stored representation, then the communication task has arguably been completed.

This observation suggests a source-coding problem in which fidelity is measured not by exact symbol reproduction but by preservation of deductive closure. We study this problem for a finite source set \(S_O\) endowed with a fixed proof system and its induced closure operator \(\Cn(\cdot)\). The resulting source model carries an intrinsic redundancy structure: some states are irredundant premises, while others are stored consequences that can be re-derived once those premises are known. Once the proof system and canonical order are fixed, this induces a
decomposition of the source into an irredundant core
\(A:=\Atom(S_O)\) and a redundant part \(J:=S_O\setminus A\). Under closure fidelity, redundant states need not be transmitted faithfully, so the effective source rate can be strictly smaller than the classical rate.

Operationally, the source studied here is not a random knowledge base
but a \emph{statement-level source generated within a fixed deductive
environment}. A finite knowledge base \(S_O\), together with its proof
system and closure operator, is fixed as part of the model, and the
random source symbol is a statement \({\mathsf{S}}_o\sim P_O\) drawn from that
environment. Thus a source block models a stream of statement reports,
updates, query answers, or semantic tokens produced relative to the
same local knowledge base. This is why the distortion may depend on
the ambient \(S_O\) while remaining a standard single-letter
distortion function.

The present formulation is related to, but distinct from, the broader
semantic-communication and task-oriented communication literature,
including early conceptual proposals\cite{carnap1952outline}, modern surveys\cite{gunduz2022beyond}, and recent
mathematical formulations of semantic rate--distortion and semantic
coding
\cite{liu2022indirect,niu2024mathematical,zhao2025semantic,
han2025extended,ma2025theory}.
Our formulation is narrower and more Shannon-theoretic: we fix a
finite deductive environment and let the proof system itself induce
zero-distortion equivalence through closure preservation.

\paragraph*{Related work and positioning.}
Two classical information-theoretic lines are especially relevant.
First, source coding with decoder side information and
computation-oriented compression---from Slepian--Wolf and Wyner--Ziv
to coding for computing---show how decoder-side structure or
non-identity reconstruction goals can change fundamental compression
limits \cite{slepian1973noiseless,wyner1976rate,orlitsky2001coding}.
Our setting differs in that the decoder's advantage is not exogenous
random side information and the reproduction target is not a fixed
function specified in advance; instead, the shared proof system
induces the fidelity criterion itself.

Second, the confusable-core phenomenon that appears here is closely
connected to zero-error and graph-theoretic information theory.
Shannon's zero-error framework introduced confusability as an
information-theoretic object, K\"orner's graph entropy and
Witsenhausen's zero-error side-information formulation linked source
coding to graph structure, and later work further clarified these
connections \cite{shannon1956zero,korner1986fredman,witsenhausen1976zero,alon1996source}.
When Assumption~\ref{assump:core-disjoint} fails, our zero-distortion
problem becomes a genuine confusability problem on the irredundant core.
In this paper we resolve the resulting \(D=0\) limit exactly via a
hypergraph-entropy characterization (Section~\ref{sec:confusable}),
with a reduction to K\"orner-style graph entropy under a natural
pairwise realisability condition \cite{korner1971coding,alon1996source,orlitsky2001coding}.

A further adjacent line comes from logic and database theory,
especially the fixpoint semantics of Horn programs, finite-domain
Datalog, reasoning over materialized deductive stores, and the broader
theory of incomplete information and certain answers
\cite{van1976semantics,ceri1989you,abiteboul1995foundations,
dantsin2001complexity,hu2019datalog,gupta1995maintenance,
gupta1998materialized}.
Our redundant/materialized split is closest to the materialized-view
perspective: the set \(J\) behaves as stored consequences that need
not be communicated once the irredundant premises are available. An
incomplete-information analogue would replace deductive closure by a
certain-answer operator, leading to a semantics-based fidelity notion.
We do not pursue that extension here; the present paper remains a
Shannon-theoretic study of a fixed deductive environment.

Finally, our work is also adjacent to recent rate--distortion
formulations for semantic sources and to perception-constrained source
coding
\cite{liu2022indirect,zhao2025semantic,han2025extended,
salehkalaiabar2024rate}. These works broaden the classical fidelity
framework beyond symbolwise criteria, but they do not model a fixed
deductive environment or exploit the proof-induced decomposition into
an irredundant core and redundant stored consequences that is central
to the present deductive-source model.

This paper asks three basic information-theoretic questions. First, what is the exact zero-distortion rate under closure fidelity? Second, how does the full rate--distortion function depend on the irredundant core? Third, how does this picture change when the decoder has only a limited
inference-depth budget (a limited number of iterations)? We answer these questions in a finite-alphabet setting.
A transparent Shannon-entropy closed form for the zero-distortion limit requires a disjointness (nonconfusability) assumption on core reconstruction sets; we state this assumption up front.
When disjointness fails, the exact zero-distortion limit remains characterisable, but the answer becomes combinatorial and is expressed in terms of hypergraph/graph entropy (Section~\ref{sec:confusable}).

Our main message is that closure fidelity converts deductive redundancy into rate savings. In particular, the zero-distortion problem reduces to coding only the irredundant core, and the full rate--distortion function decomposes into a contribution from that core alone. We further show that when the decoder can perform at most \(\delta\)
iterations of the immediate-consequence operator, the relevant effective source becomes a
\(\delta\)-irredundant core. Under an additional order-robustness
assumption aligning the chosen core with the order-free essential
set, this yields an exact rate--depth--distortion interpolation
between classical symbolwise compression and unconstrained
deductive compression. Source--channel separation, closure-adapted converse bounds, and heterogeneous-receiver extensions then follow as consequences of this source-model characterization.

\subsection*{Main Contributions}

\begin{itemize}

\item \textbf{A source-model formulation for deductive compression.}
We formulate a finite-alphabet rate--distortion problem in which the source is a knowledge base endowed with a fixed proof system and the fidelity criterion is preservation of deductive closure. This shifts the emphasis from scheme-specific semantic communication architectures to a source-model-level information-theoretic object.

\item \textbf{An exact zero-distortion characterization under closure fidelity.}
Under Assumption~\ref{assump:core-disjoint}, we prove that the minimum zero-distortion rate is
\[
R_{\mathrm{sem}}(0)=P_A\,H(\pi_A),
\]
where \(A=\Atom(S_O)\) is the irredundant core, \(P_A\) is its source mass, and \(\pi_A\) is the source distribution conditioned on \(A\) \textup{(Theorem~\ref{thm:tight-zero-rate})}. Thus the effective zero-distortion rate is determined by the core rather than by the full stored source.

\item \textbf{A hypergraph/graph-entropy law for the confusable-core zero-distortion limit.}
Dropping Assumption~\ref{assump:core-disjoint}, we characterise the exact zero-distortion rate via a hypergraph entropy induced by jointly realisable core subsets:
\[
R_{\mathrm{sem}}(0)=P_A\,H_{\Gamma_0}(\pi_A)
\]
(Section~\ref{sec:confusable}).
Under a natural pairwise realisability condition, this reduces to a K\"orner-style graph-entropy expression.

\item \textbf{A core-only decomposition of the full rate--distortion function.}
For reconstruction alphabets contained in \(\Cn(S_O)\), we show that
the entire rate--distortion function decomposes into a contribution
from the irredundant core alone
\textup{(Theorem~\ref{thm:rd-decomposition})}. In this sense,
redundant states are invisible to both rate and distortion.

\item \textbf{An exact rate--depth--distortion law for limited inference.}
When the decoder is allowed only a bounded inference-depth budget,
we introduce the \(\delta\)-irredundant core and prove the exact
zero-distortion rate--depth law
\[
R_{\mathrm{sem}}(0,\delta)=P_\delta\,H(\pi_\delta),
\]
together with the corresponding full rate--depth--distortion function
\(R_{\mathrm{sem}}(D,\delta)\) over \((D,\delta)\)
\textup{(Theorems~\ref{thm:rate-depth} and~\ref{thm:rate-depth-distortion})}.
Under an additional order-robustness assumption ensuring
\(\Atom(S_O)=\Ess(S_O)\), this yields a sharp interpolation between
classical symbolwise coding and unconstrained deductive coding.

\item \textbf{Information-theoretic consequences and extensions.}
The source characterization leads to source--channel separation statements, closure-adapted converse bounds of Fano type, and extensions to heterogeneous receivers with mismatched vocabularies. These results show how shared inference structure affects reliable communication beyond the homogeneous source-coding setting.

\end{itemize}

The remainder of the paper is organized as follows.
Section~\ref{sec:deductive-source} introduces the deductive source
model and closure fidelity.
Section~\ref{sec:exact-zero-rate} proves the exact zero-distortion
rate in the nonconfusable (disjoint-core) regime.
Section~\ref{sec:confusable} drops disjointness and characterises the
exact zero-distortion rate via hypergraph entropy, with a graph-entropy
reduction under pairwise realisability.
Section~\ref{sec:core-decomposition} derives the full
rate--distortion decomposition over the irredundant core.
Section~\ref{sec:limited-inference} studies bounded inference and
the resulting rate--depth--distortion tradeoff.
Section~\ref{sec:consequences} presents consequences including
source--channel separation, converse bounds, and a heterogeneous
extension.
Appendix~\ref{app:proofs} collects deferred proofs of the main
results in Sections~III--VII.
Appendix~\ref{app:datalog-classes} records concrete Datalog
classes satisfying the standing assumptions.
Appendix~\ref{app:extra-examples} contains additional examples and
counterexamples.

\section{Deductive Source Model and Closure Fidelity}
\label{sec:deductive-source}

This section isolates the minimum structure needed for the
source-coding problem studied in this paper.
At this stage no channel model is required.
The basic object is a finite source alphabet of statements,
equipped with a fixed proof system and its induced deductive
closure operator.
This structure induces, once the proof system and canonical order are
fixed, a decomposition of the source into irredundant premises and
redundant stored consequences, which in turn determines the
closure-based fidelity criterion.

\subsection{Deductive Source and Irredundant Core}
\label{subsec:deductive-source-model}

\begin{assumption}[Fixed effective proof system]
\label{assump:proof-system}
We fix a finite ambient universe \(\mathbb{S}_O\) of candidate
statements together with an effective proof system
\(\mathsf{PS}\).
Hence \(\Cn(\Gamma)\subseteq \mathbb{S}_O\) is finite for every finite \(\Gamma\).
For every finite \(\Gamma\subseteq\mathbb{S}_O\) and
every \(s\in\mathbb{S}_O\), whether
\(\Gamma\vdash_{\mathrm{kb}} s\) is decidable.
The induced deductive closure operator is
\[
  \Cn(\Gamma)
  \;:=\;
  \{s\in\mathbb{S}_O:\Gamma\vdash_{\mathrm{kb}} s\},
\]
viewed as the standard logical closure operator associated with the
proof system
\cite{hodges1993model}.
It satisfies:
\begin{enumerate}[label=\textup{(Cn\arabic*)}]
  \item \emph{Reflexivity:} \(\Gamma\subseteq\Cn(\Gamma)\).
  \item \emph{Monotonicity:}
        \(\Gamma\subseteq\Gamma' \Rightarrow
        \Cn(\Gamma)\subseteq\Cn(\Gamma')\).
  \item \emph{Idempotence:}
        \(\Cn(\Cn(\Gamma))=\Cn(\Gamma)\).
\end{enumerate}
\end{assumption}

\begin{assumption}[Finite source alphabet]
\label{assump:finite-so}
The transmitted knowledge base \(S_O\subseteq\mathbb{S}_O\)
is finite and effectively listable under a fixed canonical order.
The reconstruction alphabet \(\hat S_O\subseteq\mathbb{S}_O\)
is also finite and nonempty.
\end{assumption}

\begin{assumption}[Decidable redundancy test]
\label{assump:core-extractable}
For every finite \(\Gamma\subseteq S_O\) and every \(s\in S_O\),
membership \(s\in\Cn(\Gamma)\) is decidable.
\end{assumption}

\begin{remark}[On Assumption~\ref{assump:core-extractable}]
Assumption~\ref{assump:core-extractable} is implied by
Assumption~\ref{assump:proof-system}; it is stated separately only to
highlight the specific decidability query used in the deletion
procedure of Definition~\ref{def:atom-so}.
\end{remark}

\begin{definition}[Deductive source]
\label{def:deductive-source}
A \emph{deductive source} is a pair \((S_O,P_O)\), where
\(S_O\subseteq\mathbb{S}_O\) is a finite knowledge base and
\(P_O\in\Delta(S_O)\) is a probability distribution on \(S_O\),
together with the fixed proof system of
Assumption~\ref{assump:proof-system}.
\end{definition}

\begin{definition}[Irredundant core]
\label{def:atom-so}
Let \(S_O\) be a finite knowledge base.
Define \(\Atom(S_O)\) by the following deterministic deletion
procedure:
initialize \(A\gets S_O\); scan the elements of \(S_O\) in the
fixed canonical order; whenever the current element \(s\) satisfies
\(s\in\Cn(A\setminus\{s\})\), remove it from \(A\).
The final set is called the \emph{irredundant core} of \(S_O\).
\end{definition}

\begin{definition}[Core and redundant part]
\label{def:intrinsic-operational-bases}
For a deductive source \(S_O\), write
\[
  A:=\Atom(S_O),
  \qquad
  J:=S_O\setminus A.
\]
The set \(A\) is the \emph{irredundant core}, and \(J\) is the
\emph{redundant part}.
\end{definition}

\begin{proposition}[Core correctness]
\label{prop:atom-core-correct}
Under
Assumptions~\textup{\ref{assump:proof-system}}--\textup{\ref{assump:core-extractable}},
the set \(A:=\Atom(S_O)\) satisfies:
\begin{enumerate}[label=\textup{(\roman*)}]
  \item \(\Cn(A)=\Cn(S_O)\);
  \item for every \(a\in A\),
        \(a\notin\Cn(A\setminus\{a\})\)
        \textup{(irredundancy)};
  \item \(\Atom(S_O)\) is uniquely determined by
        \(S_O\), \(\Cn\), and the fixed canonical order.
\end{enumerate}
\end{proposition}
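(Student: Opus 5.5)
The plan is to follow the deletion procedure of Definition~\ref{def:atom-so} step by step. Write \(s_1,\dots,s_n\) for the elements of \(S_O\) in the canonical order. Set \(A_0:=S_O\), and let \(A_k\) be the working set after \(s_k\) has been processed, so that \(A=A_n\). The sets form a decreasing chain \(A_0\supseteq A_1\supseteq\cdots\supseteq A_n\), and \(A_k\) differs from \(A_{k-1}\) only when \(s_k\in A_{k-1}\) and \(s_k\in\Cn(A_{k-1}\setminus\{s_k\})\); in that case \(A_k=A_{k-1}\setminus\{s_k\}\). Each element is visited exactly once and is still present when it is visited, since only the currently scanned element can ever be removed. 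All three claims will be obtained by induction along this chain, using only (Cn1)--(Cn3) and the finiteness and decidability assumptions.

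For (i) I will prove the invariant \(\Cn(A_k)=\Cn(S_O)\) by induction on \(k\). The case \(k=0\) is trivial, and so is any step in which nothing is removed. Suppose instead that \(A_k=A_{k-1}\setminus\{s_k\}\) with \(s_k\in\Cn(A_k)\). Monotonicity gives \(\Cn(A_k)\subseteq\Cn(A_{k-1})\). Reflexivity gives \(A_k\subseteq\Cn(A_k)\), and together with \(s_k\in\Cn(A_k)\) this yields \(A_{k-1}\subseteq\Cn(A_k)\). Applying monotonicity and then idempotence, \(\Cn(A_{k-1})\subseteq\Cn(\Cn(A_k))=\Cn(A_k)\). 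Taking \(k=n\) gives (i).

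For (iii), I will argue that the procedure is a deterministic function of its inputs. The input list is fixed by Assumption~\ref{assump:finite-so}. Each test \(s_k\in\Cn(A_{k-1}\setminus\{s_k\})\) has a definite truth value that depends only on \(\Cn\) and on \(A_{k-1}\), and it is decidable by Assumption~\ref{assump:core-extractable}. So by induction each \(A_k\) is determined by \(S_O\), \(\Cn\) and the order, and the procedure terminates after \(n\) steps.

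Claim (ii) is the main obstacle, because an element kept at step \(k\) is tested only against \(A_{k-1}\), which is a larger set than the final \(A\). Take \(a=s_k\in A\). Since \(a\) was retained, \(a\notin\Cn(A_{k-1}\setminus\{a\})\). Moreover \(A\setminus\{a\}\subseteq A_{k-1}\setminus\{a\}\), so monotonicity gives \(\Cn(A\setminus\{a\})\subseteq\Cn(A_{k-1}\setminus\{a\})\), and hence \(a\notin\Cn(A\setminus\{a\})\). The point to stress is that this direction of monotonicity is exactly the favourable one: later deletions can only shrink the set that \(a\) is tested against, so they can never make a previously irredundant element derivable. This is why a single pass suffices and no re-scanning is needed.
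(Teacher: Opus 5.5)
Your proposal is correct and follows essentially the same argument as the paper. Both prove (i) by showing each deletion preserves closure via reflexivity, monotonicity and idempotence, prove (ii) by applying monotonicity against the larger working set present when \(a\) was scanned, and obtain (iii) from the determinism of the deletion procedure.
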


\begin{proof}
Each deletion preserves closure.
Indeed, if \(s\in\Cn(A\setminus\{s\})\), then
\(A\subseteq\Cn(A\setminus\{s\})\), hence by monotonicity and
idempotence,
\[
  \Cn(A)\subseteq\Cn(\Cn(A\setminus\{s\}))
  =\Cn(A\setminus\{s\}),
\]
while the reverse inclusion is immediate from
\(A\setminus\{s\}\subseteq A\).
By induction over the scan, the final output satisfies
\(\Cn(A)=\Cn(S_O)\), proving~\textup{(i)}.

For~\textup{(ii)}, let \(a\in A\) and let
\(A_{\mathrm{scan}}\) denote the current set when \(a\) is scanned.
Since \(a\) is retained,
\(a\notin\Cn(A_{\mathrm{scan}}\setminus\{a\})\).
Later deletions only shrink the set, so
\(A\setminus\{a\}\subseteq A_{\mathrm{scan}}\setminus\{a\}\).
Monotonicity then gives
\[
  \Cn(A\setminus\{a\})
  \subseteq
  \Cn(A_{\mathrm{scan}}\setminus\{a\}),
\]
hence \(a\notin\Cn(A\setminus\{a\})\).

Part~\textup{(iii)} follows from the determinism of the deletion
procedure once \(\Cn\) and the canonical order are fixed.
\end{proof}

\begin{remark}[Proof-system dependence and what is intrinsic]
\label{rem:ps-dependence}
The set \(\Atom(S_O)\) is relative to the chosen proof system and, in
general, to the canonical order.
This dependence is not a defect of the model: it reflects the fact
that communication value is measured relative to shared inference
capability.
What is primary in this paper is the induced
rate--distortion quantity, not any one particular realizing core.
Section~\ref{sec:exact-zero-rate} shows that the scalar
zero-distortion rate is intrinsic to the deductive source model, even
when different canonical orders may produce different irredundant
cores.
Appendix~\ref{app:order-invariance-positive} then gives a sufficient
condition under which the core itself becomes order-invariant.
\end{remark}

\subsection{Closure Fidelity and Closure Distortion}
\label{subsec:closure-fidelity}

The source model above suggests a fidelity criterion stronger than
semantic similarity but weaker than symbolwise equality:
a reconstruction should be considered faithful whenever it
preserves the deductive consequences of the transmitted knowledge
base.

\begin{definition}[Closure fidelity]
\label{def:closure-fidelity}
For finite \(S,\hat S\subseteq\mathbb{S}_O\), define
\[
  \mathsf{F}_{\Cn}(S,\hat S)
  \;:=\;
  \frac{|\Cn(S)\cap\Cn(\hat S)|}
       {|\Cn(S)\cup\Cn(\hat S)|},
\]
with \(0/0:=1\).
Thus \(\mathsf{F}_{\Cn}(S,\hat S)=1\) if and only if
\(\Cn(S)=\Cn(\hat S)\).
\end{definition}

\begin{remark}[Closure fidelity as a Jaccard-type similarity]
\label{rem:closure-jaccard}
The quantity \(\mathsf{F}_{\Cn}(S,\hat S)\) is the Jaccard similarity
between the two closure sets \(\Cn(S)\) and \(\Cn(\hat S)\). Jaccard-
type similarity and the associated Tanimoto/Jaccard distance are
classical set-comparison tools; see, e.g.,
\cite{broder1997resemblance,lipkus1999proof}. Here the comparison is
lifted from raw symbol sets to deductive closures.
\end{remark}

\begin{remark}[Conceptual relation to semantic content]
\label{rem:carnap-bar-hillel}
The closure-based viewpoint is conceptually related to the early
Carnap--Bar-Hillel account of semantic information, in which the
informational content of a sentence is tied to the set of possibilities
it excludes \cite{carnap1952outline}. Our quantity
\(\mathsf{F}_{\Cn}(S,\hat S)\) is not a possible-world measure; it is
a proof-theoretic comparison of deductive consequence sets. Still, as
a Jaccard-type similarity on closures, it can be viewed as a natural
way to quantify how much inferential content is preserved by a
reconstruction.
\end{remark}

\begin{definition}[Hamming distortion]
\label{def:hamming-distortion}
The Hamming distortion \cite[Ch.~10]{cover2006elements} is
\[
  d_H(s_o,\hat s):=\mathbf{1}[s_o\neq \hat s].
\]
\end{definition}

\begin{definition}[Closure distortion]
\label{def:closure-distortion}
Fix a reference knowledge base \(\Gamma\subseteq\mathbb{S}_O\)
and write \(\Gamma_{-s}:=\Gamma\setminus\{s\}\).
For \(s_o\in\Gamma\) and \(\hat s\in\hat S_O\), define
\[
  C_{s_o}:=\Cn(\Gamma_{-s_o}\cup\{s_o\}),
  \qquad
  C_{\hat s}:=\Cn(\Gamma_{-s_o}\cup\{\hat s\}),
\]
and set
\begin{equation}\label{eq:d-Cn}
  d_{\Cn}(s_o,\hat s\mid\Gamma)
  :=
  1-\frac{|C_{s_o}\cap C_{\hat s}|}
          {|C_{s_o}\cup C_{\hat s}|},
\end{equation}
with \(0/0:=0\), i.e., zero distortion when both closure sets are empty.
In the sequel we use the single-letter distortion
\(d_{\Cn}(s_o,\hat s):=d_{\Cn}(s_o,\hat s\mid S_O)\).
\end{definition}

\begin{remark}[Why closure distortion is the right object here]
\label{rem:closure-distortion-content}
Under \(d_H\), every symbol error is costly.
Under \(d_{\Cn}\), replacing a statement by a different one may be
free if the deductive closure is preserved.
This is precisely the mechanism by which deductive redundancy
becomes compressible.
The entire zero-distortion theory developed in the next section
rests on this distinction.
\end{remark}

\begin{proposition}[Redundant states are zero-distortion free]
\label{prop:redundant-free}
Let \(A=\Atom(S_O)\) and \(J=S_O\setminus A\).
If \(j\in J\) and \(\hat s\in\Cn(S_O)\), then
\[
  d_{\Cn}(j,\hat s\mid S_O)=0.
\]
\end{proposition}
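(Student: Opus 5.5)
The plan is to show that the two closure sets in Definition~\ref{def:closure-distortion} coincide, i.e.\ \(C_j=C_{\hat s}\), where \(\Gamma=S_O\). Then the Jaccard ratio in~\eqref{eq:d-Cn} equals \(1\), or the convention \(0/0:=0\) applies if both sets are empty. Either way the distortion is \(0\). Since \(j\in S_O\), we have \(C_j=\Cn(S_O\setminus\{j\}\cup\{j\})=\Cn(S_O)\). So the whole task reduces to proving
\[
\Cn\bigl((S_O\setminus\{j\})\cup\{\hat s\}\bigr)=\Cn(S_O).
\]

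\emph{Step 1: deleting \(j\) does not change the closure.} One could argue from the moment \(j\) was removed in the scan, but the set present at that time is not \(S_O\). Instead I will use Proposition~\ref{prop:atom-core-correct}(i) directly. Because \(j\in J=S_O\setminus A\), we have \(A\subseteq S_O\setminus\{j\}\subseteq S_O\). Monotonicity (Cn2) then gives
\[
\Cn(A)\subseteq\Cn(S_O\setminus\{j\})\subseteq\Cn(S_O).
\]
By Proposition~\ref{prop:atom-core-correct}(i) the two ends are equal, so \(\Cn(S_O\setminus\{j\})=\Cn(S_O)\).

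\emph{Step 2: adding \(\hat s\) does not change the closure.} Since \(\hat s\in\Cn(S_O)=\Cn(S_O\setminus\{j\})\), reflexivity (Cn1) yields
\[
(S_O\setminus\{j\})\cup\{\hat s\}\subseteq\Cn(S_O\setminus\{j\}).
\]
Monotonicity and idempotence (Cn3) then give
\[
\Cn\bigl((S_O\setminus\{j\})\cup\{\hat s\}\bigr)\subseteq\Cn(S_O\setminus\{j\})=\Cn(S_O).
\]
The reverse inclusion follows from monotonicity applied to \(S_O\setminus\{j\}\subseteq(S_O\setminus\{j\})\cup\{\hat s\}\), together with Step~1. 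This proves \(C_{\hat s}=C_j\), and hence the proposition.

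The only point needing care is Step~1: redundancy of \(j\) must be established relative to the full \(S_O\), not relative to the intermediate set at scan time. Routing the argument through the final core \(A\) and Proposition~\ref{prop:atom-core-correct}(i) handles this cleanly. Everything else is a direct application of (Cn1)--(Cn3).
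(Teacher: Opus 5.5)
Your proof is correct and takes the same route as the paper. You sandwich \(\Cn(S_O\setminus\{j\})\) between \(\Cn(A)\) and \(\Cn(S_O)\) via Proposition~\ref{prop:atom-core-correct}(i), then absorb \(\hat s\) using reflexivity, monotonicity and idempotence to get \(C_{\hat s}=C_j=\Cn(S_O)\); the \(0/0\) case you hedge against never arises, since \(j\in\Cn(S_O)\) makes both closure sets nonempty.
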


\begin{proof}
Because \(A=\Atom(S_O)\subseteq S_O\setminus\{j\}\), one has by
Proposition~\ref{prop:atom-core-correct}(i) and monotonicity that
\[
  \Cn(S_O)
  =
  \Cn(A)
  \subseteq
  \Cn(S_O\setminus\{j\}).
\]
Since also \(S_O\setminus\{j\}\subseteq S_O\), the reverse inclusion
\(\Cn(S_O\setminus\{j\})\subseteq \Cn(S_O)\) is immediate.
Hence
\[
  \Cn(S_O\setminus\{j\})=\Cn(S_O).
\]

Because \(\hat s\in\Cn(S_O)=\Cn(S_O\setminus\{j\})\), one has
\[
  (S_O\setminus\{j\})\cup\{\hat s\}
  \subseteq
  \Cn(S_O\setminus\{j\}).
\]
By monotonicity and idempotence,
\[
  \Cn\bigl((S_O\setminus\{j\})\cup\{\hat s\}\bigr)
  =
  \Cn(S_O\setminus\{j\})
  =
  \Cn(S_O).
\]
Hence the two closures in
Definition~\ref{def:closure-distortion} coincide, and the
Jaccard distance is zero.
\end{proof}

\begin{definition}[Closure-based rate--distortion function]
\label{def:semantic-rd}
Let \((S_O,P_O)\) be a deductive source. Let \(\mathsf{S}_o\sim P_O\) and let
\(\hat{\mathsf{S}}_o\in \hat S_O\) be generated via a test channel
\(P_{\hat S\mid S}\) as \(\hat{\mathsf{S}}_o\sim P_{\hat S\mid S}(\cdot\mid \mathsf{S}_o)\).
With the single-letter distortion \(d_{\Cn}(\cdot,\cdot\mid S_O)\), define
\begin{equation}\label{eq:R-sem-D}
  R_{\mathrm{sem}}(D;\,d_{\Cn},\,P_O)
  :=
  \min_{\substack{P_{\hat S\mid S}:\\
  \E[d_{\Cn}(\mathsf{S}_o,\hat{\mathsf{S}}_o)]\le D}}
  I(\mathsf{S}_o;\hat{\mathsf{S}}_o).
\end{equation}
\end{definition}

\begin{remark}[Operational meaning of the source model]
\label{rem:fixed-kb-parameter}
The object being compressed in this paper is a \emph{statement stream
drawn from a fixed deductive environment}, not a random knowledge base.
More precisely, a finite knowledge base \(S_O\) is fixed once and for
all as part of the source model, together with its proof system and
closure operator, and the random source symbol \({\mathsf{S}}_o\) is drawn from
\(P_O\) on \(S_O\).

Operationally, one may think of \({\mathsf{S}}_o\) as a statement report, update,
query answer, or symbolic semantic token produced within the same local
knowledge environment \(S_O\).
A block of length \(m\) therefore models \(m\) independent draws from
the same deductive source environment.
Accordingly, the distortion
\(d_{\Cn}(s_o,\hat s\mid S_O)\) is a standard single-letter
distortion function parameterized by the fixed ambient knowledge base
\(S_O\).

We do \emph{not} study in this paper the distinct problem in which the
knowledge base itself is random across blocks.
That would lead to a higher-level hierarchical source model and is left
for future work.
\end{remark}

\begin{remark}[Roadmap]
\label{rem:section2-roadmap}
Section~\ref{sec:exact-zero-rate} studies the case \(D=0\) and
shows that the corresponding rate depends only on the irredundant
core.
Section~\ref{sec:limited-inference} later replaces the
unconstrained closure criterion by a bounded-inference version,
yielding a rate--depth--distortion tradeoff.
\end{remark}

\section{Exact Zero-Distortion Rate}
\label{sec:exact-zero-rate}

This section gives the main zero-distortion characterization
behind the paper.
Throughout, let
\[
  A:=\Atom(S_O),
  \qquad
  J:=S_O\setminus A,
  \qquad
  P_A:=P_O(A),
  \qquad
  P_J:=1-P_A,
\]
and, whenever \(P_A>0\), let
\[
  \pi_A(a):=\frac{P_O(a)}{P_A},
  \qquad a\in A.
\]
We write \(h_b(p):=-p\log p-(1-p)\log(1-p)\) for the binary entropy function,
and adopt the convention \(P_A H(\pi_A):=0\) when \(P_A=0\).

The key technical issue is whether different core symbols can share
the same zero-distortion reconstruction.
When they cannot, the zero-distortion problem admits an exact
single-letter solution.

\begin{assumption}[Disjoint zero-distortion reconstruction sets]
\label{assump:core-disjoint}
For each \(s\in S_O\), define its zero-distortion reconstruction set
\[
  R_{\Cn}(s)
  :=
  \{\hat s\in\hat S_O:\ d_{\Cn}(s,\hat s\mid S_O)=0\}.
\]
We assume that for distinct \(a_1,a_2\in A\),
\[
  R_{\Cn}(a_1)\cap R_{\Cn}(a_2)=\varnothing.
\]
\end{assumption}

\begin{remark}[Fully redundant degenerate case]
\label{rem:fully-redundant-case}
If \(A=\varnothing\) and there exists
\(\hat s^\star\in \hat S_O\cap\Cn(S_O)\), then the constant
reconstruction \(\hat {\mathsf{S}}_o\equiv \hat s^\star\) achieves zero closure
distortion for every source symbol, and hence
\[
  R_{\mathrm{sem}}(0;\,d_{\Cn},\,P_O)=0.
\]
The exact theorem below therefore focuses on the nondegenerate case
\(A\neq\varnothing\), where the core carries positive effective
complexity.
\end{remark}

\begin{remark}[Zero-mass core case]
\label{rem:zero-mass-core-case}
If \(A\neq\varnothing\), \(P_A=0\), and \(A\subseteq\hat S_O\), then
the source is supported on \(J=S_O\setminus A\). Hence any fixed
\(a_0\in A\) used as a constant reconstruction achieves zero closure
distortion almost surely, because \(a_0\in A\subseteq \Cn(S_O)\) and
every symbol in \(J\) is zero-distortion free against every
reconstruction in \(\Cn(S_O)\) by
Proposition~\ref{prop:redundant-free}. Therefore
\[
  R_{\mathrm{sem}}(0;\,d_{\Cn},\,P_O)=0.
\]
Accordingly, the only nontrivial case in
Theorem~\ref{thm:tight-zero-rate} is \(P_A>0\).
\end{remark}

\begin{remark}[Interpretation and naturality of
Assumption~\ref{assump:core-disjoint}]
\label{rem:core-disjoint-general}
Assumption~\ref{assump:core-disjoint} is a
\emph{zero-distortion nonconfusability condition} on the irredundant
core.
It is the closure-fidelity analogue of requiring pairwise disjoint
zero-error decoding regions: a reconstruction that is closure-correct
for one core symbol cannot simultaneously be closure-correct for a
different core symbol.

Thus the assumption is not an arbitrary technical restriction but the
precise condition under which the exact zero-distortion converse
collapses to a single-letter entropy expression.
When it fails, and still assuming \(A\subseteq\hat S_O\), the
achievability bound
\(R_{\mathrm{sem}}(0)\le P_AH(\pi_A)\) still holds, but the converse
must account for overlap among the sets
\(\{R_{\Cn}(a):a\in A\}\), leading to a genuine confusability problem
on the core.
Appendix~\ref{app:datalog-classes} gives concrete Datalog classes in
which the assumption is satisfied by construction.
\end{remark}

\begin{proposition}[Zero-distortion achievability]
\label{prop:zero-rate-ach}
Assume \(P_A>0\) and \(A\subseteq \hat S_O\). Then
\begin{equation}\label{eq:zero-rate-upper}
  R_{\mathrm{sem}}(0;\,d_{\Cn},\,P_O)
  \;\le\;
  P_A\,H(\pi_A).
\end{equation}
\end{proposition}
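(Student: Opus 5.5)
We construct an explicit test channel and bound its mutual information. Since \(R_{\mathrm{sem}}(0;d_{\Cn},P_O)\) is a minimum of \(I(\mathsf{S}_o;\hat{\mathsf{S}}_o)\) over channels with \(\E[d_{\Cn}]\le 0\), it suffices to exhibit one deterministic map \(f:S_O\to\hat S_O\) with \(d_{\Cn}(s,f(s)\mid S_O)=0\) for every \(s\) in the support of \(P_O\), and with \(I(\mathsf{S}_o;f(\mathsf{S}_o))=H(f(\mathsf{S}_o))\le P_A H(\pi_A)\).

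\textbf{Construction.} Because \(P_A>0\), the core is nonempty, so we may fix an element \(a_0\in A\). Define \(f(a):=a\) for \(a\in A\), and \(f(j):=a_0\) for \(j\in J\). This is well defined as a map into \(\hat S_O\) because \(A\subseteq\hat S_O\).

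\textbf{Zero distortion.} There are two cases.
\begin{itemize}
\item For \(a\in A\), the two closures in Definition~\ref{def:closure-distortion} are
\(C_a=\Cn(S_O)\) and \(C_{\hat s}=\Cn\bigl((S_O\setminus\{a\})\cup\{a\}\bigr)=\Cn(S_O)\). They coincide, so \(d_{\Cn}(a,a\mid S_O)=0\).
\item For \(j\in J\), we have \(a_0\in A\subseteq S_O\subseteq\Cn(S_O)\) by reflexivity. Proposition~\ref{prop:redundant-free} then gives \(d_{\Cn}(j,a_0\mid S_O)=0\).
\end{itemize}
Hence \(\E[d_{\Cn}(\mathsf{S}_o,f(\mathsf{S}_o))]=0\).

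\textbf{Rate bound.} This is the only step needing a calculation. Introduce the indicator \(T:=\mathbf{1}[\mathsf{S}_o\in A]\). Rather than computing \(H(f(\mathsf{S}_o))\) directly, I would use \(H(f(\mathsf{S}_o))\le H(g(\mathsf{S}_o))\) for a finer deterministic map \(g\) of which \(f\) is a function. Take \(g(s):=s\) on \(A\) and \(g(s):=\star\) on \(J\), where \(\star\) is a new symbol. Then \(f=\phi\circ g\) with \(\phi(\star)=a_0\) and \(\phi\) the identity on \(A\). Moreover, \(T\) is a function of \(g(\mathsf{S}_o)\). By the chain rule,
\[
H(g(\mathsf{S}_o)) = H(T) + H\bigl(g(\mathsf{S}_o)\mid T\bigr) = h_b(P_A) + P_A H(\pi_A),
\]
because \(g\) is constant on \(\{T=0\}\) and is distributed as \(\pi_A\) on \(\{T=1\}\).

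\textbf{Main obstacle.} This route yields only the bound \(h_b(P_A)+P_AH(\pi_A)\), not \(P_AH(\pi_A)\). A single-letter deterministic map cannot remove the \(h_b(P_A)\) term. For example, \(f(\mathsf{S}_o)\) takes the value \(a_0\) with probability \(P_J+P_O(a_0)\), and lumping \(J\) onto \(a_0\) saves only part of the \(h_b\) cost.

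So the key step is to beat the single-letter entropy by using the freedom on \(J\) more cleverly, through a randomized test channel. On \(J\), the reconstruction may be drawn from \(\pi_A\) independently of the source, since every \(a\in A\subseteq\Cn(S_O)\) is distortion-free for \(j\in J\). The resulting channel makes \(\hat{\mathsf{S}}_o\sim\pi_A\) marginally. Conditioned on \(\mathsf{S}_o\in A\), the output is deterministic; conditioned on \(\mathsf{S}_o\in J\), it has entropy \(H(\pi_A)\). Therefore
\[
I = H(\hat{\mathsf{S}}_o) - H(\hat{\mathsf{S}}_o\mid\mathsf{S}_o) = H(\pi_A) - P_J H(\pi_A) = P_A H(\pi_A).
\]
Verifying these conditional entropies is the heart of the argument. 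Together with the zero-distortion check above, it gives \eqref{eq:zero-rate-upper}.
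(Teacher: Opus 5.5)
Your final randomized construction is correct and matches the paper's proof: identity on \(A\), an independent \(\pi_A\)-distributed output on \(J\), zero distortion via Proposition~\ref{prop:redundant-free}, and \(I=H(\pi_A)-P_JH(\pi_A)=P_AH(\pi_A)\), which the paper writes via the auxiliary variable \(T\) but is the same computation. Drop your opening claim that a deterministic map suffices, since your map \(f\) gives strictly more than \(P_AH(\pi_A)\) whenever \(P_J>0\) and \(H(\pi_A)>0\).
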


\begin{proof}
Define the conditional distribution
\[
  P_{\hat S\mid S}(\hat s\mid s_o)
  :=
  \begin{cases}
    \mathbf{1}[\hat s=s_o], & s_o\in A,\\[3pt]
    \pi_A(\hat s), & s_o\in J,
  \end{cases}
\]
where \(\pi_A\) is supported on \(A\subseteq\hat S_O\).

For \(a\in A\), the reconstruction is exact, so
\(d_{\Cn}(a,a\mid S_O)=0\).
For \(j\in J\), the reconstruction lies in
\(A\subseteq\Cn(S_O)\), hence
\(d_{\Cn}(j,\hat s\mid S_O)=0\) by
Proposition~\ref{prop:redundant-free}.
Therefore the expected distortion is zero.

Let \(T\) be the auxiliary variable
\[
  T(s_o)
  :=
  \begin{cases}
    s_o, & s_o\in A,\\
    *,   & s_o\in J.
  \end{cases}
\]
Under the above conditional law, \(\hat {\mathsf{S}}_o\) depends on \({\mathsf{S}}_o\)
only through \(T\), and the output marginal is exactly \(\pi_A\).
Hence
\[
  I({\mathsf{S}}_o;\hat {\mathsf{S}}_o)
  =I(T;\hat {\mathsf{S}}_o)
  =H(\hat {\mathsf{S}}_o)-H(\hat {\mathsf{S}}_o\mid T)
  =H(\pi_A)-P_J H(\pi_A)
  =P_A\,H(\pi_A).
\]
This proves~\eqref{eq:zero-rate-upper}.
\end{proof}

\begin{theorem}[Exact zero-distortion rate]
\label{thm:tight-zero-rate}
Assume \(A\neq\varnothing\), \(A\subseteq\hat S_O\), and
Assumption~\ref{assump:core-disjoint}.
Then the closure-based zero-distortion rate is exactly
\begin{equation}\label{eq:tight-R0}
  R_{\mathrm{sem}}(0;\,d_{\Cn},\,P_O)
  \;=\;
  P_A\,H(\pi_A),
\end{equation}
with the convention that the right-hand side is \(0\) when \(P_A=0\).
\end{theorem}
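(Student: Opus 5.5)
The proof splits into the degenerate case and the main case, with the converse carrying most of the work.

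\emph{Degenerate case.} If \(P_A=0\), Remark~\ref{rem:zero-mass-core-case} already gives \(R_{\mathrm{sem}}(0)=0\), because \(A\neq\varnothing\) and \(A\subseteq\hat S_O\).

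\emph{Achievability.} Assume now \(P_A>0\). Proposition~\ref{prop:zero-rate-ach} gives the upper bound \(R_{\mathrm{sem}}(0)\le P_AH(\pi_A)\).

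\emph{Converse.} It remains to show \(I(\mathsf{S}_o;\hat{\mathsf{S}}_o)\ge P_AH(\pi_A)\) for every test channel with \(\E[d_{\Cn}(\mathsf{S}_o,\hat{\mathsf{S}}_o)]=0\).

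1. Since \(d_{\Cn}\ge 0\), zero expected distortion forces \(P_{\hat S\mid S}(R_{\Cn}(s)\mid s)=1\) for every \(s\) with \(P_O(s)>0\).

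2. Define the decoder map \(g:\hat S_O\to A\cup\{*\}\) by \(g(\hat s)=a\) if \(\hat s\in R_{\Cn}(a)\) for some \(a\in A\), and \(g(\hat s)=*\) otherwise. Assumption~\ref{assump:core-disjoint} is exactly what makes \(g\) well defined: the sets \(R_{\Cn}(a)\), \(a\in A\), are pairwise disjoint.

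3. By step 1, on the event \(\{\mathsf{S}_o\in A\}\) we have \(g(\hat{\mathsf{S}}_o)=\mathsf{S}_o\) almost surely.

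4. Let \(E:=\mathbf 1[\mathsf{S}_o\in A]\) and \(T\) be as in the proof of Proposition~\ref{prop:zero-rate-ach}. Then
\[
I(\mathsf{S}_o;\hat{\mathsf{S}}_o)\ge I(\mathsf{S}_o;g(\hat{\mathsf{S}}_o))
= H(\mathsf{S}_o)-H(\mathsf{S}_o\mid g(\hat{\mathsf{S}}_o)).
\]

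5. We need \(H(\mathsf{S}_o\mid g(\hat{\mathsf{S}}_o))\le H(\mathsf{S}_o)-P_AH(\pi_A)\). The difficulty is that a redundant symbol \(j\in J\) may also be reconstructed inside some \(R_{\Cn}(a)\). Hence \(g(\hat{\mathsf{S}}_o)=a\) does not by itself reveal \(\mathsf{S}_o=a\), and the naive Fano-type estimate fails.

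6. So I would not bound \(H(\mathsf{S}_o\mid g)\) directly. Instead I would lower-bound \(I(\mathsf{S}_o;\hat{\mathsf{S}}_o)\) by restricting attention to the core symbols while conditioning on \(E\). Using \(I(\mathsf{S}_o;\hat{\mathsf{S}}_o)\ge I(\mathsf{S}_o;\hat{\mathsf{S}}_o\mid E)-H(E)\) would lose \(h_b(P_A)\), which is too weak. The cleaner route is the convexity-based bound
\[
I(\mathsf{S}_o;\hat{\mathsf{S}}_o)=\sum_s P_O(s)\,D\bigl(P_{\hat S\mid S}(\cdot\mid s)\,\|\,Q\bigr),
\]
where \(Q\) is the output marginal \(Q=P_AQ_A+P_JQ_J\) (\(Q_A\) and \(Q_J\) being the output laws given \(E=1\) and \(E=0\)).

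7. Apply data processing through \(g\) to each divergence term. For \(a\in A\), the output law given \(\mathsf{S}_o=a\) is concentrated on \(R_{\Cn}(a)\), so
\[
D\bigl(P_{\hat S\mid S}(\cdot\mid a)\,\|\,Q\bigr)\ge -\log Q(R_{\Cn}(a)),
\qquad\text{with } Q(R_{\Cn}(a))=P_O(a)+P_J\,Q_J(R_{\Cn}(a)).
\]
For \(j\in J\), the divergence is \(\ge 0\).

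8. Write \(\beta_a:=Q_J(R_{\Cn}(a))\). By disjointness, \(\sum_a\beta_a\le 1\). Steps 6 and 7 reduce the converse to the finite-dimensional minimisation
\[
-P_A\sum_a\pi_A(a)\log\bigl(P_A\pi_A(a)+P_J\beta_a\bigr),
\]
which we must show is at least \(P_AH(\pi_A)\) after accounting for the \(J\)-terms.

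\emph{Main obstacle.} I expect the optimisation in step 8 to be the hard step. Dropping the \(J\)-terms alone is too lossy: with \(\beta=\pi_A\) the bound gives only \(P_AH(\pi_A)\) and leaves no room for error. The \(J\)-divergences must therefore be kept. The plan is to bound them jointly through the Gibbs inequality, so that \(\beta=\pi_A\), i.e.\ \(Q=\pi_A\), emerges as the minimiser. Equivalently, I would minimise \(I\) over all channels satisfying the support constraints of step 1, which is a convex problem, and verify via the KKT conditions that the channel of Proposition~\ref{prop:zero-rate-ach} is optimal. Concretely, that channel sends \(J\) to \(\pi_A\) and \(a\) to \(a\), and its optimality should follow from the fact that \(J\)-symbols are unconstrained (any output in \(\Cn(S_O)\) works), so their optimal conditional law equals the output marginal. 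Checking these KKT conditions against the disjoint support constraints is where the real work lies.
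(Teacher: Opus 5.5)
Your degenerate case, your achievability step, and steps 1--4 and 6--7 are correct. The converse, however, is never completed. You stop at step 8, call the remaining optimisation the main obstacle, and defer it to a KKT verification that you do not carry out. That obstacle is illusory, and both reasons you give for setting aside the simple routes are mistaken.

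First, $E=\mathbf{1}[\mathsf{S}_o\in A]$ is a deterministic function of $\mathsf{S}_o$, so conditioning on it costs no $h_b(P_A)$. The chain rule gives
\[
I(\mathsf{S}_o;\hat{\mathsf{S}}_o)=I(\mathsf{S}_o,E;\hat{\mathsf{S}}_o)=I(E;\hat{\mathsf{S}}_o)+I(\mathsf{S}_o;\hat{\mathsf{S}}_o\mid E)\ge P_A\,I(\mathsf{S}_o;\hat{\mathsf{S}}_o\mid E=1).
\]
By your step 3, given $E=1$ one has $\mathsf{S}_o=g(\hat{\mathsf{S}}_o)$ a.s., so the last mutual information equals $H(\pi_A)$. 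That is the whole converse. The paper's argument is the same in substance. It collapses $\mathsf{S}_o$ to $T$ and uses concavity, $H(P_A\bar\pi+P_JQ)\ge P_AH(\bar\pi)+P_JH(Q)$, to obtain $I(T;\hat{\mathsf{S}}_o)\ge P_A\bigl(H(\bar\pi)-\sum_a\pi_A(a)H(P_a)\bigr)$. It then applies the disjoint-support identity $H(\bar\pi)=H(\pi_A)+\sum_a\pi_A(a)H(P_a)$.

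Second, your claim that discarding the $J$-divergences in step 8 is too lossy is false. Set $x_a:=P_O(a)+P_J\beta_a$, so that $\sum_a x_a\le P_A+P_J=1$. Gibbs' inequality, with sums over $a$ such that $\pi_A(a)>0$, gives $\sum_a\pi_A(a)\log\bigl(x_a/\pi_A(a)\bigr)\le\log\sum_a x_a\le 0$. Hence $-P_A\sum_a\pi_A(a)\log x_a\ge P_AH(\pi_A)$ for every admissible $\beta$. So $\beta=\pi_A$ is the worst case for the $A$-terms, and the value $P_AH(\pi_A)$ it yields is exactly the required bound; no slack is needed.

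Your divergence route therefore becomes a valid alternative proof once you add this one line and simply drop the $J$-terms using $D(\cdot\|Q)\ge 0$. The KKT programme is unnecessary. As sketched, it would also need care at boundary points where $P_{\hat S\mid S}(\hat s\mid s)=Q(\hat s)=0$, because the log-derivative is singular there.
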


\begin{proof}[Proof sketch]
If \(P_A=0\), the claim follows from
Remark~\ref{rem:zero-mass-core-case}. Assume henceforth that
\(P_A>0\).

The achievability part is Proposition~\ref{prop:zero-rate-ach}.
For the converse, collapse the source to the auxiliary variable
\[
  T(s_o)
  :=
  \begin{cases}
    s_o, & s_o\in A,\\
    *,   & s_o\in J.
  \end{cases}
\]
Then \(I({\mathsf{S}}_o;\hat {\mathsf{S}}_o)\ge I(T;\hat {\mathsf{S}}_o)\) by data processing.
Under zero closure distortion, each core symbol \(a\in A\) must be
reconstructed inside \(R_{\Cn}(a)\), and
Assumption~\ref{assump:core-disjoint} makes these supports pairwise
disjoint; symbols in \(J\) impose no identifying constraint by
Proposition~\ref{prop:redundant-free}. A disjoint-support entropy
decomposition, combined with concavity of entropy, yields
\[
  I(T;\hat {\mathsf{S}}_o)\ge P_A H(\pi_A).
\]
Together with Proposition~\ref{prop:zero-rate-ach}, this proves
\eqref{eq:tight-R0}. Full details are given in
Appendix~\ref{app:proof-tight-zero-rate}.
\end{proof}

\begin{remark}[Canonical-order dependence versus intrinsic rate]
\label{rem:order-vs-rate}
The set \(A=\Atom(S_O)\) may depend on the fixed canonical order used
in Definition~\ref{def:atom-so}. This does \emph{not} mean that the
zero-distortion rate itself is order-dependent.
Indeed, the left-hand side of~\eqref{eq:tight-R0},
\(R_{\mathrm{sem}}(0;\,d_{\Cn},\,P_O)\), is defined entirely in terms
of the source law \(P_O\) and the closure distortion \(d_{\Cn}\), and
therefore is intrinsic to the deductive source model.
Consequently, for any canonical order under which
Theorem~\ref{thm:tight-zero-rate} applies, the scalar
\(P_AH(\pi_A)\) must equal the same intrinsic zero-distortion rate.

What may depend on the canonical order is the particular
core/redundant decomposition that realizes this value.
Appendix~\ref{app:order-invariance-positive} gives a sufficient
condition under which the core itself becomes order-invariant.
\end{remark}

\begin{corollary}[Uniform source and strict gain over Hamming fidelity]
\label{cor:zero-rate-gain}
Assume \(A\neq\varnothing\), and let \(k:=|A|\).
Under the hypotheses of
Theorem~\ref{thm:tight-zero-rate}:
\begin{enumerate}[label=\textup{(\roman*)}]
  \item if \(P_O\) is uniform on \(S_O\), then
        \begin{equation}\label{eq:uniform-zero-rate}
          R_{\mathrm{sem}}(0;\,d_{\Cn},\,P_O)
          =
          \frac{k}{|S_O|}\log k;
        \end{equation}
  \item if, in addition, \(S_O\subseteq\hat S_O\),
        \(P_O\) has full support, and \(J\neq\varnothing\), then
        \begin{equation}\label{eq:zero-rate-ratio}
          R_{\mathrm{sem}}(0;\,d_{\Cn},\,P_O)
          <
          R(0;\,d_H,\,P_O)
          =
          H(P_O).
        \end{equation}
\end{enumerate}
\end{corollary}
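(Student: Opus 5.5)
Both parts follow from Theorem~\ref{thm:tight-zero-rate} together with the classical zero-error Hamming rate, so the plan is to evaluate the closed form \eqref{eq:tight-R0} in each case and then compare it with $H(P_O)$.

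\textbf{Part (i).} If $P_O$ is uniform on $S_O$, then $P_A = k/|S_O| > 0$ because $A\neq\varnothing$. Conditioning on $A$ gives $\pi_A(a) = 1/k$ for every $a \in A$, so $H(\pi_A) = \log k$. Substituting into \eqref{eq:tight-R0} yields \eqref{eq:uniform-zero-rate}. No further work is needed here.

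\textbf{Part (ii), the Hamming benchmark.} First I will justify $R(0;d_H,P_O)=H(P_O)$. Under Hamming distortion with $S_O\subseteq\hat S_O$, zero expected distortion forces $\hat{\mathsf S}_o=\mathsf S_o$ almost surely on the support of $P_O$. Hence $I(\mathsf S_o;\hat{\mathsf S}_o)=H(\mathsf S_o)=H(P_O)$, and this value is attained by the identity channel. This is the standard fact from \cite[Ch.~10]{cover2006elements}.

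\textbf{Part (ii), the strict inequality.} I will use the grouping (chain-rule) decomposition with respect to the indicator $\mathbf 1[\mathsf S_o\in A]$:
\[
H(P_O)=h_b(P_A)+P_A H(\pi_A)+P_J H(\pi_J),
\]
where $\pi_J$ is $P_O$ conditioned on $J$. Full support together with $J\neq\varnothing$ gives $P_J>0$, and $A\neq\varnothing$ with full support gives $P_A>0$. So $0<P_A<1$ and therefore $h_b(P_A)>0$. Since $P_JH(\pi_J)\ge 0$, we get $H(P_O)>P_AH(\pi_A)=R_{\mathrm{sem}}(0)$.

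The only point needing care is the strictness, which must come from $h_b(P_A)>0$ rather than from $H(\pi_J)$; the latter may vanish when $|J|=1$. I do not expect a genuine obstacle beyond checking that both $P_A$ and $P_J$ are strictly positive under the stated hypotheses.
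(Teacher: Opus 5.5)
Your proposal is correct and follows the paper's proof essentially step for step. Part (i) substitutes $P_A=k/|S_O|$ and $H(\pi_A)=\log k$ into \eqref{eq:tight-R0}, and part (ii) uses the grouping identity $H(P_O)=h_b(P_A)+P_AH(\pi_A)+P_JH(\pi_J)$, with strictness coming from $h_b(P_A)>0$ because $0<P_A<1$. Your brief justification of $R(0;d_H,P_O)=H(P_O)$ is also sound: zero Hamming distortion forces $\hat{\mathsf S}_o=\mathsf S_o$ almost surely, and the identity channel is feasible since $S_O\subseteq\hat S_O$. It simply spells out the classical fact that the paper cites.
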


\begin{proof}
For~\textup{(i)}, under the uniform source,
\(P_A=k/|S_O|\) and \(\pi_A\) is uniform on \(A\), so
\(H(\pi_A)=\log k\).
Substituting into~\eqref{eq:tight-R0} gives
\eqref{eq:uniform-zero-rate}.

For~\textup{(ii)}, write \(\pi_J\) for the source distribution
conditioned on \(J\).
Then
\[
  H(P_O)
  =
  h_b(P_A)+P_A H(\pi_A)+P_J H(\pi_J).
\]
If \(J\neq\varnothing\) and \(P_O\) has full support, then
\(0<P_A<1\), so \(h_b(P_A)>0\).
Hence
\[
  H(P_O)>P_A H(\pi_A).
\]
Using Theorem~\ref{thm:tight-zero-rate} and the classical identity
\(R(0;\,d_H,\,P_O)=H(P_O)\) for zero Hamming distortion on a
reconstruction alphabet containing \(S_O\)
\cite[Ch.~5]{cover2006elements}\cite[Ch.~7]{csiszar2011information},
which is valid here because \(S_O\subseteq\hat S_O\), completes the
proof.
\end{proof}

\begin{remark}[Interpretation]
\label{rem:why-proof-system}
Theorem~\ref{thm:tight-zero-rate} is the point at which deductive
structure enters the information theory in a decisive way.
Under Hamming fidelity, every source symbol must remain
distinguishable and the zero-distortion rate is \(H(P_O)\).
Under closure fidelity, only the irredundant core must remain
distinguishable; the redundant part is absorbed into a single
zero-cost component.
This is why the exact answer is \(P_A H(\pi_A)\) rather than
\(H(P_O)\).
\end{remark}

\begin{remark}[Beyond disjointness]
\label{rem:general-confusable-case}
Assumption~\ref{assump:core-disjoint} identifies a transparent regime in which
the zero-distortion converse collapses to a single-letter Shannon-entropy
expression. When disjointness fails, overlaps among the sets
\(\{R_{\Cn}(a):a\in A\}\) create a genuine core-confusability phenomenon.
Section~\ref{sec:confusable} resolves the resulting \(D=0\) limit exactly via a
hypergraph-entropy characterization, with a reduction to graph entropy under a
pairwise realisability condition.

What remains challenging is primarily algorithmic: computing or approximating
the induced hypergraph/graph entropy efficiently, and extending the same
combinatorial characterisation beyond \(D=0\) under closure-based (non-binary)
distortions.
\end{remark}

\section{Zero-Distortion Rate with Confusable Cores}
\label{sec:confusable}

Section~\ref{sec:exact-zero-rate} gave an exact single-letter formula for
\(R_{\mathrm{sem}}(0)\) under the disjointness (nonconfusability) hypothesis.
We now drop Assumption~\ref{assump:core-disjoint}.
When zero-distortion reconstruction sets overlap on the core, a single
reconstruction symbol may serve multiple core symbols simultaneously, and
the optimal zero-distortion rate is no longer governed by Shannon entropy
alone. The correct object is the family of core subsets that admit a
\emph{common} zero-distortion reconstruction, which naturally forms a
hypergraph; its associated hypergraph entropy yields the exact
zero-distortion rate \cite{korner1971coding,alon1996source,orlitsky2001coding}.

Throughout this section, let \(A=\Atom(S_O)\), \(P_A=P_O(A)\), and when \(P_A>0\)
let \(\pi_A(\cdot)=P_O(\cdot\mid A)\) as in Section~\ref{sec:exact-zero-rate}.

\subsection{Zero-distortion reconstruction sets and the core hypergraph}
\label{subsec:confusable-hypergraph}

Recall the zero-distortion reconstruction sets under closure distortion:
\[
  R_{\Cn}(s)
  :=
  \{\hat s\in\hat S_O:\ d_{\Cn}(s,\hat s\mid S_O)=0\},
  \qquad s\in S_O.
\]
If some positive-probability source symbol has an empty zero-distortion
reconstruction set, the zero-distortion constraint is infeasible; see
Lemma~\ref{lem:zero-distortion-infeasible-infty}.
We therefore isolate the feasible regime.

\begin{lemma}[Zero distortion implies closure membership]
\label{lem:zero-distortion-implies-in-closure}
If \(d_{\Cn}(s,\hat s\mid S_O)=0\), then \(\hat s\in \Cn(S_O)\).
\end{lemma}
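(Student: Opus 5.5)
The plan is to unpack Definition~\ref{def:closure-distortion} with \(\Gamma=S_O\) and read off closure equality from zero Jaccard distance. Throughout, \(s\) is a source symbol, so \(s\in S_O\). Then \((S_O\setminus\{s\})\cup\{s\}=S_O\), and the first closure set is simply
\[
  C_s=\Cn(S_O).
\]
The second closure set is \(C_{\hat s}=\Cn\bigl((S_O\setminus\{s\})\cup\{\hat s\}\bigr)\).

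The main step is to show that \(d_{\Cn}(s,\hat s\mid S_O)=0\) forces \(C_s=C_{\hat s}\). I will split into two cases.
\begin{enumerate}[label=\textup{(\alph*)}]
  \item If \(C_s\cup C_{\hat s}\neq\varnothing\), zero distortion means \(|C_s\cap C_{\hat s}|=|C_s\cup C_{\hat s}|\). Both sets are finite subsets of the finite universe \(\mathbb{S}_O\) (Assumption~\ref{assump:proof-system}). Since \(C_s\cap C_{\hat s}\subseteq C_s\cup C_{\hat s}\), equal cardinality gives \(C_s\cap C_{\hat s}=C_s\cup C_{\hat s}\), and hence \(C_s=C_{\hat s}\).
  \item If \(C_s\cup C_{\hat s}=\varnothing\), the \(0/0:=0\) convention applies, and trivially \(C_s=C_{\hat s}=\varnothing\).
\end{enumerate}

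To finish, apply reflexivity (Cn1) to the set \((S_O\setminus\{s\})\cup\{\hat s\}\), which contains \(\hat s\). This gives \(\hat s\in C_{\hat s}\). Combining with the previous step, \(\hat s\in C_{\hat s}=C_s=\Cn(S_O)\), as claimed.

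The same reflexivity observation shows that case~(b) cannot actually occur, since \(C_{\hat s}\ni\hat s\) is nonempty. I will note this explicitly so that the \(0/0\) convention causes no ambiguity.

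I do not expect a genuine obstacle here. The only point needing care is the bookkeeping around the \(0/0\) convention and the use of finiteness to pass from equal cardinalities to equal sets; both are routine.
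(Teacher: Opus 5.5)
Your proposal is correct and follows the paper's proof. Zero Jaccard distance forces \(\Cn\bigl((S_O\setminus\{s\})\cup\{\hat s\}\bigr)=\Cn(S_O)\), and reflexivity then places \(\hat s\) in \(\Cn(S_O)\); your extra bookkeeping on finiteness and the \(0/0\) convention only makes explicit a step the paper takes for granted.
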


\begin{proof}
If \(d_{\Cn}(s,\hat s\mid S_O)=0\), then
\(\Cn((S_O\setminus\{s\})\cup\{\hat s\})=\Cn(S_O)\).
By reflexivity, \(\hat s\in \Cn((S_O\setminus\{s\})\cup\{\hat s\})\), hence \(\hat s\in \Cn(S_O)\).
\end{proof}

\begin{lemma}[Zero-distortion infeasibility implies infinite rate]
\label{lem:zero-distortion-infeasible-infty}
If there exists \(s\in S_O\) with \(P_O(s)>0\) and \(R_{\Cn}(s)=\varnothing\),
then \(R_{\mathrm{sem}}(0;\,d_{\Cn},\,P_O)=+\infty\).
\end{lemma}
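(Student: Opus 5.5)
The claim is that the minimisation in \eqref{eq:R-sem-D} at \(D=0\) ranges over an empty set, so its value is \(+\infty\) by the usual convention. I would prove this directly from the definition of the feasible set rather than through any coding argument.

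Fix \(s\in S_O\) with \(P_O(s)>0\) and \(R_{\Cn}(s)=\varnothing\). Take an arbitrary test channel \(P_{\hat S\mid S}\) with output in the finite alphabet \(\hat S_O\). Since \(R_{\Cn}(s)\) is empty, \(d_{\Cn}(s,\hat s\mid S_O)>0\) for every \(\hat s\in\hat S_O\). Let
\[
  \epsilon_s:=\min_{\hat s\in\hat S_O} d_{\Cn}(s,\hat s\mid S_O).
\]
This minimum exists and is strictly positive, because \(\hat S_O\) is finite and nonempty by Assumption~\ref{assump:finite-so}. Since distortions are nonnegative, we can drop all other source symbols and obtain
\[
  \E[d_{\Cn}(\mathsf{S}_o,\hat{\mathsf{S}}_o)]
  \;\ge\; P_O(s)\sum_{\hat s}P_{\hat S\mid S}(\hat s\mid s)\,d_{\Cn}(s,\hat s\mid S_O)
  \;\ge\; P_O(s)\,\epsilon_s \;>\;0 .
\]

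Thus no test channel meets the constraint \(\E[d_{\Cn}]\le 0\). The feasible set in \eqref{eq:R-sem-D} at \(D=0\) is empty, and the minimum over an empty set is \(+\infty\).

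If one wants the operational meaning as well, the same bound works blockwise. Any length-\(m\) code has per-letter expected distortion at least \(P_O(s)\epsilon_s\), so no finite-rate code achieves average distortion \(0\), nor distortion tending to \(0\).

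The only delicate point is the empty-minimum convention, together with the strict positivity of \(\epsilon_s\). Positivity needs finiteness of \(\hat S_O\): without it, an infimum of positive values could be zero. Finiteness is guaranteed by Assumption~\ref{assump:finite-so}. The rest is immediate.
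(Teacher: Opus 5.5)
Your proof is correct and follows the same route as the paper: since \(R_{\Cn}(s)=\varnothing\), one has \(d_{\Cn}(s,\hat s\mid S_O)>0\) for every \(\hat s\), so every test channel has strictly positive expected distortion, the \(D=0\) feasible set is empty, and the minimum is \(+\infty\) by convention. Your uniform bound \(\epsilon_s>0\), which uses finiteness of \(\hat S_O\), is more than infeasibility needs (each fixed channel already gives positive expected distortion), but it is exactly what your operational remark about distortion not tending to zero requires.
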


\begin{proof}
If \(R_{\Cn}(s)=\varnothing\), then \(d_{\Cn}(s,\hat s\mid S_O)>0\) for all \(\hat s\in\hat S_O\).
Thus every test channel incurs strictly positive expected distortion whenever \(P_O(s)>0\),
so the zero-distortion constraint is infeasible and the minimum mutual information is \(+\infty\).
\end{proof}

\begin{assumption}[Core coverage at zero distortion]
\label{assump:core-coverage}
Either \(P_A=0\) and \(\hat S_O\cap\Cn(S_O)\neq\varnothing\), or \(P_A>0\) and
\(R_{\Cn}(a)\neq\varnothing\) for every \(a\in A\).
\end{assumption}

Assume \(A\neq\varnothing\). Define the \emph{core zero-distortion hypergraph}
\[
  \Gamma_0
  :=
  \Bigl\{
    W\subseteq A:\ W\neq\emptyset,\ 
    \bigcap_{a\in W}R_{\Cn}(a)\neq\emptyset
  \Bigr\}.
\]
Equivalently, \(W\in\Gamma_0\) iff there exists a single reconstruction symbol
\(\hat s\in\hat S_O\) that is zero-distortion correct for every \(a\in W\).
Note that \(\Gamma_0\) is downward closed: if \(W\in\Gamma_0\) and
\(\emptyset\neq W'\subseteq W\), then \(\cap_{a\in W'}R_{\Cn}(a)\supseteq\cap_{a\in W}R_{\Cn}(a)\neq\emptyset\).

\subsection{Hypergraph-entropy characterization of \(R_{\mathrm{sem}}(0)\)}
\label{subsec:confusable-exact}

For \(P_A>0\), let \(\mathsf{A}_o\sim \pi_A\). Consider a random nonempty subset
\(W\subseteq A\) taking values in \(\Gamma_0\) and satisfying the constraint
\(\mathsf{A}_o\in W\) almost surely. Define the induced hypergraph entropy
\begin{equation}\label{eq:hypergraph-entropy-Gamma0}
  H_{\Gamma_0}(\pi_A)
  :=
  \min_{\substack{P_{W\mid A}:\\
  \mathsf{A}_o\in W,\ W\in\Gamma_0\ \text{a.s.}}}
  I(\mathsf{A}_o;W).
\end{equation}

\begin{theorem}[General confusable case: exact zero-distortion rate]
\label{thm:confusable-hypergraph-exact}
Assume \(A\neq\varnothing\) and Assumption~\ref{assump:core-coverage}.
Then
\begin{equation}\label{eq:R0-hypergraph}
  R_{\mathrm{sem}}(0;\,d_{\Cn},\,P_O)
  =
  P_A\,H_{\Gamma_0}(\pi_A),
\end{equation}
with the convention that the right-hand side is \(0\) when \(P_A=0\).
\end{theorem}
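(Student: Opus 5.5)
The plan is to prove the two inequalities separately, after disposing of the degenerate case. If \(P_A=0\), Assumption~\ref{assump:core-coverage} gives some \(\hat s^\star\in\hat S_O\cap\Cn(S_O)\). The constant reconstruction \(\hat s^\star\) has zero distortion on \(J\) by Proposition~\ref{prop:redundant-free}, so the rate is \(0\). Henceforth \(P_A>0\), and every \(R_{\Cn}(a)\), \(a\in A\), is nonempty. In particular every singleton lies in \(\Gamma_0\), so the minimisation in \eqref{eq:hypergraph-entropy-Gamma0} is feasible and attained, being a continuous function minimised over a compact set.

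\emph{Achievability.} Let \(P^\star_{W\mid A}\) be a minimiser in \eqref{eq:hypergraph-entropy-Gamma0}. For each \(W\in\Gamma_0\), fix a representative \(\hat s(W)\in\bigcap_{a\in W}R_{\Cn}(a)\). By Lemma~\ref{lem:zero-distortion-implies-in-closure}, \(\hat s(W)\in\Cn(S_O)\). Define the test channel as follows:
\begin{itemize}
\item for \(s_o=a\in A\), draw \(W\sim P^\star_{W\mid A}(\cdot\mid a)\) and output \(\hat s(W)\);
\item for \(s_o\in J\), draw \(W\) from the marginal \(Q_W:=\sum_a\pi_A(a)P^\star_{W\mid A}(\cdot\mid a)\) and output \(\hat s(W)\).
\end{itemize}
On \(A\) the distortion is zero because \(a\in W\). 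On \(J\) it is zero by Proposition~\ref{prop:redundant-free}.

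Next, introduce \(T\) as in Theorem~\ref{thm:tight-zero-rate}. Since \(\hat{\mathsf S}_o\) depends on \(\mathsf S_o\) only through \(T\), data processing gives
\[
I(\mathsf S_o;\hat{\mathsf S}_o)\le I(T;W)=H(W)-P_AH(W\mid \mathsf A_o)-P_JH(Q_W).
\]
The marginal of \(W\) is \(Q_W\) under both branches, so this equals \(P_A\bigl(H(Q_W)-H(W\mid\mathsf A_o)\bigr)=P_A H_{\Gamma_0}(\pi_A)\). This is the same computation as in Proposition~\ref{prop:zero-rate-ach}.

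\emph{Converse.} Take any test channel with zero expected distortion. Then for every \(a\) with \(P_O(a)>0\), almost surely \(\hat{\mathsf S}_o\in R_{\Cn}(a)\). Define the deterministic map
\[
W(\hat s):=\{a\in A:\hat s\in R_{\Cn}(a)\}.
\]
Conditioned on \(\mathsf S_o=a\in A\), we have \(a\in W(\hat{\mathsf S}_o)\) almost surely, and \(W(\hat{\mathsf S}_o)\in\Gamma_0\) because it is nonempty and \(\hat{\mathsf S}_o\) lies in the intersection. So the conditional law of \(W(\hat{\mathsf S}_o)\) given \(\mathsf A_o\) is feasible for \eqref{eq:hypergraph-entropy-Gamma0}. (Core symbols with \(\pi_A(a)=0\) do not matter.) It remains to show \(I(\mathsf S_o;\hat{\mathsf S}_o)\ge P_A\, I(\mathsf A_o;W)\) for this conditional law.

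For that, use \(I(\mathsf S_o;\hat{\mathsf S}_o)\ge I(T;\hat{\mathsf S}_o)\ge I(T;W)\). Write \(Q_A,Q_J\) for the laws of \(W\) given \(T\in A\) and given \(T=*\), and \(Q=P_AQ_A+P_JQ_J\). By the chain rule,
\[
I(T;W)=P_A I(\mathsf A_o;W\mid T\in A)+\bigl[H(Q)-P_AH(Q_A)-P_JH(Q_J)\bigr].
\]
The bracket is nonnegative by concavity of entropy. The first term is \(P_A\, I(\mathsf A_o;W)\) computed under \(\pi_A\), which is at least \(P_AH_{\Gamma_0}(\pi_A)\).

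The main obstacle I anticipate is exactly this decomposition step, i.e. handling that \(J\)-symbols may use reconstructions outside \(\Gamma_0\)'s structure and that \(W(\hat s)\) might be empty for \(J\)-outputs. The chain-rule split above sidesteps this, because the \(J\) branch only enters through the nonnegative Jensen gap. Empty \(W\) values are harmless there since only \(Q_J\) sees them. Finally, for operational meaning, one would remark that this single-letter \(R_{\mathrm{sem}}(0)\) is the standard informational quantity, so no block-coding argument is needed beyond the definition \eqref{eq:R-sem-D}.
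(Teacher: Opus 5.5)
Your proposal is correct and follows essentially the same route as the paper: the constant reconstruction when \(P_A=0\), a converse built on the deterministic map \(W(\hat s)=\{a\in A:\hat s\in R_{\Cn}(a)\}\) with data processing and a chain-rule/concavity split on membership in \(A\), and achievability via a selector \(\hat s(W)\in\bigcap_{a\in W}R_{\Cn}(a)\), with redundant inputs driven by the common output law. The differences are cosmetic: you use an exact minimiser (justified by compactness) where the paper takes an \(\varepsilon\)-optimal kernel, and you apply data processing to \(W\) before splitting on \(T\), whereas the paper first splits on \(\mathsf{B}\) and then applies data processing.
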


\begin{proof}[Proof sketch]
Assumption~\ref{assump:core-coverage} rules out the infeasible case of
Lemma~\ref{lem:zero-distortion-infeasible-infty} and ensures that zero distortion is achievable.
The converse constructs, from any zero-distortion decoder output, a random set \(W\in\Gamma_0\)
containing the true core symbol and then applies data processing to lower bound the rate by
\(P_A H_{\Gamma_0}(\pi_A)\).
Achievability uses a selector \(\psi(W)\in\cap_{a\in W}R_{\Cn}(a)\) and a random-set kernel
\(P_{W\mid A}\) to build a zero-distortion test channel whose rate matches this bound; the
redundant part is handled using Proposition~\ref{prop:redundant-free} together with
Lemma~\ref{lem:zero-distortion-implies-in-closure}.
Full details are given in Appendix~\ref{app:proof-confusable-hypergraph-exact}.
\end{proof}

\subsection{Graph-entropy reduction under pairwise realisability}
\label{subsec:confusable-graph}

Define the \emph{core incompatibility graph} \(G_{\mathrm{inc}}\) on vertex set
\(A\) by
\[
  \{a,a'\}\in E(G_{\mathrm{inc}})
  \iff
  R_{\Cn}(a)\cap R_{\Cn}(a')=\emptyset.
\]
Thus two core symbols are adjacent iff no single reconstruction symbol can be
zero-distortion correct for both.

\begin{assumption}[Pairwise realisability]
\label{assump:pairwise-realisability}
For every nonempty \(W\subseteq A\),
\[
  \Bigl[
    \forall\,a\neq a'\in W,\ 
    R_{\Cn}(a)\cap R_{\Cn}(a')\neq\emptyset
  \Bigr]
  \Longrightarrow
  \bigcap_{a\in W}R_{\Cn}(a)\neq\emptyset.
\]
\end{assumption}

Define K\"orner's graph entropy (random independent-set formulation) as
\begin{equation}\label{eq:graph-entropy}
  H_{G_{\mathrm{inc}}}(\pi_A)
  :=
  \min_{\substack{P_{W\mid A}:\\
  \mathsf{A}_o\in W,\ W\in\Ind(G_{\mathrm{inc}})\ \text{a.s.}}}
  I(\mathsf{A}_o;W),
\end{equation}
where \(\Ind(G_{\mathrm{inc}})\) denotes the family of nonempty independent sets.

\begin{corollary}[Graph-entropy law under pairwise realisability]
\label{cor:confusable-graph-entropy}
Assume \(A\neq\varnothing\), Assumption~\ref{assump:core-coverage}, and
Assumption~\ref{assump:pairwise-realisability}. Then
\[
  R_{\mathrm{sem}}(0;\,d_{\Cn},\,P_O)
  =
  P_A\,H_{G_{\mathrm{inc}}}(\pi_A).
\]
\end{corollary}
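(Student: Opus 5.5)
The plan is to reduce the corollary to Theorem~\ref{thm:confusable-hypergraph-exact} by showing that, under Assumption~\ref{assump:pairwise-realisability}, the feasible sets in the two minimisations \eqref{eq:hypergraph-entropy-Gamma0} and \eqref{eq:graph-entropy} coincide. Concretely, I would prove the set identity
\[
  \Gamma_0=\Ind(G_{\mathrm{inc}}),
\]
meaning that both are families of nonempty subsets of \(A\). Once this holds, the constraints ``\(\mathsf{A}_o\in W,\ W\in\Gamma_0\) a.s.'' and ``\(\mathsf{A}_o\in W,\ W\in\Ind(G_{\mathrm{inc}})\) a.s.'' define the same set of kernels \(P_{W\mid A}\), and the objective \(I(\mathsf{A}_o;W)\) is the same. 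Hence \(H_{\Gamma_0}(\pi_A)=H_{G_{\mathrm{inc}}}(\pi_A)\) whenever \(P_A>0\). Substituting into \eqref{eq:R0-hypergraph} then gives the claim. When \(P_A=0\), both sides are \(0\) by the convention in Theorem~\ref{thm:confusable-hypergraph-exact}, so there is nothing to prove.

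For the inclusion \(\Gamma_0\subseteq\Ind(G_{\mathrm{inc}})\), no extra assumption is needed. Take \(W\in\Gamma_0\), so some \(\hat s\) lies in \(\bigcap_{a\in W}R_{\Cn}(a)\). Then for any distinct \(a,a'\in W\), the same \(\hat s\) lies in \(R_{\Cn}(a)\cap R_{\Cn}(a')\), so \(\{a,a'\}\notin E(G_{\mathrm{inc}})\). This shows that \(W\) is a nonempty independent set.

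For the reverse inclusion \(\Ind(G_{\mathrm{inc}})\subseteq\Gamma_0\), take a nonempty independent set \(W\). Independence means \(R_{\Cn}(a)\cap R_{\Cn}(a')\neq\emptyset\) for all distinct \(a,a'\in W\), and Assumption~\ref{assump:pairwise-realisability} then gives \(\bigcap_{a\in W}R_{\Cn}(a)\neq\emptyset\). The singleton case needs separate care, because there the pairwise hypothesis is vacuous. For \(W=\{a\}\), membership in \(\Gamma_0\) requires \(R_{\Cn}(a)\neq\emptyset\). This is supplied by Assumption~\ref{assump:core-coverage} when \(P_A>0\), and it is also what the assumption's implication asserts in that vacuous case.

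I expect the main obstacle to be this bookkeeping rather than any analysis: the minimisations must be over exactly the same constraint sets. In particular, the families must be matched on nonempty sets only, and the singletons must be checked to be feasible. Singleton feasibility also guarantees that both minima are attained and finite, since the kernel \(W=\{\mathsf{A}_o\}\) is always admissible. No new coding argument is required, because all the operational content is inherited from Theorem~\ref{thm:confusable-hypergraph-exact}.
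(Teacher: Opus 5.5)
Your proposal is correct and follows essentially the same route as the paper: establish $\Gamma_0=\Ind(G_{\mathrm{inc}})$ under pairwise realisability, conclude $H_{\Gamma_0}(\pi_A)=H_{G_{\mathrm{inc}}}(\pi_A)$, and apply Theorem~\ref{thm:confusable-hypergraph-exact}. Your explicit treatment of singletons, where the pairwise hypothesis is vacuous, and of the $P_A=0$ case is bookkeeping that the paper's sketch leaves implicit.
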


\begin{proof}[Proof sketch]
Under pairwise realisability, a subset \(W\subseteq A\) has a common
zero-distortion reconstruction iff it contains no incompatible pair, i.e.,
iff \(W\in\Ind(G_{\mathrm{inc}})\). Thus \(\Gamma_0=\Ind(G_{\mathrm{inc}})\), so
\(H_{\Gamma_0}(\pi_A)=H_{G_{\mathrm{inc}}}(\pi_A)\). Apply
Theorem~\ref{thm:confusable-hypergraph-exact}.
\end{proof}

\subsection{Consistency with the disjoint case}
If Assumption~\ref{assump:core-disjoint} holds, then \(G_{\mathrm{inc}}\) is
complete and \(\Ind(G_{\mathrm{inc}})\) consists only of singletons. Therefore
\(H_{G_{\mathrm{inc}}}(\pi_A)=H(\pi_A)\), and
\eqref{eq:R0-hypergraph} reduces to Theorem~\ref{thm:tight-zero-rate}.

\section{Core Decomposition of the Full Rate--Distortion Function}
\label{sec:core-decomposition}

Section~\ref{sec:exact-zero-rate} identified the exact
zero-distortion rate under closure fidelity.
We now extend the analysis to arbitrary distortion levels.
The main point is that the redundant part \(J=S_O\setminus A\)
remains invisible not only at \(D=0\), but across the entire
rate--distortion function.
Thus the full problem decomposes into a classical
rate--distortion problem on the irredundant core, scaled by the
source mass of that core.

\begin{definition}[Core sub-source rate--distortion function]
\label{def:core-rd}
Let \(A=\Atom(S_O)\) and let \(\pi_A\) be the source
distribution conditioned on \(A\).
Assume \(\hat S_O\subseteq\Cn(S_O)\cap\mathbb{S}_O\).
The \emph{core sub-source rate--distortion function} is
\begin{equation}\label{eq:core-rd}
  R^{(A)}(D';\,d_{\Cn},\,\pi_A)
  :=
  \min_{\substack{P_{\hat S\mid A}:\\
  \E[d_{\Cn}(\mathsf{A}_o,\hat {\mathsf{S}}_o)]\le D'}}
  I(\mathsf{A}_o;\hat{\mathsf{S}}_o),
\end{equation}
where \(\mathsf{A}_o\sim\pi_A\), the reconstruction alphabet is
\(\hat S_O\), and the distortion is
\(d_{\Cn}(\cdot,\cdot\mid S_O)\) restricted to inputs in \(A\).
\end{definition}

\begin{theorem}[Core decomposition of the full
rate--distortion function]
\label{thm:rd-decomposition}
Assume \(\hat S_O\subseteq\Cn(S_O)\cap\mathbb{S}_O\).
Then, for every \(D\ge 0\),
\begin{equation}\label{eq:rd-decomp}
  R_{\mathrm{sem}}(D;\,d_{\Cn},\,P_O)
  \;=\;
  P_A\cdot
  R^{(A)}\!\Bigl(\frac{D}{P_A};\,d_{\Cn},\,\pi_A\Bigr),
\end{equation}
with the convention that the right-hand side is \(0\) when
\(P_A=0\).

In particular:
\begin{enumerate}[label=\textup{(\roman*)}]
  \item under the additional hypotheses
        \(A\subseteq\hat S_O\) and
        Assumption~\textup{\ref{assump:core-disjoint}},
        one has
        \(R_{\mathrm{sem}}(0;\,d_{\Cn},\,P_O)=P_AH(\pi_A)\),
        recovering Theorem~\textup{\ref{thm:tight-zero-rate}};
  \item \(R_{\mathrm{sem}}(\cdot)\) is convex,
      non-increasing, and reducible to a finite-alphabet
      core problem that is amenable to standard numerical
      evaluation;
  \item redundant states are invisible to both rate and
        distortion.
\end{enumerate}
\end{theorem}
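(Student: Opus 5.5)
The key observation is that, under the hypothesis \(\hat S_O\subseteq\Cn(S_O)\), Proposition~\ref{prop:redundant-free} gives \(d_{\Cn}(j,\hat s\mid S_O)=0\) for every \(j\in J\) and every \(\hat s\in\hat S_O\). Hence the distortion constraint involves only the core. For any test channel,
\[
  \E[d_{\Cn}(\mathsf{S}_o,\hat{\mathsf{S}}_o)]
  = P_A\,\E[d_{\Cn}(\mathsf{S}_o,\hat{\mathsf{S}}_o)\mid \mathsf{S}_o\in A].
\]
So the global constraint \(\le D\) is equivalent to the core-conditional constraint \(\le D/P_A\) when \(P_A>0\). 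If \(P_A=0\), the constraint is void, and any constant reconstruction gives rate \(0\). This settles that case, since \(\hat S_O\) is nonempty.

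For the converse with \(P_A>0\), take any feasible channel and introduce the collapse variable \(T\) of Proposition~\ref{prop:zero-rate-ach}. Data processing gives \(I(\mathsf{S}_o;\hat{\mathsf{S}}_o)\ge I(T;\hat{\mathsf{S}}_o)\). Let \(E=\mathbf{1}[T\neq *]\), which is a function of \(T\). Then
\[
  I(T;\hat{\mathsf{S}}_o)=I(E;\hat{\mathsf{S}}_o)+I(T;\hat{\mathsf{S}}_o\mid E)
  \ge P_A\, I(T;\hat{\mathsf{S}}_o\mid E=1)+P_J\cdot 0 .
\]
Conditioned on \(E=1\), \(T\) has law \(\pi_A\), and the induced channel \(P_{\hat S\mid A}\) is feasible for the core problem at level \(D/P_A\). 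Therefore \(I(T;\hat{\mathsf{S}}_o\mid E=1)\ge R^{(A)}(D/P_A)\), which gives ``\(\ge\)''.

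For achievability, take an optimal core channel \(Q^\star\) for \(R^{(A)}(D/P_A)\), which exists because the alphabets are finite and the objective is continuous over a compact set. Let \(q^\star\) be its output marginal. Use \(Q^\star\) on \(A\), and on \(J\) draw \(\hat{\mathsf{S}}_o\sim q^\star\) independently of the source. The distortion is exactly \(P_A\) times the core distortion, so it is at most \(D\). As before, \(\hat{\mathsf{S}}_o\) depends on \(\mathsf{S}_o\) only through \(T\), and its marginal is \(q^\star\) under both \(E=0\) and \(E=1\), so \(I(E;\hat{\mathsf{S}}_o)=0\). The chain rule then gives \(I(\mathsf{S}_o;\hat{\mathsf{S}}_o)=P_A\,I(Q^\star)\). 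The main subtlety is exactly this step. One must pick the \(J\)-branch distribution equal to the core output marginal so that the term \(I(E;\hat{\mathsf{S}}_o)\) vanishes, and one must check that the \(J\)-branch incurs zero distortion, which uses \(\hat S_O\subseteq\Cn(S_O)\).

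The items follow from the main identity. For (i), set \(D=0\). The core constraint then forces \(\hat{\mathsf{S}}_o\in R_{\Cn}(a)\), and these sets are disjoint by Assumption~\ref{assump:core-disjoint}. Hence \(\mathsf{A}_o\) is a function of \(\hat{\mathsf{S}}_o\), so \(R^{(A)}(0)\ge H(\pi_A)\). The identity channel achieves this, since \(A\subseteq\hat S_O\) and \(d_{\Cn}(a,a)=0\). For (ii), \(R^{(A)}\) is a standard finite-alphabet rate--distortion function and is therefore convex and non-increasing. Composing it with the linear rescaling \(D\mapsto D/P_A\) and multiplying by \(P_A\) preserves both properties. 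Item (iii) is a restatement: only \(P_A\) and \(\pi_A\) enter the formula, so the internal distribution of \(J\) affects neither the rate nor the distortion.
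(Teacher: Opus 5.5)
Your proposal is correct and follows the paper's proof essentially step for step: the same core-only reduction of the distortion constraint via Proposition~\ref{prop:redundant-free}, the same collapse to $T$ for the converse, and the same achievability construction, which extends an optimal core channel to $J$ by drawing from its output marginal. The only cosmetic difference is that you obtain $I(T;\hat{\mathsf S}_o)\ge P_A\,I(\mathsf A_o;\hat{\mathsf S}_o)$ via the chain rule with $E=\mathbf 1[T\neq *]$, whereas the paper invokes concavity of entropy; the concavity gap is exactly $I(E;\hat{\mathsf S}_o)\ge 0$, so the two arguments use the same inequality.
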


\begin{proof}[Proof sketch]
By Proposition~\ref{prop:redundant-free}, every redundant state contributes
zero closure distortion against every reconstruction in
\(\hat S_O\subseteq\Cn(S_O)\), so both the distortion constraint and the
mutual-information minimization reduce to the irredundant core.
A lower bound follows by introducing the auxiliary variable
\(T(s_o)=s_o\) on \(A\) and \(T(s_o)=*\) on \(J\), and then applying
entropy concavity exactly as in the converse of
Theorem~\ref{thm:tight-zero-rate}.
Achievability is obtained by extending an optimal test channel for the
core sub-source to the redundant part using a common output law.
The full proof is given in Appendix~\ref{app:proof-rd-decomposition}.
\end{proof}

\begin{remark}[Disjointness is needed only for the exact
zero-distortion formula]
\label{rem:rd-decomp-vs-disjoint}
The decomposition theorem itself does \emph{not} require
Assumption~\ref{assump:core-disjoint}.
That assumption is needed only to identify the
core-only zero-distortion term in closed form as
\(H(\pi_A)\).
Thus the structural reduction to the core is more general than
the exact zero-distortion evaluation.
\end{remark}

\begin{remark}[What remains nonclassical at positive distortion]
\label{rem:positive-D-nonclassical}
Theorem~\ref{thm:rd-decomposition} reduces the full source to its
core, but the reduced problem is still nonclassical because the
distortion on the core is the closure distortion
\(d_{\Cn}(\cdot,\cdot\mid S_O)\), not a symbolwise distortion.
Thus the theorem isolates the precise source of the semantic
gain: redundancy disappears entirely, while the nontrivial
positive-distortion geometry is concentrated on the core.
\end{remark}

\section{Limited Inference and the Rate--Depth--Distortion Tradeoff}
\label{sec:limited-inference}

Sections~\ref{sec:exact-zero-rate} and~\ref{sec:core-decomposition}
treat the decoder as having unrestricted deductive power.
We now introduce a bounded inference-depth budget.
The resulting fidelity criterion distinguishes statements that are
re-derivable \emph{eventually} from those re-derivable
\emph{within \(\delta\) iterations of \(T_{\mathsf{PS}}\)}.
This yields a natural rate--depth--distortion tradeoff.

\subsection{Finite-Step Deduction and Derivation Depth}
\label{subsec:limited-inference-model}

\begin{assumption}[Finite-step closure dynamics]
\label{assump:finite-step-closure}
For the proof systems considered in this section, there exists a
computable monotone operator \(T_{\mathsf{PS}}\) on finite subsets
of \(\mathbb{S}_O\) such that, for every finite
\(B\subseteq\mathbb{S}_O\),
\begin{enumerate}[label=\textup{(T\arabic*)}]
  \item \(T_{\mathsf{PS}}(B)\supseteq B\);
  \item \(B\subseteq B' \Rightarrow
        T_{\mathsf{PS}}(B)\subseteq T_{\mathsf{PS}}(B')\);
  \item with \(T_{\mathsf{PS}}^0(B):=B\) and
        \(T_{\mathsf{PS}}^{n+1}(B):=
        T_{\mathsf{PS}}(T_{\mathsf{PS}}^n(B))\),
        one has
        \[
          \Cn(B)=\bigcup_{n\ge 0}T_{\mathsf{PS}}^n(B);
        \]
  \item the sequence \(T_{\mathsf{PS}}^n(B)\) stabilizes in
        finitely many iterations.
\end{enumerate}
This covers standard Datalog and Horn-clause settings over finite
active domains
\cite{van1976semantics,ceri1989you,abiteboul1995foundations,
dantsin2001complexity}.
\end{assumption}

\begin{definition}[Derivation depth]
\label{def:derivation-depth}
For finite \(B\subseteq\mathbb{S}_O\) and \(s\in\mathbb{S}_O\),
define
\[
  \Dd(s\mid B)
  :=
  \min\{n\ge 0:\ s\in T_{\mathsf{PS}}^n(B)\},
\]
with \(\Dd(s\mid B):=\infty\) if \(s\notin\Cn(B)\).
\end{definition}

\begin{definition}[Maximum intrinsic derivation depth]
\label{def:max-depth}
Let \(A=\Atom(S_O)\).
The \emph{maximum intrinsic derivation depth} of \(S_O\) is
\[
  \mathsf{D_d}
  :=
  \max_{q\in S_O}\Dd(q\mid A),
\]
with the convention \(\max\varnothing:=0\).
\end{definition}

\begin{assumption}[Order-robust core for the bounded-inference section]
\label{assump:atom-equals-essential}
Throughout Section~\ref{sec:limited-inference}, we additionally assume
\[
  A=\Atom(S_O)=\Ess(S_O)
  :=
  \{s\in S_O:\ s\notin \Cn(S_O\setminus\{s\})\}.
\]
Equivalently, every element of the chosen core is globally
nonderivable from the remaining stored source symbols.
This assumption is satisfied, for example, under the
essential-generation condition of
Proposition~\ref{prop:essential-order-invariance}.
\end{assumption}

\begin{remark}[Why the extra assumption is needed here]
\label{rem:why-extra-assumption-limited-inference}
The family \(A_\delta\) is defined directly from derivability out of
\(S_O\setminus\{s\}\), so as \(\delta\) increases it converges to the
order-free essential set \(\Ess(S_O)\), not in general to an
arbitrary order-dependent deletion core \(\Atom(S_O)\).
Assumption~\ref{assump:atom-equals-essential} is therefore exactly
what ensures that the large-\(\delta\) endpoint of the
bounded-inference theory coincides with the zero-distortion law of
Section~\ref{sec:exact-zero-rate}.
\end{remark}

\subsection{Depth-Constrained Fidelity}
\label{subsec:delta-distortion}

\begin{definition}[\(\delta\)-step closure distortion]
\label{def:delta-closure-distortion}
For integer \(\delta\ge 0\), define
\begin{equation}\label{eq:d-Cn-delta}
  d_{\Cn}^{\delta}(s_o,\hat s\mid S_O)
  :=
  \begin{cases}
    0, &
    \text{if } S_O\subseteq
    T_{\mathsf{PS}}^\delta\bigl((S_O\setminus\{s_o\})\cup
    \{\hat s\}\bigr),\\[4pt]
    1, & \text{otherwise}.
  \end{cases}
\end{equation}
\end{definition}

\begin{remark}[Interpretation of \(d_{\Cn}^{\delta}\)]
\label{rem:dCn-delta-interpretation}
The distortion \(d_{\Cn}^{\delta}\) is a bounded-inference
fidelity criterion.
It asks whether the entire source set \(S_O\) can be
reconstructed within at most \(\delta\) iterations of the
immediate-consequence operator after replacing \(s_o\) by
\(\hat s\).

At \(\delta=0\), one has
\[
  d_{\Cn}^{0}(s_o,\hat s\mid S_O)=\mathbf{1}[s_o\neq\hat s]
  =d_H(s_o,\hat s).
\]
Thus the zero-inference endpoint recovers classical symbolwise
fidelity.
\end{remark}

\begin{definition}[\(\delta\)-redundancy and
\(\delta\)-irredundant core]
\label{def:delta-core}
A state \(s\in S_O\) is \emph{\(\delta\)-redundant} if
\[
  s\in T_{\mathsf{PS}}^\delta(S_O\setminus\{s\}).
\]
The corresponding \emph{\(\delta\)-irredundant core} is
\[
  A_\delta
  :=
  \Atom_\delta(S_O)
  :=
  \{s\in S_O:\ s\notin
  T_{\mathsf{PS}}^\delta(S_O\setminus\{s\})\}.
\]
Write
\[
  P_\delta:=P_O(A_\delta),
  \qquad
  \pi_\delta(s):=\frac{P_O(s)}{P_\delta},
  \quad s\in A_\delta,
\]
whenever \(P_\delta>0\).
\end{definition}

\begin{lemma}[\(\delta\)-redundant states are distortion-free]
\label{lem:delta-free-substitution}
If \(s\notin A_\delta\), then
\[
  d_{\Cn}^{\delta}(s,\hat s\mid S_O)=0,
  \qquad \forall\,\hat s\in\hat S_O.
\]
\end{lemma}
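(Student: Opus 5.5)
The plan is to unwind Definition~\ref{def:delta-closure-distortion} and show directly that the zero-distortion condition holds for every \(\hat s\). Fix \(s\notin A_\delta\), so by Definition~\ref{def:delta-core} we have \(s\in T_{\mathsf{PS}}^\delta(S_O\setminus\{s\})\). Fix an arbitrary \(\hat s\in\hat S_O\) and write
\[
  B:=S_O\setminus\{s\},\qquad B':=B\cup\{\hat s\}.
\]
It then suffices to prove \(S_O\subseteq T_{\mathsf{PS}}^\delta(B')\). We will write \(S_O=B\cup\{s\}\) and treat the two pieces separately.

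First, we need two auxiliary facts about the iterates, each proved by induction on \(n\).
\begin{enumerate}[label=\textup{(\alph*)}]
  \item \emph{Iterated monotonicity:} if \(B\subseteq B'\), then \(T_{\mathsf{PS}}^n(B)\subseteq T_{\mathsf{PS}}^n(B')\) for all \(n\ge0\). The case \(n=0\) is trivial. For the step, assume \(T_{\mathsf{PS}}^n(B)\subseteq T_{\mathsf{PS}}^n(B')\) and apply (T2) to obtain the inclusion at level \(n+1\).
  \item \emph{Iterated extensivity:} \(B'\subseteq T_{\mathsf{PS}}^n(B')\) for all \(n\ge0\). This follows by chaining (T1), since \(T_{\mathsf{PS}}^n(B')\subseteq T_{\mathsf{PS}}^{n+1}(B')\).
\end{enumerate}
Both arguments require the iterates to remain finite subsets of \(\mathbb{S}_O\), so that \(T_{\mathsf{PS}}\) is applicable at every stage. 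This holds because \(\mathbb{S}_O\) is finite by Assumption~\ref{assump:proof-system}.

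Next, we combine these facts. By (b), \(B\subseteq B'\subseteq T_{\mathsf{PS}}^\delta(B')\), so every element of \(S_O\) other than \(s\) is covered. For \(s\) itself, the \(\delta\)-redundancy hypothesis gives \(s\in T_{\mathsf{PS}}^\delta(B)\). Since \(B\subseteq B'\), fact (a) then gives \(s\in T_{\mathsf{PS}}^\delta(B')\). Hence \(S_O=B\cup\{s\}\subseteq T_{\mathsf{PS}}^\delta(B')\), and therefore \(d_{\Cn}^{\delta}(s,\hat s\mid S_O)=0\).

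Unlike Proposition~\ref{prop:redundant-free}, this argument needs no assumption that \(\hat s\in\Cn(S_O)\). The distortion \(d_{\Cn}^\delta\) only asks that \(S_O\) be covered, not that the closure be preserved exactly, so adding \(\hat s\) can only help by monotonicity. I expect no real obstacle here. The only point needing care is stating the induction in (a) and (b) cleanly, because (T1)--(T2) are given only for a single application of \(T_{\mathsf{PS}}\) and must be lifted to the \(\delta\)-fold iterate.
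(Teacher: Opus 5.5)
Your proof is correct and follows the same route as the paper. The paper also combines the hypothesis \(s\in T_{\mathsf{PS}}^\delta(S_O\setminus\{s\})\) with monotonicity of \(T_{\mathsf{PS}}^\delta\) under adding \(\hat s\). Your inductions (a) and (b) make explicit the lifting of (T1)--(T2) to the \(\delta\)-fold iterate that the paper uses implicitly, in particular the extensivity step \(S_O\setminus\{s\}\subseteq T_{\mathsf{PS}}^\delta((S_O\setminus\{s\})\cup\{\hat s\})\).
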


\begin{proof}
If \(s\notin A_\delta\), then by definition
\[
  s\in T_{\mathsf{PS}}^\delta(S_O\setminus\{s\}).
\]
By monotonicity of \(T_{\mathsf{PS}}^\delta\),
\[
  T_{\mathsf{PS}}^\delta(S_O\setminus\{s\})
  \subseteq
  T_{\mathsf{PS}}^\delta\bigl((S_O\setminus\{s\})\cup\{\hat s\}\bigr).
\]
Hence both \(S_O\setminus\{s\}\) and \(s\) belong to the latter
set, so the whole source set \(S_O\) is recovered within
\(\delta\) iterations.
Therefore \(d_{\Cn}^{\delta}(s,\hat s\mid S_O)=0\).
\end{proof}

\begin{proposition}[\(\delta\)-core filtration]
\label{prop:filtration}
The family \(\{A_\delta\}_{\delta\ge 0}\) satisfies:
\begin{enumerate}[label=\textup{(\roman*)}]
  \item \(A_0=S_O\);
  \item \(A_{\delta+1}\subseteq A_\delta\) for every
        \(\delta\ge 0\);
  \item under Assumption~\ref{assump:atom-equals-essential},
        \(A_\delta=A\) for every \(\delta\ge \mathsf{D_d}\).
\end{enumerate}
\end{proposition}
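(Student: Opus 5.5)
Each of the three parts follows directly from the definition of \(A_\delta\) together with properties (T1)--(T3) of \(T_{\mathsf{PS}}\). I will treat them in order.

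\emph{Part (i).} By definition \(T_{\mathsf{PS}}^0(B)=B\). Hence \(s\in T_{\mathsf{PS}}^0(S_O\setminus\{s\})\) would mean \(s\in S_O\setminus\{s\}\), which is impossible. So no state is \(0\)-redundant, and \(A_0=S_O\).

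\emph{Part (ii).} I first show that the iterates are nondecreasing in \(n\): for every finite \(B\), \(T_{\mathsf{PS}}^\delta(B)\subseteq T_{\mathsf{PS}}^{\delta+1}(B)\). This is immediate from (T1) applied to \(T_{\mathsf{PS}}^\delta(B)\). Now take \(s\notin A_\delta\), so \(s\in T_{\mathsf{PS}}^\delta(S_O\setminus\{s\})\). The inclusion just proved gives \(s\in T_{\mathsf{PS}}^{\delta+1}(S_O\setminus\{s\})\), i.e.\ \(s\notin A_{\delta+1}\). Taking complements in \(S_O\) yields \(A_{\delta+1}\subseteq A_\delta\).

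\emph{Part (iii), inclusion \(A\subseteq A_\delta\).} Fix \(\delta\ge \mathsf{D_d}\). Let \(a\in A\). By Assumption~\ref{assump:atom-equals-essential}, \(a\in\Ess(S_O)\), so \(a\notin\Cn(S_O\setminus\{a\})\). By (T3), \(T_{\mathsf{PS}}^\delta(S_O\setminus\{a\})\subseteq\Cn(S_O\setminus\{a\})\). Hence \(a\notin T_{\mathsf{PS}}^\delta(S_O\setminus\{a\})\), i.e.\ \(a\in A_\delta\). This inclusion actually holds for every \(\delta\).

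\emph{Part (iii), inclusion \(A_\delta\subseteq A\).} This is the only step that uses the depth bound, and it is the main point to check. Take \(s\in S_O\setminus A\). Then \(s\notin A\), so \(A\subseteq S_O\setminus\{s\}\). By the definition of \(\mathsf{D_d}\), \(\Dd(s\mid A)\le\mathsf{D_d}\le\delta\). Here I use that \(\Dd(s\mid A)\) is finite, since \(s\in S_O\subseteq\Cn(S_O)=\Cn(A)\) by reflexivity and Proposition~\ref{prop:atom-core-correct}(i). Therefore \(s\in T_{\mathsf{PS}}^{\Dd(s\mid A)}(A)\). Monotonicity of the iterates in \(n\) gives \(s\in T_{\mathsf{PS}}^{\delta}(A)\). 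Monotonicity of \(T_{\mathsf{PS}}^\delta\) in its argument, which follows from (T2) by induction on \(\delta\), then gives
\[
  s\in T_{\mathsf{PS}}^{\delta}(A)\subseteq T_{\mathsf{PS}}^\delta(S_O\setminus\{s\}).
\]
So \(s\) is \(\delta\)-redundant and \(s\notin A_\delta\). Combining the two inclusions gives \(A_\delta=A\) for every \(\delta\ge\mathsf{D_d}\).

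The subtle point is the step \(A\subseteq S_O\setminus\{s\}\), which is what carries a derivation from the whole core \(A\) over to the punctured set \(S_O\setminus\{s\}\). It is valid only because \(s\notin A\). Assumption~\ref{assump:atom-equals-essential} is not needed for this inclusion; it is needed for the reverse inclusion \(A\subseteq A_\delta\).
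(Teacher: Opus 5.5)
Your proof is correct and follows the paper's argument essentially step for step. Parts (i) and (ii) rest on \(T_{\mathsf{PS}}^0=\mathrm{id}\) and (T1), and part (iii) combines \(\Cn(A)=\Cn(S_O)\) with \(A\subseteq S_O\setminus\{s\}\) for one inclusion, and Assumption~\ref{assump:atom-equals-essential} with \(T_{\mathsf{PS}}^\delta(B)\subseteq\Cn(B)\) for the other. You are slightly more explicit than the paper in deriving argument-monotonicity of \(T_{\mathsf{PS}}^\delta\) from (T2) and in passing from depth \(\Dd(s\mid A)\) to \(\delta\), and you are right that the assumption is needed only for \(A\subseteq A_\delta\).
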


\begin{proof}
Part~\textup{(i)} follows from
\(T_{\mathsf{PS}}^0(S_O\setminus\{s\})=S_O\setminus\{s\}\), which
never contains \(s\).

For~\textup{(ii)}, monotonicity in the number of iterations gives
\[
  T_{\mathsf{PS}}^\delta(S_O\setminus\{s\})
  \subseteq
  T_{\mathsf{PS}}^{\delta+1}(S_O\setminus\{s\}),
\]
so once \(s\) becomes \(\delta\)-redundant, it remains
\((\delta+1)\)-redundant.

For~\textup{(iii)}, let \(j\in J=S_O\setminus A\).
Since \(\Cn(A)=\Cn(S_O)\), one has \(j\in\Cn(A)\), and by the
definition of \(\mathsf{D_d}\),
\[
  j\in T_{\mathsf{PS}}^{\mathsf{D_d}}(A).
\]
Because \(A\subseteq S_O\setminus\{j\}\), monotonicity yields
\[
  j\in T_{\mathsf{PS}}^{\mathsf{D_d}}(S_O\setminus\{j\}),
\]
hence \(j\notin A_\delta\) for every \(\delta\ge \mathsf{D_d}\).

Conversely, if \(a\in A\), then by
Assumption~\ref{assump:atom-equals-essential},
\[
  a\notin \Cn(S_O\setminus\{a\}).
\]
Therefore \(a\notin T_{\mathsf{PS}}^\delta(S_O\setminus\{a\})\) for
every finite \(\delta\), because
\(T_{\mathsf{PS}}^\delta(S_O\setminus\{a\})
 \subseteq \Cn(S_O\setminus\{a\})\).
Hence \(a\in A_\delta\) for all \(\delta\), proving
\(A_\delta=A\) when \(\delta\ge \mathsf{D_d}\).
\end{proof}

\begin{assumption}[Disjoint zero-distortion sets for the
\(\delta\)-core]
\label{assump:delta-core-disjoint}
For each \(\delta\ge 0\) and each \(a\in A_\delta\), define
\[
  R_\delta(a)
  :=
  \{\hat s\in\hat S_O:\ d_{\Cn}^{\delta}(a,\hat s\mid S_O)=0\}.
\]
We assume that for distinct \(a_1,a_2\in A_\delta\),
\[
  R_\delta(a_1)\cap R_\delta(a_2)=\varnothing.
\]
\end{assumption}

\subsection{Exact Rate--Depth--Distortion Laws}
\label{subsec:rate-depth-laws}

\begin{definition}[Rate--depth--distortion function]
\label{def:rdd-function}
For \(\delta\ge 0\), define
\begin{equation}\label{eq:rdd-def}
  R_{\mathrm{sem}}(D,\delta;\,d_{\Cn}^{\delta},\,P_O)
  :=
  \min_{\substack{P_{\hat S\mid S}:\\
  \E[d_{\Cn}^{\delta}({\mathsf{S}}_o,\hat {\mathsf{S}}_o)]\le D}}
  I(\mathsf{S}_o;\hat{\mathsf{S}}_o).
\end{equation}
\end{definition}

\begin{theorem}[Exact zero-distortion rate--depth tradeoff]
\label{thm:rate-depth}
Fix \(\delta\ge 0\).
Assume \(A_\delta\subseteq\hat S_O\) and that
Assumption~\ref{assump:delta-core-disjoint} holds at this \(\delta\).
Then
\begin{equation}\label{eq:rate-depth}
  R_{\mathrm{sem}}(0,\delta;\,d_{\Cn}^{\delta},\,P_O)
  \;=\;
  P_\delta\,H(\pi_\delta),
\end{equation}
with the convention that the right-hand side is \(0\) when
\(P_\delta=0\).
\end{theorem}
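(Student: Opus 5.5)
The argument transfers the proof of Theorem~\ref{thm:tight-zero-rate} to the depth-$\delta$ setting, replacing $A$ by $A_\delta$, $d_{\Cn}$ by $d_{\Cn}^{\delta}$, and Proposition~\ref{prop:redundant-free} by Lemma~\ref{lem:delta-free-substitution}. A useful feature is that Lemma~\ref{lem:delta-free-substitution} holds for \emph{every} $\hat s\in\hat S_O$, not only for reconstructions in $\Cn(S_O)$, so no closure-membership hypothesis on $\hat S_O$ is needed. Assumption~\ref{assump:atom-equals-essential} is not used at a fixed $\delta$; it matters only for relating $A_\delta$ to $A$ via Proposition~\ref{prop:filtration}.

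\emph{Degenerate case.} If $P_\delta=0$, the source is almost surely supported on $S_O\setminus A_\delta$. Any constant reconstruction $\hat s_0\in\hat S_O$ (nonempty by Assumption~\ref{assump:finite-so}) then has zero $d^\delta_{\Cn}$-distortion by Lemma~\ref{lem:delta-free-substitution}, and the rate is $0$. Assume henceforth $P_\delta>0$.

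\emph{Achievability.} Define $P_{\hat S\mid S}(\hat s\mid s)=\mathbf 1[\hat s=s]$ for $s\in A_\delta$ and $P_{\hat S\mid S}(\hat s\mid s)=\pi_\delta(\hat s)$ for $s\notin A_\delta$; this is admissible because $A_\delta\subseteq\hat S_O$.
\begin{itemize}
  \item For $a\in A_\delta$, $T_{\mathsf{PS}}^\delta(S_O)\supseteq S_O$ by (T1), so $d^\delta_{\Cn}(a,a\mid S_O)=0$.
  \item For $s\notin A_\delta$, the distortion vanishes by Lemma~\ref{lem:delta-free-substitution}.
\end{itemize}
Let $T_\delta(s)=s$ on $A_\delta$ and $T_\delta(s)=*$ otherwise. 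Then $\hat{\mathsf S}_o$ depends on $\mathsf S_o$ only through $T_\delta$, and the output marginal is $\pi_\delta$. Hence
\[
I=H(\pi_\delta)-(1-P_\delta)H(\pi_\delta)=P_\delta H(\pi_\delta),
\]
exactly as in Proposition~\ref{prop:zero-rate-ach}.

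\emph{Converse.} Take any test channel with $\E[d^\delta_{\Cn}]=0$. Since $d^\delta_{\Cn}$ takes values in $\{0,1\}$, every $a\in A_\delta$ with $P_O(a)>0$ satisfies $P_{\hat S\mid S}(R_\delta(a)\mid a)=1$. Data processing gives $I(\mathsf S_o;\hat{\mathsf S}_o)\ge I(T_\delta;\hat{\mathsf S}_o)$. Write $Q_a$ for the output law given $a\in A_\delta$, and $Q_*$ for the output law given $T_\delta=*$. The output marginal is $Q=P_\delta\bar Q+(1-P_\delta)Q_*$, where $\bar Q=\sum_a\pi_\delta(a)Q_a$. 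Then
\[
I(T_\delta;\hat{\mathsf S}_o)=H(Q)-P_\delta\sum_a\pi_\delta(a)H(Q_a)-(1-P_\delta)H(Q_*).
\]
Concavity of entropy gives $H(Q)\ge P_\delta H(\bar Q)+(1-P_\delta)H(Q_*)$, so
\[
I(T_\delta;\hat{\mathsf S}_o)\ge P_\delta\Bigl[H(\bar Q)-\sum_a\pi_\delta(a)H(Q_a)\Bigr].
\]
Because the supports $R_\delta(a)$ are pairwise disjoint by Assumption~\ref{assump:delta-core-disjoint}, the label $a$ is a deterministic function of the output under $\bar Q$. The bracket is therefore the mutual information between a $\pi_\delta$-distributed label and an output that determines it, which equals $H(\pi_\delta)$. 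This gives the lower bound $P_\delta H(\pi_\delta)$.

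\emph{Main obstacle.} The main point of care is the disjoint-support step. One must check that it is valid with $\pi_\delta$ restricted to its support, since symbols $a$ with $P_O(a)=0$ impose no constraint, and that the decomposition $H(\bar Q)=H(\pi_\delta)+\sum_a\pi_\delta(a)H(Q_a)$ is justified by disjointness. The remaining steps are routine.
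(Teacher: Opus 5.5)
Your proposal is correct and follows the paper's proof essentially step for step: the same degenerate case, the same achievability channel (identity on \(A_\delta\), \(\pi_\delta\) off it), and the same converse via \(T_\delta\), data processing, concavity of entropy, and the disjoint-support entropy identity. Folding \(P_\delta=1\) into the concavity step instead of treating it separately, and noting explicitly the zero-probability core symbols and the use of (T1), are only minor refinements of that argument.
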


\begin{proof}[Proof sketch]
The proof is the exact analogue of
Theorem~\ref{thm:tight-zero-rate}, with
\(A\), \(P_A\), and \(\pi_A\) replaced by
\(A_\delta\), \(P_\delta\), and \(\pi_\delta\), and with
Lemma~\ref{lem:delta-free-substitution} playing the role of
Proposition~\ref{prop:redundant-free}.

Indeed, define
\[
  T_\delta(s_o)
  :=
  \begin{cases}
    s_o, & s_o\in A_\delta,\\
    *,   & s_o\notin A_\delta.
  \end{cases}
\]
Under zero \(\delta\)-distortion, every \(a\in A_\delta\) must be
reconstructed inside its zero-distortion set \(R_\delta(a)\), and
these sets are pairwise disjoint by
Assumption~\ref{assump:delta-core-disjoint}.
For \(s_o\notin A_\delta\), the output is unconstrained by
Lemma~\ref{lem:delta-free-substitution}.
The same disjoint-support entropy identity as in the proof of
Theorem~\ref{thm:tight-zero-rate} therefore yields the converse
\[
  I({\mathsf{S}}_o;\hat {\mathsf{S}}_o)\ge P_\delta H(\pi_\delta).
\]
Achievability is obtained by reproducing each \(a\in A_\delta\)
exactly and mapping every \(s_o\notin A_\delta\) to a common
distribution \(\pi_\delta\) supported on \(A_\delta\).
Then zero \(\delta\)-distortion holds, and the resulting mutual
information is exactly \(P_\delta H(\pi_\delta)\).
\end{proof}

\begin{corollary}[Endpoints and monotonicity of the zero-distortion
rate--depth function]
\label{cor:rate-depth-endpoints}
Assume \(S_O\subseteq\hat S_O\), that
Assumption~\ref{assump:delta-core-disjoint} holds for every
\(\delta\ge 0\), and that
Assumption~\ref{assump:atom-equals-essential} holds.
Then:
\begin{enumerate}[label=\textup{(\roman*)}]
  \item \(R_{\mathrm{sem}}(0,0;\,d_{\Cn}^{0},\,P_O)=H(P_O)\);
  \item \(R_{\mathrm{sem}}(0,\delta;\,d_{\Cn}^{\delta},\,P_O)
        =P_AH(\pi_A)\) for all \(\delta\ge \mathsf{D_d}\);
  \item the function
        \(\delta\mapsto
        R_{\mathrm{sem}}(0,\delta;\,d_{\Cn}^{\delta},\,P_O)\)
        is non-increasing.
\end{enumerate}
\end{corollary}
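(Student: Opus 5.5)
All three parts will be derived from Theorem~\ref{thm:rate-depth} applied at each fixed \(\delta\), combined with Proposition~\ref{prop:filtration}. The hypotheses of Theorem~\ref{thm:rate-depth} hold at every \(\delta\). Indeed, \(A_\delta\subseteq S_O\subseteq\hat S_O\) by construction, and Assumption~\ref{assump:delta-core-disjoint} is assumed for every \(\delta\). Hence \(R_{\mathrm{sem}}(0,\delta;\,d_{\Cn}^{\delta},\,P_O)=P_\delta H(\pi_\delta)\) for all \(\delta\ge 0\), with value \(0\) when \(P_\delta=0\). Everything then reduces to understanding the function \(\delta\mapsto P_\delta H(\pi_\delta)\). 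Writing \(P_\delta H(\pi_\delta)=-\sum_{s\in A_\delta}P_O(s)\log\bigl(P_O(s)/P_\delta\bigr)\) will be convenient.

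For part~(i), Proposition~\ref{prop:filtration}(i) gives \(A_0=S_O\), so \(P_0=1\) and \(\pi_0=P_O\), yielding \(H(P_O)\). As a consistency check, this agrees with Remark~\ref{rem:dCn-delta-interpretation} (\(d^0_{\Cn}=d_H\)) and the classical zero-Hamming-distortion rate. For part~(ii), Proposition~\ref{prop:filtration}(iii), which uses Assumption~\ref{assump:atom-equals-essential}, gives \(A_\delta=A\) for all \(\delta\ge\mathsf{D_d}\). Hence \(P_\delta=P_A\) and \(\pi_\delta=\pi_A\), and the value is \(P_AH(\pi_A)\), with the zero-mass convention matching that of Theorem~\ref{thm:tight-zero-rate}.

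Part~(iii) is the only step with real content. By Proposition~\ref{prop:filtration}(ii), \(A_{\delta+1}\subseteq A_\delta\), so it suffices to show that \(F(B):=P_O(B)H(P_O(\cdot\mid B))\) is monotone under set inclusion. I would show that \(F\) does not increase when a single element \(s\) is removed from \(B\), and then iterate. Put \(p=P_O(s)\), \(Q=P_O(B)\) and \(Q'=Q-p\). The identity \(F(B)=\sum_{b\in B}P_O(b)\log(Q/P_O(b))\) gives
\[
F(B)-F(B\setminus\{s\})=p\log\frac{Q}{p}+Q'\log\frac{Q}{Q'}=Q\,h_b(p/Q)\ge 0 .
\]
This is the grouping rule of entropy, exactly as in the decomposition used in Corollary~\ref{cor:zero-rate-gain}(ii). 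The degenerate cases \(p=0\) or \(Q'=0\) follow from the \(0\log 0=0\) convention and the stated convention for zero mass.

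An alternative operational argument avoids the entropy computation altogether. From \(T^\delta_{\mathsf{PS}}(B)\subseteq T^{\delta+1}_{\mathsf{PS}}(B)\) one gets \(d^{\delta+1}_{\Cn}\le d^{\delta}_{\Cn}\) pointwise, so the feasible set of test channels only grows with \(\delta\) and the minimum in~\eqref{eq:rdd-def} can only decrease. This argument does not even need the disjointness hypothesis, so I would present it as the main proof of~(iii) and keep the computation above as a consistency check. I expect no substantive obstacle; the only point requiring care is the bookkeeping of the \(P_\delta=0\) convention.
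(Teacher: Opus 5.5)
Your proposal is correct and follows the paper's proof: parts (i) and (ii) come from Theorem~\ref{thm:rate-depth} together with $A_0=S_O$ and Proposition~\ref{prop:filtration}(iii). Part (iii) comes from the pointwise inequality $d_{\Cn}^{\delta_2}\le d_{\Cn}^{\delta_1}$ for $\delta_1\le\delta_2$, which enlarges the zero-distortion feasible set. The grouping-rule computation showing that $\delta\mapsto P_\delta H(\pi_\delta)$ is non-increasing is a valid extra check, but the paper does not use it.
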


\begin{proof}
Part~\textup{(i)} follows from Theorem~\ref{thm:rate-depth} at
\(\delta=0\). Indeed, \(A_0=S_O\), so \(P_0=1\) and
\(\pi_0=P_O\), which gives
\[
  R_{\mathrm{sem}}(0,0;\,d_{\Cn}^{0},\,P_O)=H(P_O).
\]

Part~\textup{(ii)} follows from
Proposition~\ref{prop:filtration}\textup{(iii)} and
Theorem~\ref{thm:rate-depth}: for every \(\delta\ge \mathsf{D_d}\),
one has \(A_\delta=A\), hence \(P_\delta=P_A\) and
\(\pi_\delta=\pi_A\).

For part~\textup{(iii)}, note that if \(\delta_1\le \delta_2\), then
\[
  d_{\Cn}^{\delta_2}(s_o,\hat s\mid S_O)
  \le
  d_{\Cn}^{\delta_1}(s_o,\hat s\mid S_O)
\]
for all \((s_o,\hat s)\), because allowing more iterations (of \(T_{\mathsf{PS}}\)) can
only enlarge the zero-distortion feasible set. Therefore the feasible
test-channel set for zero \(\delta_2\)-distortion contains that for
zero \(\delta_1\)-distortion, and thus
\[
  R_{\mathrm{sem}}(0,\delta_2;\,d_{\Cn}^{\delta_2},\,P_O)
  \le
  R_{\mathrm{sem}}(0,\delta_1;\,d_{\Cn}^{\delta_1},\,P_O).
\]
So \(\delta\mapsto R_{\mathrm{sem}}(0,\delta)\) is non-increasing.
\end{proof}

\begin{definition}[\(\delta\)-core sub-source rate--distortion function]
\label{def:delta-core-rd}
For \(\delta\ge 0\) and \(P_\delta>0\), define
\begin{equation}\label{eq:delta-core-rd}
  R^{(A_\delta)}(D';\,d_{\Cn}^{\delta},\,\pi_\delta)
  :=
  \min_{\substack{P_{\hat S\mid A_\delta}:\\
  \E[d_{\Cn}^{\delta}(A_{\delta,o},\hat {\mathsf{S}}_o)]\le D'}}
  I(\mathsf{A}_{\delta,o};\hat{\mathsf{S}}_o),
\end{equation}
where \(\mathsf{A}_{\delta,o}\sim\pi_\delta\), the reconstruction
alphabet is \(\hat S_O\), and the distortion is
\(d_{\Cn}^{\delta}(\cdot,\cdot\mid S_O)\) restricted to inputs in
\(A_\delta\).
\end{definition}

\begin{theorem}[Full rate--depth--distortion decomposition]
\label{thm:rate-depth-distortion}
For every \(\delta\ge 0\),
\begin{equation}\label{eq:rdd-surface}
  R_{\mathrm{sem}}(D,\delta;\,d_{\Cn}^{\delta},\,P_O)
  \;=\;
  P_\delta\cdot
  R^{(A_\delta)}\!\Bigl(\frac{D}{P_\delta};\,
  d_{\Cn}^{\delta},\,\pi_\delta\Bigr),
\end{equation}
with the convention that the right-hand side is \(0\) when
\(P_\delta=0\), where \(R^{(A_\delta)}\) is the
rate--distortion function of the \(\delta\)-core sub-source
\((A_\delta,\pi_\delta)\) under the distortion
\(d_{\Cn}^{\delta}(\cdot,\cdot\mid S_O)\) restricted to inputs in
\(A_\delta\).

In particular, at \(\delta=0\), since \(A_0=S_O\) and
\(d_{\Cn}^{0}=d_H\), one recovers the ordinary Hamming
rate--distortion problem on source alphabet \(S_O\) and
reconstruction alphabet \(\hat S_O\):
\[
  R_{\mathrm{sem}}(D,0;\,d_{\Cn}^{0},\,P_O)
  =
  \min_{\substack{P_{\hat S\mid S}:\\
  \E[d_H({\mathsf{S}}_o,\hat {\mathsf{S}}_o)]\le D}}
  I(\mathsf{S}_o;\hat{\mathsf{S}}_o).
\]
\end{theorem}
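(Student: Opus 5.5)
The plan is to prove the decomposition by two matching bounds. Both rest on Lemma~\ref{lem:delta-free-substitution}: every \(s_o\notin A_\delta\) incurs zero \(\delta\)-distortion against every \(\hat s\in\hat S_O\). Note that, unlike Theorem~\ref{thm:rd-decomposition}, no assumption \(\hat S_O\subseteq\Cn(S_O)\) is needed, because the lemma holds for all reconstructions. Write \(\mathsf{S}_o\sim P_O\) and \(T_\delta(s_o):=s_o\) on \(A_\delta\), \(T_\delta(s_o):=*\) otherwise, so that \(\Pr[T_\delta=*]=1-P_\delta\).

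\emph{Degenerate case \(P_\delta=0\).} The source is supported off \(A_\delta\), so any constant reconstruction \(\hat s_0\in\hat S_O\) (nonempty by Assumption~\ref{assump:finite-so}) gives zero distortion and zero mutual information. Hence both sides vanish. Henceforth assume \(P_\delta>0\).

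\emph{Converse.} Take any test channel \(P_{\hat S\mid S}\) with \(\E[d_{\Cn}^{\delta}]\le D\).
\begin{itemize}
\item By the lemma, the expected distortion equals \(P_\delta\,\E[d_{\Cn}^{\delta}(\mathsf{A}_{\delta,o},\hat{\mathsf{S}}_o)]\), where \(\mathsf{A}_{\delta,o}\sim\pi_\delta\) is fed through the restricted kernel \(Q:=P_{\hat S\mid S}(\cdot\mid a)\), \(a\in A_\delta\). So \(Q\) is feasible for \(R^{(A_\delta)}(D/P_\delta)\).
\item By data processing, \(I(\mathsf{S}_o;\hat{\mathsf{S}}_o)\ge I(T_\delta;\hat{\mathsf{S}}_o)\). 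The key step is to show \(I(T_\delta;\hat{\mathsf{S}}_o)\ge P_\delta\, I(\mathsf{A}_{\delta,o};\hat{\mathsf{S}}_o)\).
\item Let \(\mu\) be the output law under \(Q\) and \(\nu\) the output law given \(T_\delta=*\). Then
\[
I(T_\delta;\hat{\mathsf{S}}_o)=H(P_\delta\mu+(1-P_\delta)\nu)-P_\delta H(\hat{\mathsf{S}}_o\mid \mathsf{A}_{\delta,o})-(1-P_\delta)H(\nu).
\]
Concavity of entropy gives \(H(P_\delta\mu+(1-P_\delta)\nu)\ge P_\delta H(\mu)+(1-P_\delta)H(\nu)\), which yields the claimed bound \(P_\delta[H(\mu)-H(\hat{\mathsf{S}}_o\mid\mathsf{A}_{\delta,o})]\).
\end{itemize}
Minimizing over channels gives the lower bound \(P_\delta R^{(A_\delta)}(D/P_\delta)\). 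If \(D/P_\delta\) exceeds the core's maximal useful distortion, the core function is simply \(0\) and the bound remains valid.

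\emph{Achievability.} Let \(Q^\star\) attain \(R^{(A_\delta)}(D/P_\delta)\); the minimum exists by compactness of the finite-alphabet channel set and continuity of mutual information. Let \(\mu^\star\) be its output marginal. Define \(P_{\hat S\mid S}(\cdot\mid a)=Q^\star(\cdot\mid a)\) for \(a\in A_\delta\) and \(P_{\hat S\mid S}(\cdot\mid s)=\mu^\star\) for \(s\notin A_\delta\).
\begin{itemize}
\item \emph{Distortion.} The expected distortion is \(P_\delta\cdot D/P_\delta=D\), by the lemma.
\item \emph{Rate.} The output marginal is \(\mu^\star\), and the output depends on \(\mathsf{S}_o\) only through \(T_\delta\). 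Hence \(I(\mathsf{S}_o;\hat{\mathsf{S}}_o)=H(\mu^\star)-P_\delta H(Q^\star\mid\pi_\delta)-(1-P_\delta)H(\mu^\star)=P_\delta I(\mathsf{A}_{\delta,o};\hat{\mathsf{S}}_o)\), so equality is achieved.
\end{itemize}

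\emph{The case \(\delta=0\).} Proposition~\ref{prop:filtration}(i) gives \(A_0=S_O\), so \(P_0=1\) and \(\pi_0=P_O\). Remark~\ref{rem:dCn-delta-interpretation} gives \(d_{\Cn}^0=d_H\). I would verify the latter directly: \(S_O\subseteq (S_O\setminus\{s_o\})\cup\{\hat s\}\) holds iff \(\hat s=s_o\). Substituting shows that the formula reduces to the Hamming problem.

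The main obstacle is the converse inequality \(I(T_\delta;\hat{\mathsf{S}}_o)\ge P_\delta I(\mathsf{A}_{\delta,o};\hat{\mathsf{S}}_o)\). It must hold without any disjointness assumption, and this is exactly what the concavity-of-mixture argument above supplies. The remaining effort is bookkeeping: checking that the distortion constraint rescales exactly by \(P_\delta\).
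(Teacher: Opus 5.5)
Your proposal is correct and follows the paper's proof essentially step for step: the degenerate case is handled by constant reconstruction, and Lemma~\ref{lem:delta-free-substitution} rescales the distortion constraint to $D/P_\delta$. The converse uses $T_\delta$, data processing and concavity of entropy, and achievability extends an optimal core channel by its own output marginal on the $\delta$-redundant symbols. One minor nit is that in achievability the expected distortion is at most $D$ rather than exactly $D$.
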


\begin{proof}[Proof sketch]
By Lemma~\ref{lem:delta-free-substitution}, every symbol outside
\(A_\delta\) is distortion-free under \(d_{\Cn}^{\delta}\), so both
the distortion constraint and the mutual-information minimization
reduce to the \(\delta\)-core.
The lower bound follows by introducing
\(T_\delta(s_o)=s_o\) on \(A_\delta\) and \(T_\delta(s_o)=*\) on its
complement, exactly as in the proof of
Theorem~\ref{thm:rd-decomposition}.
Achievability is obtained by extending an optimal test channel for the
\(\delta\)-core sub-source to the \(\delta\)-redundant part using a
common output law.
At \(\delta=0\), one has \(A_0=S_O\) and \(d_{\Cn}^{0}=d_H\), so the
formula reduces to the classical Hamming rate--distortion function
\cite{shannon1959coding,cover2006elements,csiszar2011information}.
Full details are given in
Appendix~\ref{app:proof-rate-depth-distortion}.
\end{proof}

\begin{remark}[Interpretation]
\label{rem:depth-interpretation}
The family \(\{A_\delta\}_{\delta\ge 0}\) provides a quantitative
depth-for-rate exchange.
As \(\delta\) increases, more states become reconstructible within
the allowed depth budget and therefore disappear from the
effective source.
At \(\delta=0\), nothing is free and the theory collapses to
classical source coding.
At \(\delta\ge \mathsf{D_d}\), all eventually derivable shortcuts
become free and, under the standing assumptions guaranteeing the
large-\(\delta\) endpoint identification, one recovers the deductive
compression law of Section~\ref{sec:exact-zero-rate}.
\end{remark}

\begin{remark}[Endpoint caution]
\label{rem:depth-endpoint-caution}
The bounded-inference distortion \(d_{\Cn}^{\delta}\) is a
binary feasibility distortion, whereas the unconstrained closure
distortion \(d_{\Cn}\) of Sections~\ref{sec:deductive-source} and
\ref{sec:core-decomposition} is a Jaccard-type distortion.
Thus Theorem~\ref{thm:rate-depth-distortion} should be interpreted
as the exact bounded-inference analogue of the
rate--distortion function.
Its zero-distortion endpoint agrees with
Theorem~\ref{thm:tight-zero-rate}, but for \(D>0\) it is a distinct
operational distortion family.
\end{remark}

\section{Consequences: Separation, Converse Bounds, and One Heterogeneous Extension}
\label{sec:consequences}

The preceding sections are source-coding results.
We now record three consequences.
First, because the relevant distortions are single-letter and
bounded, the classical source--channel separation theorem applies
verbatim
\cite{shannon1959coding,cover2006elements,csiszar2011information}.
Second, the core/redundant decomposition yields sharper converse
bounds than symbolwise fidelity.
Third, the same zero-distortion mechanism extends to a restricted
reconstruction alphabet, giving a compact heterogeneous-receiver
extension.

\begin{remark}[Degenerate core cases]
\label{rem:degenerate-core-cases}
If \(A=\varnothing\) and
\(\hat S_O\cap\Cn(S_O)\neq\varnothing\), then a constant
reconstruction to any
\(\hat s^\star\in \hat S_O\cap\Cn(S_O)\) achieves zero closure
distortion, so the effective zero-distortion rate is \(0\).
If \(|A|=1\), then all entropy and blocklength expressions involving
the core reduce to their obvious one-symbol values, and any term of
the form \(\epsilon\log(|A|-1)\) should be read through the trivial
one-symbol case rather than literally as a logarithm.
Accordingly, the nontrivial converse and operational bounds below are
primarily of interest when \(|A|\ge 2\).
\end{remark}

\subsection{Separation and Operational Blocklength}
\label{subsec:separation}

\begin{remark}[Two operational regimes]
\label{rem:two-operational-regimes}
The results in this section use two standard operational viewpoints.
In the \emph{source-coding regime}, one transmits an i.i.d.\ block of
statement symbols drawn from the fixed deductive source
\((S_O,P_O)\), leading to rate--distortion and separation statements.
In the \emph{message-set regime}, one treats elements of a finite set
\(\mathcal M\subseteq S_O\) as equiprobable messages and studies the
minimum blocklength needed for reliable transmission.
The former is the asymptotic rate--distortion viewpoint; the latter is
the finite-message operational viewpoint underlying the converse and
achievability bounds below.
\end{remark}

\begin{definition}[Block code and reliability criteria]
\label{def:semantic-codebook}
Let \(W:\mathcal X\rightsquigarrow\mathcal Y\) be a discrete
memoryless channel with Shannon capacity \(C(W)\).
An \emph{\((n,M)\) block code} for message set
\(\mathcal M\subseteq S_O\) consists of an encoder
\(f_n:\mathcal M\to\mathcal X^n\) and a decoder
\(g_n:\mathcal Y^n\to\hat S_O\), where \(|\mathcal M|=M\).

When message \(m\in\mathcal M\) is sent, write
\(\hat{\mathsf S}_o^{(m)}:=g_n(\hat{\mathsf Y}^n)\) for the
decoder output.
The \emph{Hamming error probability} and
\emph{closure error probability} are
\begin{align}
  P_e^{(n)}
  &:=\max_{m\in\mathcal M}
  \Pr[\hat{\mathsf S}_o^{(m)}\neq m],
  \label{eq:hamming-error-def}\\
  P_{e,\Cn}^{(n)}
  &:=\max_{m\in\mathcal M}
  \Pr[d_{\Cn}(m,\hat{\mathsf S}_o^{(m)}\mid S_O)>0].
  \label{eq:closure-error-def}
\end{align}
\end{definition}

\begin{remark}[Block source interpretation]
\label{rem:block-source-interpretation}
In the separation results below, a length-\(m\) source block means
\(m\) i.i.d.\ draws from the same deductive source \((S_O,P_O)\),
with the knowledge base \(S_O\) fixed throughout the block.
Thus the channel-coding consequences are standard source--channel
consequences for a structured but stationary finite-alphabet source.
\end{remark}

\begin{theorem}[Deductive source--channel separation]
\label{thm:sem-source-channel}
Let \(W\) be a discrete memoryless channel with capacity \(C(W)\).
Let an i.i.d.\ block of \(m\) source symbols from
\((S_O,P_O)\) be transmitted over \(n\) channel uses, and write
\(\kappa:=n/m\).

For every bounded single-letter distortion family considered in
this paper, the classical separation principle applies:
\begin{enumerate}[label=\textup{(\roman*)}]
  \item if
        \[
          R_{\mathrm{sem}}(D;\,d_{\Cn},\,P_O)
          < \kappa\,C(W),
        \]
        then distortion level \(D\) under closure distortion is
        achievable;
  \item if a distortion level \(D\) under closure distortion is
        achievable, then
        \[
          R_{\mathrm{sem}}(D;\,d_{\Cn},\,P_O)
          \le \kappa\,C(W);
        \]
  \item for every fixed \(\delta\ge 0\), if
        \[
          R_{\mathrm{sem}}(D,\delta;\,d_{\Cn}^{\delta},\,P_O)
          < \kappa\,C(W),
        \]
        then distortion level \(D\) under the bounded-inference
        distortion \(d_{\Cn}^{\delta}\) is achievable; conversely,
        if that distortion level is achievable, then
        \[
          R_{\mathrm{sem}}(D,\delta;\,d_{\Cn}^{\delta},\,P_O)
          \le \kappa\,C(W).
        \]
\end{enumerate}

In particular, in the nondegenerate case \(A\neq\varnothing\), whenever
the hypotheses of Theorem~\ref{thm:tight-zero-rate} for
\(d_{\Cn}\) and of Theorem~\ref{thm:rate-depth} for the relevant
\(\delta\) are satisfied, one has:
\begin{align}
  P_AH(\pi_A) < \kappa\,C(W)
  &\;\Longrightarrow\;
  \text{zero closure distortion is achievable,}
  \label{eq:sep-zero-unconstrained-suf}\\
  \text{if zero closure distortion is achievable, then}\qquad
  P_AH(\pi_A) &\le \kappa\,C(W),
  \label{eq:sep-zero-unconstrained-nec}\\
  P_\delta H(\pi_\delta) < \kappa\,C(W)
  &\;\Longrightarrow\;
  \text{zero \(\delta\)-distortion is achievable,}
  \label{eq:sep-zero-depth-suf}\\
  \text{if zero \(\delta\)-distortion is achievable, then}\qquad
  P_\delta H(\pi_\delta) &\le \kappa\,C(W).
  \label{eq:sep-zero-depth-nec}
\end{align}

More generally, under the hypotheses of Theorem~\ref{thm:confusable-hypergraph-exact},
zero closure distortion is achievable whenever
\[
  P_A\,H_{\Gamma_0}(\pi_A) < \kappa\,C(W),
\]
and is possible only if \(P_A\,H_{\Gamma_0}(\pi_A)\le \kappa\,C(W)\).
\end{theorem}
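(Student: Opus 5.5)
The plan is to reduce every claim to Shannon's classical source--channel separation theorem for a discrete memoryless source with a bounded single-letter distortion over a DMC. The source block is i.i.d.\ with law \(P_O\) on the finite alphabet \(S_O\) (Remark~\ref{rem:block-source-interpretation}). The reconstruction alphabet \(\hat S_O\) is finite by Assumption~\ref{assump:finite-so}. Both \(d_{\Cn}(\cdot,\cdot\mid S_O)\) and \(d_{\Cn}^{\delta}(\cdot,\cdot\mid S_O)\) are fixed functions \(S_O\times\hat S_O\to[0,1]\), since a Jaccard distance lies in \([0,1]\) and \(d_{\Cn}^\delta\) is binary. Because \(S_O\) is a fixed parameter rather than part of the random block, the problem is a standard finite-alphabet lossy joint source--channel coding problem, and the information rate--distortion functions \eqref{eq:R-sem-D} and \eqref{eq:rdd-def} are exactly the classical \(R(D)\) for these distortion measures.

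For the achievability parts, namely (i) and the first half of (iii), I would invoke the rate--distortion coding theorem with rate \(R\) chosen so that \(R_{\mathrm{sem}}(D)<R<\kappa C(W)\). This gives, for large \(m\), a source code of rate \(R\) with expected average distortion at most \(D+\epsilon\). I would then concatenate it with a channel code of rate \(R/\kappa<C(W)\) and error probability \(\to0\). Boundedness of the distortion (\(d\le1\)) means a channel-decoding error costs at most its probability in expected distortion, so the overall distortion is at most \(D+2\epsilon\).

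The main subtlety is the meaning of \emph{achievable} at exactly level \(D\), in particular \(D=0\). I would adopt the standard convention that level \(D\) is achievable if \(D+\epsilon\) is attainable for every \(\epsilon>0\) at sufficiently large blocklength. Under this convention, continuity of \(R(\cdot)\) on the interior of its domain, together with continuity from the right at \(D=0\) for finite alphabets, closes the argument.

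For the converses, namely (ii) and the second half of (iii), I would run the usual chain. Write \(S^m\) for the source block, \(X^n\) for the channel input, \(Y^n\) for the channel output and \(\hat S^m\) for the reconstruction block. Then
\(mR(D)\le \sum_i I(S_i;\hat S_i)\le I(S^m;\hat S^m)\le I(X^n;Y^n)\le nC(W)\).
This uses convexity of \(R_{\mathrm{sem}}(\cdot)\), which holds by the standard time-sharing argument and is noted in Theorem~\ref{thm:rd-decomposition}(ii), together with independence of the \(S_i\), data processing, and memorylessness of the channel. Dividing by \(m\) gives \(R(D)\le\kappa C(W)\), and continuity handles the \(\epsilon\)-slack.

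The displayed specializations then follow by substitution. Setting \(D=0\) and using Theorem~\ref{thm:tight-zero-rate} yields \eqref{eq:sep-zero-unconstrained-suf} and \eqref{eq:sep-zero-unconstrained-nec}. Using Theorem~\ref{thm:rate-depth} at the relevant \(\delta\) yields \eqref{eq:sep-zero-depth-suf} and \eqref{eq:sep-zero-depth-nec}. Using Theorem~\ref{thm:confusable-hypergraph-exact} yields the hypergraph statement; Assumption~\ref{assump:core-coverage} guarantees \(R_{\mathrm{sem}}(0)<\infty\), as in Lemma~\ref{lem:zero-distortion-infeasible-infty}. In each case I would note that ``zero distortion achievable'' is read in the asymptotic sense fixed above.
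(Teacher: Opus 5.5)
Your proposal is correct and follows the paper's route. Both proofs reduce every claim to the classical source--channel separation theorem for bounded single-letter distortions on finite alphabets, then substitute the exact zero-distortion formulas of Theorems~\ref{thm:tight-zero-rate}, \ref{thm:confusable-hypergraph-exact}, and~\ref{thm:rate-depth}. Your version additionally spells out the concatenation argument and the converse chain, and it makes explicit the asymptotic meaning of ``zero distortion achievable'' together with the right-continuity of $R(\cdot)$ at the left endpoint of its finite domain; the paper's proof leaves these points implicit.
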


\begin{proof}[Proof sketch]
Each distortion family considered here is a bounded single-letter
distortion on a finite alphabet. Therefore the standard
source--channel separation theorem for discrete memoryless sources and
channels applies.
For \(d_{\Cn}\), the relevant source-coding function is
\(R_{\mathrm{sem}}(D;\,d_{\Cn},\,P_O)\); for
\(d_{\Cn}^{\delta}\), it is
\(R_{\mathrm{sem}}(D,\delta;\,d_{\Cn}^{\delta},\,P_O)\).
The zero-distortion thresholds follow by substituting the exact
expressions from Theorems~\ref{thm:tight-zero-rate},
\ref{thm:confusable-hypergraph-exact}, and~\ref{thm:rate-depth}
into the separation inequalities.
Full details are given in Appendix~\ref{app:proof-separation}.
\end{proof}

\begin{corollary}[Depth-budget thresholds from separation]
\label{cor:semantic-nyquist}
Assume \(S_O\subseteq\hat S_O\),
Assumption~\ref{assump:atom-equals-essential}, and
Assumption~\ref{assump:delta-core-disjoint} for every \(\delta\ge 0\).
Fix \(\kappa=n/m\) and a channel \(W\) with capacity \(C(W)\). Define
\begin{align}
  \delta_{\mathrm{ach}}^*(S_O,P_O,W,\kappa)
  &:=
  \min\{\delta\ge 0:\ P_\delta H(\pi_\delta)<\kappa C(W)\},
  \label{eq:min-depth-ach}\\
  \delta_{\mathrm{nec}}^*(S_O,P_O,W,\kappa)
  &:=
  \min\{\delta\ge 0:\ P_\delta H(\pi_\delta)\le\kappa C(W)\},
  \label{eq:min-depth-nec}
\end{align}
with the convention \(\min\varnothing:=\infty\).

Then:
\begin{enumerate}[label=\textup{(\roman*)}]
  \item every \(\delta\ge \delta_{\mathrm{ach}}^*\) is sufficient for
        zero \(\delta\)-distortion transmission;
  \item if zero \(\delta\)-distortion transmission is achievable, then
        necessarily \(\delta\ge \delta_{\mathrm{nec}}^*\);
  \item one always has
        \[
          \delta_{\mathrm{nec}}^*
          \le
          \delta_{\mathrm{ach}}^*;
        \]
  \item if no \(\delta\) satisfies
        \(P_\delta H(\pi_\delta)=\kappa C(W)\), then
        \[
          \delta_{\mathrm{ach}}^*
          =
          \delta_{\mathrm{nec}}^*;
        \]
  \item if \(H(P_O)<\kappa C(W)\), then
        \[
          \delta_{\mathrm{ach}}^*=\delta_{\mathrm{nec}}^*=0;
        \]
  \item if \(P_AH(\pi_A)>\kappa C(W)\), then no finite depth budget
        suffices;
  \item if \(P_AH(\pi_A)<\kappa C(W)< H(P_O)\), then
        \[
          1\le \delta_{\mathrm{nec}}^*
          \le \delta_{\mathrm{ach}}^*
          \le \mathsf{D_d}.
        \]
\end{enumerate}
\end{corollary}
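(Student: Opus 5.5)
The plan is to set \(f(\delta):=P_\delta H(\pi_\delta)=R_{\mathrm{sem}}(0,\delta;\,d_{\Cn}^{\delta},\,P_O)\). This identity holds for every \(\delta\ge 0\) by Theorem~\ref{thm:rate-depth}, whose hypotheses follow from \(A_\delta\subseteq S_O\subseteq\hat S_O\) and Assumption~\ref{assump:delta-core-disjoint}. I would then read off each item from three facts about \(f\): it is non-increasing (Corollary~\ref{cor:rate-depth-endpoints}(iii)); \(f(0)=H(P_O)\); and \(f(\delta)=P_AH(\pi_A)\) for all \(\delta\ge\mathsf{D_d}\) (Corollary~\ref{cor:rate-depth-endpoints}(i),(ii), the latter using Assumption~\ref{assump:atom-equals-essential}). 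The link to transmission comes from Theorem~\ref{thm:sem-source-channel}, specifically \eqref{eq:sep-zero-depth-suf} and \eqref{eq:sep-zero-depth-nec}. The degenerate case \(P_\delta=0\) is covered by the convention \(f(\delta)=0\).

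Items (i) and (ii). For (i), take \(\delta\ge\delta_{\mathrm{ach}}^*\), which is finite in this case. Monotonicity gives \(f(\delta)\le f(\delta_{\mathrm{ach}}^*)<\kappa C(W)\), and \eqref{eq:sep-zero-depth-suf} then yields achievability. For (ii), achievability at \(\delta\) gives \(f(\delta)\le\kappa C(W)\) by \eqref{eq:sep-zero-depth-nec}. Hence \(\delta\) lies in the set defining \(\delta_{\mathrm{nec}}^*\), so \(\delta\ge\delta_{\mathrm{nec}}^*\).

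Items (iii) through (vi). Item (iii) holds because the strict-inequality set is contained in the non-strict one, and \(\min\) is order-reversing under inclusion; the convention \(\min\varnothing=\infty\) handles the empty case. For (iv), if no \(\delta\) has \(f(\delta)=\kappa C(W)\), the two defining sets coincide, so the two minima are equal. For (v), \(f(0)=H(P_O)<\kappa C(W)\) places \(0\) in both sets. For (vi), monotonicity together with stabilization gives \(f(\delta)\ge P_AH(\pi_A)>\kappa C(W)\) for every \(\delta\). So \(\delta_{\mathrm{nec}}^*=\infty\), and by (ii) no finite \(\delta\) achieves zero \(\delta\)-distortion.

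Item (vii). Since \(f(\mathsf{D_d})=P_AH(\pi_A)<\kappa C(W)\), we get \(\delta_{\mathrm{ach}}^*\le\mathsf{D_d}\). Since \(f(0)=H(P_O)>\kappa C(W)\), the value \(0\) is not in the non-strict set, so \(\delta_{\mathrm{nec}}^*\ge1\). The middle inequality is (iii).

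The only step needing care is the boundary bookkeeping. Item (vi) concerns all finite \(\delta\), and must use the stabilization \(A_\delta=A\) beyond \(\mathsf{D_d}\) from Proposition~\ref{prop:filtration}(iii), combined with monotonicity below \(\mathsf{D_d}\). I would also check that Theorem~\ref{thm:sem-source-channel} is invoked at each \(\delta\) separately with the same fixed \(\kappa\).
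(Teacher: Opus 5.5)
Your proposal is correct and follows the paper's proof essentially step for step. Both arguments work with $\phi(\delta)=P_\delta H(\pi_\delta)$, use its monotonicity and the endpoint values $\phi(0)=H(P_O)$ and $\phi(\delta)=P_AH(\pi_A)$ for $\delta\ge\mathsf{D_d}$ from Corollary~\ref{cor:rate-depth-endpoints}, and read off each item from Theorem~\ref{thm:sem-source-channel}\textup{(iii)} and the inclusion of the two defining sets; your explicit appeal to monotonicity in (i) and in (vi), covering $\delta<\mathsf{D_d}$, spells out a step the paper leaves implicit.
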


\begin{proof}
By Corollary~\ref{cor:rate-depth-endpoints}, the function
\[
  \phi(\delta):=P_\delta H(\pi_\delta)
\]
is non-increasing in \(\delta\) and stabilizes at
\(P_AH(\pi_A)\) for all \(\delta\ge \mathsf{D_d}\).
Part~\textup{(i)} follows from
Theorem~\ref{thm:sem-source-channel}\textup{(iii)} together with the
definition of \(\delta_{\mathrm{ach}}^*\).
Part~\textup{(ii)} follows from the converse direction in
Theorem~\ref{thm:sem-source-channel}\textup{(iii)} and the definition
of \(\delta_{\mathrm{nec}}^*\).
Part~\textup{(iii)} is immediate because
\[
  \{\delta:\phi(\delta)<\kappa C(W)\}
  \subseteq
  \{\delta:\phi(\delta)\le \kappa C(W)\}.
\]
Part~\textup{(iv)} follows because when equality never occurs, the two
sets above coincide.
Part~\textup{(v)} follows from
\(\phi(0)=H(P_O)\).
Part~\textup{(vi)} follows because
\(\phi(\delta)=P_AH(\pi_A)\) for all \(\delta\ge\mathsf{D_d}\), so if
\(P_AH(\pi_A)>\kappa C(W)\), even the stabilized large-\(\delta\)
endpoint lies above channel capability.
Part~\textup{(vii)} follows because
\(\phi(0)=H(P_O)>\kappa C(W)\), so neither threshold is attained at
\(\delta=0\), whereas
\(\phi(\mathsf{D_d})=P_AH(\pi_A)<\kappa C(W)\); monotonicity of
\(\phi\) then gives
\[
  1\le \delta_{\mathrm{nec}}^*
  \le \delta_{\mathrm{ach}}^*
  \le \mathsf{D_d}.
\]
\end{proof}

\begin{theorem}[Message-set converse bounds]
\label{thm:converse}
Let \(A=\Atom(S_O)\).
For any \((n,M)\) block code over a discrete memoryless channel
\(W\) with capacity \(C(W)\):
\begin{enumerate}[label=\textup{(\roman*)}]
  \item if \(P_e^{(n)}\le\epsilon\), then
        \begin{equation}\label{eq:classical-converse}
          \log M
          \le
          \frac{nC(W)+1}{1-\epsilon};
        \end{equation}
  \item if \(A\neq\varnothing\), \(\mathcal M=S_O\),
        \(P_{e,\Cn}^{(n)}\le\epsilon\), and
        Assumption~\textup{\ref{assump:core-disjoint}} holds, then
        \begin{equation}\label{eq:converse-closure}
          \log|A|
          \le
          \frac{nC(W)+1}{1-\epsilon}.
        \end{equation}
\end{enumerate}
\end{theorem}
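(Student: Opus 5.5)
For part (i), I would follow the textbook Fano argument. Let \(\mathsf M\) be uniform on \(\mathcal M\), send \(X^n=f_n(\mathsf M)\) over \(W^n\), and let \(\hat{\mathsf S}_o=g_n(Y^n)\). The maximal error bound \(P_e^{(n)}\le\epsilon\) implies that the average error \(\Pr[\hat{\mathsf S}_o\neq\mathsf M]\) is also at most \(\epsilon\). Fano's inequality then gives
\[
H(\mathsf M\mid\hat{\mathsf S}_o)\le 1+\epsilon\log M .
\]
Next I would use \(H(\mathsf M)=\log M\). Data processing along the chain \(\mathsf M\to X^n\to Y^n\to\hat{\mathsf S}_o\), combined with memorylessness of the channel, gives \(I(\mathsf M;\hat{\mathsf S}_o)\le I(X^n;Y^n)\le nC(W)\). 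Putting these together yields \(\log M\le nC(W)+1+\epsilon\log M\), and rearranging gives \eqref{eq:classical-converse}.

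For part (ii), I would reduce to part (i) by building a decoder for a sub-code. Restrict the message set to \(\mathcal M'=A\subseteq S_O=\mathcal M\), keeping the same encoder \(f_n\). The decoder is post-composed with a map \(\phi:\hat S_O\to A\) defined by \(\phi(\hat s)=a\) if \(\hat s\in R_{\Cn}(a)\), and an arbitrary fixed core element otherwise. Assumption~\ref{assump:core-disjoint} makes the sets \(R_{\Cn}(a)\), \(a\in A\), pairwise disjoint, so \(\phi\) is well defined. For each \(a\in A\), the event \(\{d_{\Cn}(a,\hat{\mathsf S}_o^{(a)}\mid S_O)=0\}\) is exactly \(\{\hat{\mathsf S}_o^{(a)}\in R_{\Cn}(a)\}\), and on this event \(\phi(\hat{\mathsf S}_o^{(a)})=a\). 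Hence the Hamming error of the new \((n,|A|)\) code \((f_n|_A,\phi\circ g_n)\) is at most \(\max_{a\in A}\Pr[d_{\Cn}(a,\hat{\mathsf S}_o^{(a)})>0]\le P_{e,\Cn}^{(n)}\le\epsilon\). One small point: the reconstruction alphabet of the new code is \(A\), not \(\hat S_O\). Part (i) never used the reconstruction alphabet beyond Fano's inequality with alphabet size \(M\), so it applies and gives \eqref{eq:converse-closure}.

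The main obstacle is making the reduction in (ii) rigorous. This means checking that \(\phi\) is well defined, which is exactly where disjointness is needed, and that restricting the maximum over messages to \(A\subseteq S_O\) only lowers the closure error. I would also note Remark~\ref{rem:degenerate-core-cases}: when \(|A|=1\), the inequality is trivial since \(\log|A|=0\).
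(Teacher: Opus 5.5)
Your proof is correct and matches the paper's route: (i) is the standard Fano converse, and (ii) restricts to the core messages and uses the pairwise disjointness of the sets \(R_{\Cn}(a)\) to turn closure-correct decoding into ordinary \(|A|\)-ary message recovery before applying Fano. Your explicit post-composed decoder \(\phi\circ g_n\), together with your remark that part (i) does not depend on the reconstruction alphabet, spells out the reduction that the paper states only in words.
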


\begin{proof}
Part~\textup{(i)} is the standard Fano converse
\cite[Ch.~2]{cover2006elements}\cite[Ch.~3]{csiszar2011information}.

For~\textup{(ii)}, let \(M_A\) be uniformly distributed on \(A\) and
consider the induced \(|A|\)-message subproblem. Since
\(P_{e,\Cn}^{(n)}\le\epsilon\) in the max-error sense, the average
closure error probability for \(M_A\) is also at most \(\epsilon\)
by definitions of max-error and average error
\cite[Ch.~7]{cover2006elements}\cite[Ch.~6]{csiszar2011information}.
Under Assumption~\ref{assump:core-disjoint}, the zero-distortion
reconstruction sets \(\{R_{\Cn}(a):a\in A\}\) are pairwise disjoint,
so a decoder output that is closure-correct for one core symbol cannot
be closure-correct for another. Thus closure-correct decoding of core
symbols induces an ordinary \(|A|\)-ary message recovery problem
\cite[Ch.~7]{cover2006elements}\cite[Ch.~11]{csiszar2011information},
and Fano's inequality yields~\eqref{eq:converse-closure}.
\end{proof}

\begin{theorem}[Closure-reliable achievability]
\label{thm:achievability}
Let \(W\) be a discrete memoryless channel with capacity \(C(W)\).
Then:
\begin{enumerate}[label=\textup{(\roman*)}]
  \item if \(S_O\subseteq\hat S_O\), then the full source \(S_O\) can
        be transmitted Hamming-reliably whenever
        \[
          \frac{\log|S_O|}{n}<C(W);
        \]
  \item if \(A\neq\varnothing\) and \(A\subseteq\hat S_O\), then the
        full source \(S_O\) can be transmitted
        closure-reliably whenever
        \[
          \frac{\log|A|}{n}<C(W).
        \]
\end{enumerate}
\end{theorem}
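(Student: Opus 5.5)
Part~(i) is the classical channel coding theorem, and part~(ii) will be reduced to it by coding only over the core.

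\textbf{Part (i).} We invoke the channel coding theorem for discrete memoryless channels with maximal error probability \cite[Ch.~7]{cover2006elements}\cite[Ch.~6]{csiszar2011information}. Whenever the rate \(\log|S_O|/n\) stays below \(C(W)\), there exist \((n,|S_O|)\) codes whose maximal error probability tends to zero as \(n\to\infty\). Here ``whenever'' is read asymptotically: \(n\) grows with the message set fixed, so the rate condition holds eventually. Index the codewords by the elements of \(S_O\). Let the decoder output the decoded index as a statement in \(\hat S_O\), which is legitimate because \(S_O\subseteq\hat S_O\). Then \(P_e^{(n)}\to 0\) in the sense of \eqref{eq:hamming-error-def}.

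\textbf{Part (ii).} We transmit only the core and exploit Proposition~\ref{prop:redundant-free}.
\begin{enumerate}[label=\textup{(\alph*)}]
  \item Take a channel code for the message set \(A\) with maximal error \(\epsilon_n\to 0\). It exists because \(\log|A|/n<C(W)\), by the same theorem. When \(|A|=1\), a single constant codeword suffices, as in Remark~\ref{rem:degenerate-core-cases}.
  \item Define the encoder on \(\mathcal M=S_O\) by \(f_n(a):=c(a)\) for \(a\in A\), and \(f_n(j):=c(a_0)\) for \(j\in J\), where \(a_0\in A\) is fixed and \(c(\cdot)\) is the codeword map from (a).
  \item The decoder outputs the decoded core index, which is an element of \(A\subseteq\hat S_O\).
  \item For a core message \(a\), a correct decoding gives \(\hat s=a\), hence \(d_{\Cn}(a,a\mid S_O)=0\). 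So \(\Pr[d_{\Cn}>0]\le\epsilon_n\).
  \item For a redundant message \(j\in J\), every decoder output lies in \(A\subseteq\Cn(S_O)\). The inclusion \(A\subseteq\Cn(S_O)\) follows from reflexivity and Proposition~\ref{prop:atom-core-correct}(i). Proposition~\ref{prop:redundant-free} then gives zero closure distortion with probability one, whatever the channel does.
\end{enumerate}
Taking the maximum over \(\mathcal M\) yields \(P_{e,\Cn}^{(n)}\le\epsilon_n\to 0\), as defined in \eqref{eq:closure-error-def}.

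The only delicate points are interpretive. The condition ``whenever \(\log|A|/n<C(W)\)'' must be read as a statement about sufficiently large \(n\), with arbitrarily small error. Step (e) must cover the case where the decoder outputs something other than \(a_0\). That case is harmless because every possible output lies in \(A\), and all of \(A\) is zero-distortion for redundant symbols.
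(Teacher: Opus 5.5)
Your proposal is correct and follows the paper's proof essentially verbatim. Part~(i) is the channel coding theorem with codewords indexed by $S_O\subseteq\hat S_O$, and part~(ii) codes only the core, sends every redundant symbol with the codeword of a fixed $a_0\in A$, and uses $A\subseteq\Cn(S_O)$ with Proposition~\ref{prop:redundant-free} so that every decoder output is closure-correct for redundant inputs; your asymptotic reading of ``whenever'' and your handling of $|A|=1$ are reasonable clarifications that the paper leaves implicit.
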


\begin{proof}
Part~\textup{(i)} is Shannon channel coding for a message set of
size \(|S_O|\); see, e.g.,
\cite[Ch.~7]{cover2006elements}\cite[Ch.~6]{csiszar2011information}.

For~\textup{(ii)}, code only the core \(A\).
Choose an \((n,|A|)\) reliable channel code for \(A\)
\cite[Ch.~7]{cover2006elements}\cite[Ch.~6]{csiszar2011information}.
If the transmitted symbol is \(a\in A\), decode it as \(a\).
If the transmitted symbol is \(j\in J\), encode it using the
codeword of any fixed \(a_0\in A\), and decode to whatever core
symbol the channel decoder outputs.
Closure correctness for \(a\in A\) follows from correct decoding,
and closure correctness for \(j\in J\) follows from
Proposition~\ref{prop:redundant-free}.
\end{proof}

\begin{remark}[Heuristic operational deductive compression ratio]
\label{rem:min-blocklength}
Assume \(A\neq\varnothing\),
Assumption~\textup{\ref{assump:core-disjoint}}, and
\(S_O\subseteq\hat S_O\).
Ignoring finite-blocklength backoff terms and the additive constant in
Fano's inequality, Theorems~\ref{thm:converse}
and~\ref{thm:achievability} suggest the benchmark blocklengths
\[
  n_H^\dagger
  :=
  \left\lceil \frac{\log|S_O|}{C(W)} \right\rceil,
  \qquad
  n_{\Cn}^\dagger
  :=
  \left\lceil \frac{\log|A|}{C(W)} \right\rceil.
\]
Accordingly, the corresponding heuristic deductive compression ratio is
\[
  \frac{n_{\Cn}^\dagger}{n_H^\dagger}
  \approx
  \frac{\log|A|}{\log|S_O|},
\]
which is strictly smaller than \(1\) whenever \(J\neq\varnothing\).
This interpretation is asymptotic and operational, rather than an
exact finite-blocklength theorem.
\end{remark}

\subsection{Closure-Adapted Converse Bounds}
\label{subsec:converse-bounds}

\begin{theorem}[Closure-adapted Fano-type lower bound]
\label{thm:semantic-fano-tight}
Let \((S_O,P_O)\) be full-support, and assume that the reconstruction
alphabet satisfies
\[
  \hat S_O\subseteq \Cn(S_O)\cap\mathbb{S}_O.
\]
Let \(\hat{\mathsf S}_o\) be any reconstruction random variable on
\(\hat S_O\), and define
\[
  \epsilon_{\Cn}
  :=
  \Pr[d_{\Cn}({\mathsf{S}}_o,\hat {\mathsf{S}}_o\mid S_O)>0].
\]
If \(A=\varnothing\), then \(\epsilon_{\Cn}=0\) and the conclusion is
trivial.
Otherwise,
\[
  \epsilon_{\Cn}
  =
  \sum_{a\in A}P_O(a)\,
  \Pr[\hat {\mathsf{S}}_o\notin R_{\Cn}(a)\mid {\mathsf{S}}_o=a].
\]
Under Assumption~\ref{assump:core-disjoint},
\begin{equation}\label{eq:sem-fano-tight}
  I({\mathsf{S}}_o;\hat {\mathsf{S}}_o)
  \;\ge\;
  P_A\,H(\pi_A)
  -h_b(\epsilon_{\Cn})
  -\epsilon_{\Cn}\log(|A|-1),
\end{equation}
for \(|A|\ge 2\), while for \(|A|=1\) the bound reduces to
\(I({\mathsf{S}}_o;\hat {\mathsf{S}}_o)\ge 0\).

When \(J=\varnothing\), the inequality takes the same Fano form on the
full source alphabet \(S_O\); if, in addition,
\(R_{\Cn}(s)=\{s\}\) for every \(s\in S_O\), it is exactly the
classical Fano inequality.
\end{theorem}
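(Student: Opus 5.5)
The plan is to reduce everything to a Fano argument on the core sub-source, conditioned on the event that the source symbol lies in \(A\), and then to rescale the conditional error back to \(\epsilon_{\Cn}\) using concavity of \(h_b\).

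\emph{Error decomposition.} Since \(\hat S_O\subseteq\Cn(S_O)\), Proposition~\ref{prop:redundant-free} gives \(d_{\Cn}(j,\hat s\mid S_O)=0\) for every \(j\in J\) and every \(\hat s\in\hat S_O\). Hence redundant symbols never contribute to the closure error, and
\[
\epsilon_{\Cn}=\sum_{a\in A}P_O(a)\Pr[\hat{\mathsf S}_o\notin R_{\Cn}(a)\mid \mathsf S_o=a].
\]
In particular \(\epsilon_{\Cn}=0\) when \(A=\varnothing\). This settles the displayed identity and the trivial case.

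\emph{Reduction to the core.} As in the converse of Theorem~\ref{thm:tight-zero-rate}, I would introduce \(T(s_o)=s_o\) on \(A\) and \(T(s_o)=*\) on \(J\), together with \(E:=\mathbf 1[T\neq *]\). Because \(T\) is a function of \(\mathsf S_o\) and \(E\) is a function of \(T\), data processing and the chain rule give
\[
I(\mathsf S_o;\hat{\mathsf S}_o)\ge I(T;\hat{\mathsf S}_o)=I(E;\hat{\mathsf S}_o)+I(T;\hat{\mathsf S}_o\mid E)\ge P_A\, I(T;\hat{\mathsf S}_o\mid E=1).
\]
The last step uses that \(T\) is constant given \(E=0\). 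Conditional on \(E=1\), the variable \(T\) is distributed as \(\mathsf A_o\sim\pi_A\).

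\emph{Fano on the core.} Under Assumption~\ref{assump:core-disjoint}, each \(\hat s\) lies in at most one set \(R_{\Cn}(a)\). I would define the estimator \(g(\hat s):=a\) if \(\hat s\in R_{\Cn}(a)\), and \(g(\hat s):=a_0\) (a fixed core symbol) otherwise. Then \(g(\hat{\mathsf S}_o)\neq a\) implies \(\hat{\mathsf S}_o\notin R_{\Cn}(a)\), so the conditional error probability \(\epsilon':=\Pr[g(\hat{\mathsf S}_o)\neq T\mid E=1]\) satisfies \(\epsilon'\le \epsilon_{\Cn}/P_A\).

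The classical Fano inequality for an \(|A|\)-ary variable, together with data processing through \(g\), gives
\[
I(T;\hat{\mathsf S}_o\mid E=1)\ge H(\pi_A)-h_b(\epsilon')-\epsilon'\log(|A|-1).
\]
One subtlety is that Fano's bound is monotone in \(\epsilon'\) only for \(\epsilon'\le 1-1/|A|\). I would handle this either by working directly with \(\epsilon'':=\epsilon_{\Cn}/P_A\), which equals the exact conditional closure error and upper-bounds \(\Pr[g\neq T\mid E=1]\), or by noting that outside this range the right-hand side of \eqref{eq:sem-fano-tight} is already nonpositive.

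\emph{Rescaling, the main obstacle.} The remaining step is to show
\[
P_A\bigl[h_b(\epsilon_{\Cn}/P_A)+(\epsilon_{\Cn}/P_A)\log(|A|-1)\bigr]\le h_b(\epsilon_{\Cn})+\epsilon_{\Cn}\log(|A|-1).
\]
The logarithmic term matches exactly. For the entropy term, concavity gives
\[
h_b(\epsilon_{\Cn})=h_b\bigl(P_A\cdot\tfrac{\epsilon_{\Cn}}{P_A}+(1-P_A)\cdot 0\bigr)\ge P_A\,h_b(\epsilon_{\Cn}/P_A).
\]
This yields \eqref{eq:sem-fano-tight}. When \(|A|=1\), we have \(H(\pi_A)=0\) and the bound is just nonnegativity of mutual information.

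\emph{Endpoint cases.} When \(J=\varnothing\), we have \(T=\mathsf S_o\) and \(P_A=1\), so the argument is Fano on the full alphabet \(S_O\). If moreover \(R_{\Cn}(s)=\{s\}\), then \(g\) is the identity on \(S_O\) and \(\epsilon_{\Cn}=\Pr[\hat{\mathsf S}_o\neq\mathsf S_o]\), which recovers the classical Fano inequality verbatim.
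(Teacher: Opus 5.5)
\emph{Verdict: genuine gap.} The step that fails is exactly the subtlety you flagged. In fact the inequality \eqref{eq:sem-fano-tight} is false in the high-error regime, so no argument can close it there.

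Your route is essentially the paper's: restrict to the core, build a decision rule from the disjoint sets \(R_{\Cn}(a)\), then rescale with \(P_A h_b(\epsilon_{\Cn}/P_A)\le h_b(\epsilon_{\Cn})\). The paper conditions on \(B=\mathbf{1}[\mathsf S_o\in A]\) and uses a rule into \(A\cup\{\mathsf e\}\) rather than your fallback \(a_0\). Neither of your two resolutions works. Passing from \(\epsilon'\) to \(\epsilon''=\epsilon_{\Cn}/P_A\ge\epsilon'\) needs \(f(p):=h_b(p)+p\log(|A|-1)\) to be nondecreasing on \([\epsilon',\epsilon'']\), which is not guaranteed once \(\epsilon''>1-1/|A|\). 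Using \(\epsilon''\) ``directly'' means treating the closure-error event as the Fano error of a rule that may output an erasure (an output lying in no \(R_{\Cn}(a)\)); for that rule the valid constant is \(\log|A|\), not \(\log(|A|-1)\). Your fallback claim that the right-hand side is nonpositive in this regime is false. Take \(\mathbb{S}_O=\{a_1,a_2,c\}\) with the single rule \(c\leftarrow a_1,a_2\), \(S_O=\{a_1,a_2\}\) with uniform \(P_O\), and \(\hat S_O=\Cn(S_O)=\{a_1,a_2,c\}\). Then \(A=\{a_1,a_2\}\) and \(J=\varnothing\). Also \(R_{\Cn}(a_i)=\{a_i\}\), because neither \(\{a_{3-i}\}\) nor \(\{a_{3-i},c\}\) generates \(\Cn(S_O)\), so Assumption~\ref{assump:core-disjoint} holds. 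The constant reconstruction \(\hat{\mathsf S}_o\equiv c\) gives \(\epsilon_{\Cn}=1\) and \(I(\mathsf S_o;\hat{\mathsf S}_o)=0\). The right-hand side, however, equals \(1-h_b(1)-\log 1=1\) bit.

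The same example, which has \(J=\varnothing\) and \(R_{\Cn}(s)=\{s\}\), also refutes the statement's closing ``exactly classical Fano'' sentence. That sentence, like your endpoint remark that \(g\) is the identity, tacitly needs \(\hat S_O=S_O\). The paper's own proof hides the same defect. It applies Fano with \(\log(|A|-1)\) to \(g:\hat S_O\to A\cup\{\mathsf e\}\), which is illegitimate whenever \(\mathsf e\) has positive probability; in the example, \(H(\mathsf S_o\mid g(\hat{\mathsf S}_o))=1>0=h_b(1)+1\cdot\log 1\).

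Your argument does establish three valid statements:
\begin{enumerate}
\item The stated bound whenever \(\epsilon_{\Cn}\le P_A(1-1/|A|)\), since then \(\epsilon'\le\epsilon''\) both lie in the increasing range of \(f\).
\item The stated bound without restriction when core reconstructions lie almost surely in \(\bigcup_{a\in A}R_{\Cn}(a)\), since then \(\epsilon'=\epsilon''\).
\item An unconditional bound with \(\log|A|\) in place of \(\log(|A|-1)\), via the erasure form of Fano.
\end{enumerate}
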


\begin{proof}[Proof sketch]
Because \(\hat S_O\subseteq\Cn(S_O)\), every redundant input is
automatically closure-correct by Proposition~\ref{prop:redundant-free},
so closure errors can arise only on the core \(A\).
Conditioning on the event \(B:=\mathbf{1}[{\mathsf{S}}_o\in A]\), one reduces the
problem to a \(|A|\)-ary source with law \(\pi_A\).
Under Assumption~\ref{assump:core-disjoint}, the sets
\(\{R_{\Cn}(a):a\in A\}\) are pairwise disjoint, so the decoder output
induces an ordinary decision rule on \(A\cup\{\mathsf e\}\).
Applying the classical Fano inequality to the conditioned core source
and then multiplying by \(P_A\) yields the bound.
Full details are given in
Appendix~\ref{app:proof-semantic-fano-tight}.
\end{proof}

\begin{remark}[Why the bound is sharper than the classical one]
\label{rem:fano-operational}
The reference term in
\eqref{eq:sem-fano-tight} is \(P_AH(\pi_A)\), not \(H(P_O)\).
Thus the converse is aligned with the effective source complexity
under closure fidelity rather than with the raw symbol entropy of
the stored source.
This is precisely the right notion of converse once redundant
states are free.
\end{remark}

\subsection{One Heterogeneous Extension}
\label{subsec:heterogeneous-extension}

We now consider a receiver whose reconstruction alphabet is a
restricted subset \(V\subseteq\mathbb{S}_O\).
This captures vocabulary mismatch or heterogeneous receivers.
The question is whether the exact zero-distortion rate remains the
same when the decoder is no longer allowed to reproduce arbitrary
elements of the sender's alphabet.

\begin{definition}[Restricted-alphabet semantic rate--distortion function]
\label{def:restricted-rd}
For \(V\subseteq\mathbb{S}_O\), define
\begin{equation}\label{eq:restricted-rd}
  R_{\mathrm{sem}}^{(V)}(D;\,d_{\Cn},\,P_O)
  :=
  \min_{\substack{P_{\hat S\mid S}:\\
  \hat {\mathsf{S}}_o\in V\ \text{a.s.}\\
  \E[d_{\Cn}({\mathsf{S}}_o,\hat {\mathsf{S}}_o)]\le D}}
  I(\mathsf{S}_o;\hat{\mathsf{S}}_o).
\end{equation}
\end{definition}

\begin{remark}[Fully redundant restricted-alphabet case]
\label{rem:fully-redundant-restricted}
If \(A=\varnothing\) and \(V\cap\Cn(S_O)\neq\varnothing\), then a
constant reconstruction to any \(\hat s^\star\in V\cap\Cn(S_O)\)
achieves zero closure distortion, so
\[
  R_{\mathrm{sem}}^{(V)}(0;\,d_{\Cn},\,P_O)=0.
\]
The theorem below therefore focuses on the nondegenerate case
\(A\neq\varnothing\).
\end{remark}

\begin{theorem}[Heterogeneous zero-distortion extension]
\label{thm:heterogeneous-extension}
Let \((S_O,P_O)\) be a deductive source with nonempty core
\(A=\Atom(S_O)\), and let \(V\subseteq\mathbb{S}_O\) be a
restricted reconstruction alphabet.
For each \(a\in A\), define
\[
  R_{\Cn}^{(V)}(a)
  :=
  \{\hat s\in V:\ d_{\Cn}(a,\hat s\mid S_O)=0\}.
\]
Assume that:
\begin{enumerate}[label=\textup{(H\arabic*)}]
  \item \(R_{\Cn}^{(V)}(a)\neq\varnothing\) for every \(a\in A\);
  \item for distinct \(a_1,a_2\in A\),
        \[
          R_{\Cn}^{(V)}(a_1)\cap R_{\Cn}^{(V)}(a_2)=\varnothing.
        \]
\end{enumerate}
Then
\begin{equation}\label{eq:heterogeneous-zero-rate}
  R_{\mathrm{sem}}^{(V)}(0;\,d_{\Cn},\,P_O)
  =
  P_A\,H(\pi_A),
\end{equation}
with the convention that the right-hand side is \(0\) when
\(P_A=0\).

Conversely, if there exists \(a\in A\) with \(P_O(a)>0\) and
\(R_{\Cn}^{(V)}(a)=\varnothing\), then the zero-distortion feasible
set is empty, i.e.,
\[
  R_{\mathrm{sem}}^{(V)}(0;\,d_{\Cn},\,P_O)=\infty.
\]
\end{theorem}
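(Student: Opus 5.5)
The plan is to copy the two-sided argument of Theorem~\ref{thm:tight-zero-rate}, with $\hat S_O$ replaced by $V$ and the hypotheses (H1)--(H2) taking the place of $A\subseteq\hat S_O$ and Assumption~\ref{assump:core-disjoint}. The main change is on the achievability side. In general $A\not\subseteq V$, so a core symbol cannot be reproduced verbatim, and I will work with a selector instead.

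\emph{Selector and degenerate cases.} By (H1), fix $r(a)\in R_{\Cn}^{(V)}(a)$ for each $a\in A$. By Lemma~\ref{lem:zero-distortion-implies-in-closure}, $r(a)\in V\cap\Cn(S_O)$. By (H2), $r$ is injective on $A$. If $P_A=0$, the constant reconstruction $r(a_0)$ has zero distortion almost surely, because every $j\in J$ is zero-distortion free against any element of $\Cn(S_O)$ by Proposition~\ref{prop:redundant-free}; hence the rate is $0$. The final ``conversely'' clause follows verbatim from the argument of Lemma~\ref{lem:zero-distortion-infeasible-infty}. If $P_O(a)>0$ and $R^{(V)}_{\Cn}(a)=\varnothing$, every $V$-valued test channel has expected distortion at least $P_O(a)\min_{\hat s\in V}d_{\Cn}(a,\hat s\mid S_O)>0$, using finiteness of $V$, which follows from finiteness of $\mathbb{S}_O$. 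Hence the feasible set is empty.

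\emph{Achievability for $P_A>0$.} Set $\hat{\mathsf S}_o=r(a)$ when $\mathsf S_o=a\in A$. When $\mathsf S_o\in J$, set $\hat{\mathsf S}_o=r(\mathsf A')$ with $\mathsf A'\sim\pi_A$ drawn independently. The distortion is zero on $A$ by the choice of $r$, and zero on $J$ by Proposition~\ref{prop:redundant-free}, since $r(\mathsf A')\in\Cn(S_O)$.

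For the rate, note that $\hat{\mathsf S}_o$ depends on $\mathsf S_o$ only through the collapsed variable $T$ of Theorem~\ref{thm:tight-zero-rate}. Its marginal is $r_\#\pi_A$, which has entropy $H(\pi_A)$ because $r$ is injective. The conditional entropies are $H(\hat{\mathsf S}_o\mid T=a)=0$ and $H(\hat{\mathsf S}_o\mid T=*)=H(\pi_A)$. Therefore
\[
I(\mathsf S_o;\hat{\mathsf S}_o)=H(\pi_A)-P_JH(\pi_A)=P_AH(\pi_A),
\]
exactly as in Proposition~\ref{prop:zero-rate-ach}.

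\emph{Converse for $P_A>0$.} Take any $V$-valued test channel with zero expected distortion. By data processing,
\[
I(\mathsf S_o;\hat{\mathsf S}_o)\ge I(T;\hat{\mathsf S}_o)\ge I(T;U),
\]
where $U:=\phi(\hat{\mathsf S}_o)\in A\cup\{\mathsf e\}$ and $\phi$ maps each element of $R^{(V)}_{\Cn}(a)$ to $a$ and everything else to $\mathsf e$. The map $\phi$ is well defined by (H2). Zero distortion forces $U=a$ almost surely on $\{\mathsf S_o=a\}$ whenever $P_O(a)>0$, so $H(U\mid T)=P_J H(Q)$, where $Q$ is the law of $U$ given $T=*$. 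The law of $U$ is the mixture $P_A\pi_A+P_JQ$. Concavity of entropy then gives
\[
I(T;U)=H(P_A\pi_A+P_JQ)-P_JH(Q)\ge P_AH(\pi_A).
\]

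The only step needing care is this converse. One must check that the cell map is well defined on $V$, including outputs in no cell, and that the mixture inequality is applied with $\pi_A$ and $Q$ on a common alphabet $A\cup\{\mathsf e\}$. Both points are routine once (H2) is invoked.
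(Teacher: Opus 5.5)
Your proof is correct and follows the paper's argument: the same injective selector from (H1)--(H2), the same constant reconstruction when $P_A=0$, the same test channel sending $J$ to $r_\#\pi_A$, and a converse through the collapsed variable $T$ plus concavity of entropy. The only difference is cosmetic: you first pass by data processing to the cell label $U=\phi(\hat{\mathsf{S}}_o)\in A\cup\{\mathsf{e}\}$ (the same map the paper uses in its Fano-type bound), whereas the paper reuses the converse of Theorem~\ref{thm:tight-zero-rate} directly on $\hat{\mathsf{S}}_o$ via the disjoint-support identity $H(\bar\pi)=H(\pi_A)+\sum_{a}\pi_A(a)H(P_a)$; both use (H2) in the same way.
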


\begin{proof}[Proof sketch]
If \(P_A=0\), the claim is immediate by constant reconstruction to any
representative of a core symbol guaranteed by \textup{(H1)}. Assume
henceforth that \(P_A>0\).

Choose, for each \(a\in A\), one representative
\(\phi(a)\in R_{\Cn}^{(V)}(a)\).
Assumption \textup{(H2)} makes \(\phi\) injective, so the core can be
coded exactly as in Proposition~\ref{prop:zero-rate-ach}, while the
redundant part can be mapped to a common distribution supported on
\(\phi(A)\). This gives zero distortion and mutual information
\(P_AH(\pi_A)\).

For the converse, any zero-distortion test channel supported on \(V\)
must map each \(a\in A\) into \(R_{\Cn}^{(V)}(a)\), and these sets are
pairwise disjoint by \textup{(H2)}. Thus the converse proof of
Theorem~\ref{thm:tight-zero-rate} applies verbatim after replacing
\(R_{\Cn}(a)\) by \(R_{\Cn}^{(V)}(a)\). Full details are given in
Appendix~\ref{app:proof-heterogeneous-extension}.
\end{proof}

\begin{corollary}[Core coverage as a simple sufficient condition]
\label{cor:heterogeneous-core-coverage}
If \(A\subseteq V\) and Assumption~\ref{assump:core-disjoint}
holds, then
\[
  R_{\mathrm{sem}}^{(V)}(0;\,d_{\Cn},\,P_O)
  =
  P_AH(\pi_A),
\]
with the convention that the right-hand side is \(0\) when
\(P_A=0\).
\end{corollary}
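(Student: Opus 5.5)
The plan is to reduce Corollary~\ref{cor:heterogeneous-core-coverage} to Theorem~\ref{thm:heterogeneous-extension} by checking its two hypotheses (H1) and (H2) for the restricted alphabet \(V\). Some care is needed with how \(R_{\Cn}\) depends on the ambient reconstruction alphabet. The distortion \(d_{\Cn}(a,\hat s\mid S_O)\) is defined for any \(\hat s\in\mathbb{S}_O\). So I will read Assumption~\ref{assump:core-disjoint} as a statement about the zero-distortion sets computed over the reconstruction alphabet relevant to the receiver. If \(\hat S_O\) is taken to be \(V\), then \(R_{\Cn}^{(V)}(a)=R_{\Cn}(a)\) literally. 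If \(V\subseteq\hat S_O\), then \(R_{\Cn}^{(V)}(a)=R_{\Cn}(a)\cap V\).

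\textbf{Step 1: check (H1).} Fix \(a\in A\). Since \(A\subseteq V\), the symbol \(a\) itself lies in \(V\). The two closures in Definition~\ref{def:closure-distortion} coincide for \(\hat s=a\), since both equal \(\Cn(S_O)\). Hence \(d_{\Cn}(a,a\mid S_O)=0\), so \(a\in R_{\Cn}^{(V)}(a)\). This gives nonemptiness for every \(a\in A\), not only for those with \(P_O(a)>0\).

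\textbf{Step 2: check (H2).} For distinct \(a_1,a_2\in A\), I expect \(R_{\Cn}^{(V)}(a_i)\subseteq R_{\Cn}(a_i)\), because restricting to \(V\) only shrinks the reconstruction set. Assumption~\ref{assump:core-disjoint} then gives
\[
  R_{\Cn}^{(V)}(a_1)\cap R_{\Cn}^{(V)}(a_2)
  \subseteq R_{\Cn}(a_1)\cap R_{\Cn}(a_2)=\varnothing.
\]
This is the one point needing care. If \(V\) contains elements outside \(\hat S_O\), the inclusion \(R_{\Cn}^{(V)}(a_i)\subseteq R_{\Cn}(a_i)\) only holds once Assumption~\ref{assump:core-disjoint} is understood with \(\hat S_O\) replaced by \(V\) (or by an alphabet containing \(V\)). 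I will state this reading explicitly. Under it, (H2) is immediate.

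\textbf{Step 3: apply the theorem.} With (H1) and (H2) established, Theorem~\ref{thm:heterogeneous-extension} yields \(R_{\mathrm{sem}}^{(V)}(0;\,d_{\Cn},\,P_O)=P_AH(\pi_A)\). The convention for \(P_A=0\) is already covered inside that theorem: a constant reconstruction to any \(a_0\in A\subseteq V\) works, because \(a_0\in\Cn(S_O)\) and Proposition~\ref{prop:redundant-free} applies. So no further case analysis is needed.

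The only real obstacle is the alphabet-relativity issue in Step~2. Everything else is a direct verification.
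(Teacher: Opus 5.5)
Your proposal matches the paper's proof. You verify \textup{(H1)} via \(a\in R_{\Cn}^{(V)}(a)\), obtain \textup{(H2)} by restricting the disjoint sets of Assumption~\ref{assump:core-disjoint} to \(V\), and invoke Theorem~\ref{thm:heterogeneous-extension}.

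Your Step~2 caveat is genuinely needed. The paper's phrase ``after restriction to \(V\)'' silently assumes \(V\subseteq\hat S_O\), and without that the corollary fails. For example, take Example~\ref{ex:confusable-core} with \(\hat S_O=S_O\), where the sets \(R_{\Cn}(\cdot)\) are singletons and hence disjoint, and \(V=S_O\cup\{r\}\). Under the uniform source, mapping \(a_1,a_2\mapsto r\) and \(b\mapsto b\) gives rate \(h_b(1/3)<\log 3=P_AH(\pi_A)\).
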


\begin{proof}
If \(A\subseteq V\), then each \(a\in A\) itself belongs to
\(R_{\Cn}^{(V)}(a)\), so \textup{(H1)} holds.
Assumption~\ref{assump:core-disjoint} implies \textup{(H2)} after
restriction to \(V\).
Apply Theorem~\ref{thm:heterogeneous-extension}.
\end{proof}

\begin{remark}[Meaning of the heterogeneous extension]
\label{rem:heterogeneous-meaning}
Theorem~\ref{thm:heterogeneous-extension} shows that vocabulary
heterogeneity matters only through the availability of
zero-distortion representatives for core symbols.
If every core element still has a distinct closure-equivalent
representative in the receiver's alphabet, then the exact
zero-distortion rate is unchanged.
If even one positive-probability core element has no such
representative, zero distortion becomes impossible.
\end{remark}

\section{Conclusion}
\label{sec:conclusion}

This paper studied lossy compression of a finite knowledge source equipped with a fixed proof system, with fidelity measured by preservation of deductive closure rather than by symbolwise equality. The central conclusion is that, under closure fidelity, the effective source complexity is governed by the irredundant core of the knowledge base rather than by the full stored source. Under an explicit disjointness condition on zero-distortion reconstruction sets, we showed that the zero-distortion rate is exactly \(P_A H(\pi_A)\), and that the full rate--distortion function decomposes into a contribution from
that core alone whenever the reconstruction alphabet is contained in
\(\Cn(S_O)\). We further derived an exact rate--depth--distortion characterization
under a bounded inference-depth budget; under an
additional order-robustness assumption ensuring that the chosen core
coincides with the essential set, this yields a sharp interpolation
between classical symbolwise compression and unconstrained deductive
compression.

From an information-theoretic perspective, the paper contributes a
source-model formulation for deductive compression rather than merely
a semantic communication architecture. Its main insight is that shared inference structure can be treated as a rate-reducing resource: redundant stored states are free under closure fidelity because they can be re-derived at the decoder. Source--channel separation statements, closure-adapted converse bounds, and heterogeneous-receiver extensions follow from this source-model viewpoint.

Several directions remain open.
On the combinatorial side, the hypergraph/graph-entropy characterisation of the
confusable-core zero-distortion limit raises algorithmic questions: computing or
approximating \(H_{\Gamma_0}(\pi_A)\) and its graph-entropy specialisations, and
identifying logical/proof-theoretic conditions under which the hypergraph
collapses to a graph (or further to a Shannon-entropy closed form).

On the information-theoretic side, a natural challenge is to extend the same
confusability-based viewpoint beyond \(D=0\) under closure-based (non-binary)
distortions, and to go beyond finite-alphabet and i.i.d.\ models.
Other natural directions are multiterminal and interactive versions of deductive
compression, and settings with heterogeneous proof systems across agents.
We hope that the formulation developed here provides a useful starting point for
a broader information-theoretic theory of structured sources with inference-enabled
reconstruction.


\section*{Acknowledgment}

The author thanks Associate Professor Rui Wang of the School of
Computer Science at Shanghai Jiao Tong University for insightful
comments, and doctoral students Yiming Wang, Chun Li, Hu Xu,
Siyuan Qiu, Zeyan Li, Jiashuo Zhang, Junxuan He, and Xiao Wang
for stimulating discussions throughout the writing and revision of
this paper.

\appendices

\counterwithin{definition}{section}
\counterwithin{axiom}{section}
\counterwithin{assumption}{section}
\counterwithin{theorem}{section}
\counterwithin{lemma}{section}
\counterwithin{proposition}{section}
\counterwithin{corollary}{section}
\counterwithin{remark}{section}
\counterwithin{example}{section}

\section{Proofs of Sections III--VII}
\label{app:proofs}

This appendix collects deferred proofs for the main results stated in
Sections~III--VII of the paper.

\subsection{Proof of Theorem~\ref{thm:tight-zero-rate}}
\label{app:proof-tight-zero-rate}

\begin{proof}
If \(P_A=0\), then the claim follows from
Remark~\ref{rem:zero-mass-core-case}, so assume henceforth that
\(P_A>0\).

The achievability part is Proposition~\ref{prop:zero-rate-ach}, so
it remains to prove the converse.

Let \(P_{\hat S\mid S}\) be any conditional distribution achieving
zero expected closure distortion. Define
\[
  T(s_o)
  :=
  \begin{cases}
    s_o, & s_o\in A,\\
    *,   & s_o\in J.
  \end{cases}
\]
Since \(T\) is a deterministic function of \({\mathsf{S}}_o\), the data
processing inequality \cite[Ch.~2]{cover2006elements} gives
\[
  I({\mathsf{S}}_o;\hat {\mathsf{S}}_o)\ge I(T;\hat {\mathsf{S}}_o).
\]

For each \(a\in A\), the zero-distortion constraint implies that
\(P_{\hat S\mid S}(\cdot\mid a)\) is supported on \(R_{\Cn}(a)\).
By Assumption~\ref{assump:core-disjoint}, these supports are
pairwise disjoint across \(a\in A\).
For \(j\in J\), no analogous distinguishability constraint is
required because redundant states are free under
Proposition~\ref{prop:redundant-free}.

Write
\[
  P_a:=P_{\hat S\mid S}(\cdot\mid a),
  \qquad
  \bar\pi:=\sum_{a\in A}\pi_A(a)\,P_a.
\]
Since the supports of \(\{P_a\}_{a\in A}\) are pairwise disjoint,
the chain rule for entropy \cite[Ch.~2]{cover2006elements}\cite[Ch.~3]{csiszar2011information} yields
\[
  H(\bar\pi)
  =
  H(\pi_A)+\sum_{a\in A}\pi_A(a)\,H(P_a).
\]

If \(P_J>0\), then define
\[
  Q:=\sum_{j\in J}\frac{P_O(j)}{P_J}\,
  P_{\hat S\mid S}(\cdot\mid j).
\]
The output marginal is \(P_A\bar\pi+P_JQ\), and concavity \cite[Ch.~2]{cover2006elements}\cite[Ch.~3]{csiszar2011information} of entropy
gives
\[
  H(P_A\bar\pi+P_JQ)
  \ge
  P_AH(\bar\pi)+P_JH(Q).
\]
Hence
\begin{align*}
  I(T;\hat {\mathsf{S}}_o)
  &= H(P_A\bar\pi+P_JQ)
     -P_A\sum_{a\in A}\pi_A(a)H(P_a)
     -P_JH(Q) \\
  &\ge
     P_A\Bigl(
       H(\bar\pi)-\sum_{a\in A}\pi_A(a)H(P_a)
     \Bigr) \\
  &= P_A\,H(\pi_A).
\end{align*}

If \(P_J=0\), then \(P_A=1\) and the output marginal is simply
\(\bar\pi\). Therefore
\[
  I(T;\hat {\mathsf{S}}_o)
  =
  H(\bar\pi)-\sum_{a\in A}\pi_A(a)H(P_a)
  =
  H(\pi_A)
  =
  P_AH(\pi_A).
\]

Thus in all cases
\[
  I({\mathsf{S}}_o;\hat {\mathsf{S}}_o)\ge I(T;\hat {\mathsf{S}}_o)\ge P_AH(\pi_A).
\]
Combining this with Proposition~\ref{prop:zero-rate-ach} proves
\[
  R_{\mathrm{sem}}(0;\,d_{\Cn},\,P_O)=P_AH(\pi_A).
\]
\end{proof}

\subsection{Proof of Theorem~\ref{thm:confusable-hypergraph-exact}}
\label{app:proof-confusable-hypergraph-exact}

\begin{proof}
If \(P_A=0\), then the source is supported on \(J=S_O\setminus A\).
By Assumption~\ref{assump:core-coverage} there exists
\(\hat s^\star\in\hat S_O\cap\Cn(S_O)\), and the constant reconstruction
\(\hat{\mathsf{S}}_o\equiv \hat s^\star\) achieves zero distortion by
Proposition~\ref{prop:redundant-free}. Hence
\(R_{\mathrm{sem}}(0;\,d_{\Cn},\,P_O)=0\), matching
\eqref{eq:R0-hypergraph}.

Assume \(P_A>0\).
Let \(\mathsf{S}_o\sim P_O\), let \(\hat{\mathsf{S}}_o\) be generated by a
test channel \(P_{\hat S\mid S}\), and assume
\(\mathbb{E}[d_{\Cn}(\mathsf{S}_o,\hat{\mathsf{S}}_o)]=0\).
Since \(d_{\Cn}\ge 0\), this implies \(d_{\Cn}(\mathsf{S}_o,\hat{\mathsf{S}}_o)=0\) a.s.

Let \(\mathsf{B}:=\mathbf{1}[\mathsf{S}_o\in A]\) and let \(\mathsf{A}_o\) denote
\(\mathsf{S}_o\) conditioned on \(\mathsf{B}=1\), so \(\mathsf{A}_o\sim\pi_A\).

\smallskip\noindent
\emph{Converse.}
By the chain rule and nonnegativity of mutual information,
\[
  I(\mathsf{S}_o;\hat{\mathsf{S}}_o)
  =
  I(\mathsf{B};\hat{\mathsf{S}}_o)+I(\mathsf{S}_o;\hat{\mathsf{S}}_o\mid \mathsf{B})
  \ge
  P_A\,I(\mathsf{A}_o;\hat{\mathsf{S}}_o).
\]
Under \(\mathsf{B}=1\), we have \(\hat{\mathsf{S}}_o\in R_{\Cn}(\mathsf{A}_o)\) a.s.
Define a random subset \(W\subseteq A\) as a deterministic function of \(\hat{\mathsf{S}}_o\):
\[
  W:=\{a\in A:\ \hat{\mathsf{S}}_o\in R_{\Cn}(a)\}.
\]
Then \(\mathsf{A}_o\in W\) a.s., and \(\hat{\mathsf{S}}_o\in\cap_{a\in W}R_{\Cn}(a)\),
so \(W\in\Gamma_0\) a.s.
Hence \(W\) is feasible for the minimisation defining \(H_{\Gamma_0}(\pi_A)\), so
\(I(\mathsf{A}_o;W)\ge H_{\Gamma_0}(\pi_A)\).
Since \(W\) is a function of \(\hat{\mathsf{S}}_o\), data processing gives
\(I(\mathsf{A}_o;\hat{\mathsf{S}}_o)\ge I(\mathsf{A}_o;W)\).
Combining,
\[
  I(\mathsf{S}_o;\hat{\mathsf{S}}_o)
  \ge
  P_A\,I(\mathsf{A}_o;\hat{\mathsf{S}}_o)
  \ge
  P_A\,I(\mathsf{A}_o;W)
  \ge
  P_A\,H_{\Gamma_0}(\pi_A).
\]
Taking the minimum over all zero-distortion test channels yields
\(R_{\mathrm{sem}}(0)\ge P_A H_{\Gamma_0}(\pi_A)\).

\smallskip\noindent
\emph{Achievability.}
Fix \(\varepsilon>0\).
Let \(P^{\star}_{W\mid A}\) be \(\varepsilon\)-optimal in \eqref{eq:hypergraph-entropy-Gamma0}, i.e.,
\(\mathsf{A}_o\in W\) and \(W\in\Gamma_0\) a.s., and
\(I(\mathsf{A}_o;W)\le H_{\Gamma_0}(\pi_A)+\varepsilon\).

For each \(w\in\Gamma_0\), choose a selector
\[
  \psi(w)\in \bigcap_{a\in w}R_{\Cn}(a),
\]
which exists by definition of \(\Gamma_0\).
Define a test channel as follows.
If \(s_o=a\in A\), sample \(W\sim P^{\star}_{W\mid A}(\cdot\mid a)\) and output
\(\hat s:=\psi(W)\).
If \(s_o=j\in J\), output \(\hat s\) according to the same output marginal as for core inputs,
namely sample \(\tilde A\sim\pi_A\), then \(W\sim P^{\star}_{W\mid A}(\cdot\mid \tilde A)\), and output \(\psi(W)\).

\emph{Zero distortion.}
If \(s_o=a\in A\), then \(a\in W\) a.s. and \(\psi(W)\in R_{\Cn}(a)\), hence
\(d_{\Cn}(a,\psi(W)\mid S_O)=0\).
If \(s_o=j\in J\), pick any \(a\in W\) (possible since \(W\neq\emptyset\) a.s.).
Then \(\psi(W)\in R_{\Cn}(a)\), so \(d_{\Cn}(a,\psi(W)\mid S_O)=0\), and thus
\(\psi(W)\in\Cn(S_O)\) by Lemma~\ref{lem:zero-distortion-implies-in-closure}.
Therefore \(d_{\Cn}(j,\psi(W)\mid S_O)=0\) by Proposition~\ref{prop:redundant-free}.
Hence the channel achieves zero expected distortion.

\emph{Rate.}
By construction, the output distribution is the same whether \(\mathsf{B}=1\) or \(\mathsf{B}=0\),
so \(I(\mathsf{B};\hat{\mathsf{S}}_o)=0\), and
\(I(\mathsf{S}_o;\hat{\mathsf{S}}_o)=P_A I(\mathsf{A}_o;\hat{\mathsf{S}}_o)\).
Moreover, under \(\mathsf{B}=1\) the channel factors as the Markov chain
\(\mathsf{A}_o\to W\to \hat{\mathsf{S}}_o=\psi(W)\), hence
\(I(\mathsf{A}_o;\hat{\mathsf{S}}_o)\le I(\mathsf{A}_o;W)\) by data processing.
Therefore
\[
  I(\mathsf{S}_o;\hat{\mathsf{S}}_o)
  \le
  P_A\,I(\mathsf{A}_o;W)
  \le
  P_A\,(H_{\Gamma_0}(\pi_A)+\varepsilon).
\]
Letting \(\varepsilon\downarrow 0\) yields \(R_{\mathrm{sem}}(0)\le P_A H_{\Gamma_0}(\pi_A)\),
which together with the converse proves \eqref{eq:R0-hypergraph}.
\end{proof}

\subsection{Proof of Theorem~\ref{thm:rd-decomposition}}
\label{app:proof-rd-decomposition}

\begin{proof}
If \(P_A=0\), then the source is supported on \(J=S_O\setminus A\).
Because \(\hat S_O\) is nonempty and satisfies
\(\hat S_O\subseteq\Cn(S_O)\), choose any fixed
\(\hat s^\star\in\hat S_O\).
For every \(j\in J\), Proposition~\ref{prop:redundant-free} gives
\[
  d_{\Cn}(j,\hat s^\star\mid S_O)=0.
\]
Hence the constant reconstruction \(\hat {\mathsf{S}}_o\equiv\hat s^\star\)
achieves zero distortion almost surely and therefore zero rate.
Thus both sides of~\eqref{eq:rd-decomp} equal \(0\).
Assume \(P_A>0\).

\smallskip\noindent
\emph{Step 1: distortion depends only on the core.}
Since \(\hat S_O\subseteq\Cn(S_O)\), Proposition~\ref{prop:redundant-free}
implies
\[
  d_{\Cn}(j,\hat s\mid S_O)=0,
  \qquad
  \forall\,j\in J,\ \forall\,\hat s\in\hat S_O.
\]
Therefore, for every test channel \(P_{\hat S\mid S}\),
\[
  \E[d_{\Cn}({\mathsf{S}}_o,\hat {\mathsf{S}}_o)]
  =
  P_A\,
  \E[d_{\Cn}({\mathsf{S}}_o,\hat {\mathsf{S}}_o)\mid {\mathsf{S}}_o\in A].
\]
Thus the global distortion constraint
\(\E[d_{\Cn}({\mathsf{S}}_o,\hat {\mathsf{S}}_o)]\le D\)
is equivalent to
\[
  \E[d_{\Cn}({\mathsf{S}}_o,\hat {\mathsf{S}}_o)\mid {\mathsf{S}}_o\in A]
  \le
  \frac{D}{P_A}.
\]

\smallskip\noindent
\emph{Step 2: lower bound on mutual information.}
Define
\[
  T(s_o)
  :=
  \begin{cases}
    s_o, & s_o\in A,\\
    *,   & s_o\in J.
  \end{cases}
\]
Then \(T\) is a deterministic function of \({\mathsf{S}}_o\), the data processing inequality \cite[Ch.~2]{cover2006elements} yields
\[
  I({\mathsf{S}}_o;\hat {\mathsf{S}}_o)\ge I(T;\hat {\mathsf{S}}_o).
\]

Let \(P_a:=P_{\hat S\mid S}(\cdot\mid a)\) for \(a\in A\), and let
\[
  \bar P_A:=\sum_{a\in A}\pi_A(a)\,P_a.
\]
The quantity
\[
  H(\bar P_A)-\sum_{a\in A}\pi_A(a)H(P_a)
\]
is exactly
\(I(\mathsf{A}_o;\hat{\mathsf{S}}_o)\)
for the core sub-source \(\mathsf{A}_o\sim\pi_A\) and test channel
\(P_{\hat S\mid A}\).

If \(0<P_A<1\), let
\[
  Q:=\sum_{j\in J}\frac{P_O(j)}{P_J}\,
  P_{\hat S\mid S}(\cdot\mid j).
\]
Then the output marginal is
\[
  P_{\hat S}=P_A\bar P_A+P_JQ.
\]
By concavity of entropy \cite[Ch.~2]{cover2006elements}\cite[Ch.~3]{csiszar2011information},
\[
  H(P_{\hat S})\ge P_AH(\bar P_A)+P_JH(Q).
\]
Also,
\[
  H(\hat {\mathsf{S}}_o\mid T)
  =
  P_A\sum_{a\in A}\pi_A(a)H(P_a)+P_JH(Q).
\]
Hence
\begin{align*}
  I(T;\hat {\mathsf{S}}_o)
  &=H(P_{\hat S})-H(\hat {\mathsf{S}}_o\mid T) \\
  &\ge
  P_A\Bigl(
    H(\bar P_A)-\sum_{a\in A}\pi_A(a)H(P_a)
  \Bigr) \\
  &= P_A\,I(\mathsf{A}_o;\hat{\mathsf{S}}_o).
\end{align*}

If \(P_A=1\), then \(J=\varnothing\) and \(T={\mathsf{S}}_o\) almost surely.
The output marginal is simply \(\bar P_A\), so
\begin{align*}
  I(T;\hat {\mathsf{S}}_o)
  &=
  H(\bar P_A)-\sum_{a\in A}\pi_A(a)H(P_a) \\
  &=
  I(\mathsf{A}_o;\hat{\mathsf{S}}_o)
  =
  P_A\,I(\mathsf{A}_o;\hat{\mathsf{S}}_o).
\end{align*}

Therefore every feasible global test channel satisfies
\[
  I({\mathsf{S}}_o;\hat {\mathsf{S}}_o)
  \ge
  P_A\,
  R^{(A)}\!\Bigl(\frac{D}{P_A};\,d_{\Cn},\,\pi_A\Bigr).
\]

\smallskip\noindent
\emph{Step 3: achievability by homogenizing the redundant part.}
Let \(P_{\hat S\mid A}^*\) achieve
\(R^{(A)}(D/P_A;\,d_{\Cn},\,\pi_A)\)
\cite[Ch.~10]{cover2006elements}
\cite[Ch.~7]{csiszar2011information}, and let
\[
  \bar P_A^*
  :=
  \sum_{a\in A}\pi_A(a)\,P_{\hat S\mid A}^*(\cdot\mid a)
\]
be the corresponding output marginal.
Extend \(P_{\hat S\mid A}^*\) to a test channel on all of \(S_O\) by
\[
  P_{\hat S\mid S}(\cdot\mid s_o)
  :=
  \begin{cases}
    P_{\hat S\mid A}^*(\cdot\mid s_o), & s_o\in A,\\[3pt]
    \bar P_A^*(\cdot), & s_o\in J.
  \end{cases}
\]
Because every \(j\in J\) is distortion-free against every
\(\hat s\in\hat S_O\subseteq\Cn(S_O)\), this extension preserves
the distortion level \(D\).

Moreover, the overall output marginal is exactly \(\bar P_A^*\), so
\[
  H(\hat {\mathsf{S}}_o)=H(\bar P_A^*),
\]
and
\[
  H(\hat {\mathsf{S}}_o\mid {\mathsf{S}}_o)
  =
  P_AH(\hat {\mathsf{S}}_o\mid \mathsf{A}_o)+P_JH(\bar P_A^*).
\]
Therefore
\begin{align*}
  I({\mathsf{S}}_o;\hat {\mathsf{S}}_o)
  &=
  H(\bar P_A^*)-
  P_AH(\hat {\mathsf{S}}_o\mid \mathsf{A}_o)-P_JH(\bar P_A^*) \\
  &=
  P_A\Bigl(H(\bar P_A^*)-H(\hat {\mathsf{S}}_o\mid \mathsf{A}_o)\Bigr) \\
  &=
  P_A\,I(\mathsf{A}_o;\hat{\mathsf{S}}_o),
\end{align*}
which equals
\[
  P_A\,
  R^{(A)}\!\Bigl(\frac{D}{P_A};\,d_{\Cn},\,\pi_A\Bigr).
\]
This proves~\eqref{eq:rd-decomp}.
\end{proof}

\subsection{Proof of Theorem~\ref{thm:rate-depth}}
\label{app:proof-rate-depth}

\begin{proof}
Fix \(\delta\ge 0\).

If \(P_\delta=0\), then the source is supported on
\(J_\delta:=S_O\setminus A_\delta\), and
Lemma~\ref{lem:delta-free-substitution} implies that every
\(j\in J_\delta\) is zero-distortion free against every reconstruction
symbol. Hence any constant reconstruction in \(\hat S_O\) achieves
zero \(\delta\)-distortion almost surely, so
\[
  R_{\mathrm{sem}}(0,\delta;\,d_{\Cn}^{\delta},\,P_O)=0,
\]
which agrees with the convention \(P_\delta H(\pi_\delta)=0\).

Assume henceforth that \(P_\delta>0\), and write
\(J_\delta:=S_O\setminus A_\delta\).
Let \(P_{\hat S\mid S}\) be any test channel achieving zero expected
\(\delta\)-distortion. Define
\[
  T_\delta(s_o)
  :=
  \begin{cases}
    s_o, & s_o\in A_\delta,\\
    *,   & s_o\in J_\delta.
  \end{cases}
\]
Since \(T_\delta\) is a deterministic function of \({\mathsf{S}}_o\),
by the data processing inequality \cite[Ch.~2]{cover2006elements}, we have
\[
  I({\mathsf{S}}_o;\hat {\mathsf{S}}_o)\ge I(T_\delta;\hat {\mathsf{S}}_o).
\]

For each \(a\in A_\delta\), zero \(\delta\)-distortion implies that
\(P_{\hat S\mid S}(\cdot\mid a)\) is supported on \(R_\delta(a)\).
By Assumption~\ref{assump:delta-core-disjoint}, these supports are
pairwise disjoint across \(a\in A_\delta\).
For \(j\in J_\delta\), the output is unconstrained by
Lemma~\ref{lem:delta-free-substitution}.

Write
\[
  P_a:=P_{\hat S\mid S}(\cdot\mid a),
  \qquad
  \bar\pi_\delta:=\sum_{a\in A_\delta}\pi_\delta(a)\,P_a.
\]
Since the supports of \(\{P_a\}_{a\in A_\delta}\) are pairwise
disjoint, the chain rule for entropy \cite[Ch.~2]{cover2006elements}\cite[Ch.~3]{csiszar2011information} gives
\[
  H(\bar\pi_\delta)
  =
  H(\pi_\delta)+
  \sum_{a\in A_\delta}\pi_\delta(a)H(P_a).
\]

If \(0<P_\delta<1\), define
\[
  Q_\delta
  :=
  \sum_{j\in J_\delta}\frac{P_O(j)}{1-P_\delta}\,
  P_{\hat S\mid S}(\cdot\mid j).
\]
Then the output marginal is
\[
  P_{\hat S}
  =
  P_\delta \bar\pi_\delta + (1-P_\delta)Q_\delta.
\]
By concavity of entropy \cite[Ch.~2]{cover2006elements}\cite[Ch.~3]{csiszar2011information},
\[
  H(P_{\hat S})
  \ge
  P_\delta H(\bar\pi_\delta)+(1-P_\delta)H(Q_\delta).
\]
Also,
\[
  H(\hat {\mathsf{S}}_o\mid T_\delta)
  =
  P_\delta\sum_{a\in A_\delta}\pi_\delta(a)H(P_a)
  +(1-P_\delta)H(Q_\delta).
\]
Hence
\begin{align*}
  I(T_\delta;\hat {\mathsf{S}}_o)
  &=
  H(P_{\hat S})-H(\hat {\mathsf{S}}_o\mid T_\delta) \\
  &\ge
  P_\delta\Bigl(
    H(\bar\pi_\delta)-
    \sum_{a\in A_\delta}\pi_\delta(a)H(P_a)
  \Bigr) \\
  &=
  P_\delta H(\pi_\delta).
\end{align*}

If \(P_\delta=1\), then \(J_\delta=\varnothing\) and
\(T_\delta={\mathsf{S}}_o\) almost surely. The output marginal is simply
\(\bar\pi_\delta\), so
\begin{align*}
  I(T_\delta;\hat {\mathsf{S}}_o)
  &=
  H(\bar\pi_\delta)-
  \sum_{a\in A_\delta}\pi_\delta(a)H(P_a) \\
  &=
  H(\pi_\delta)
  =
  P_\delta H(\pi_\delta).
\end{align*}

Therefore, in all cases with \(P_\delta>0\),
\[
  I({\mathsf{S}}_o;\hat {\mathsf{S}}_o)\ge I(T_\delta;\hat {\mathsf{S}}_o)\ge P_\delta H(\pi_\delta).
\]

For achievability, define
\[
  P_{\hat S\mid S}(\hat s\mid s_o)
  :=
  \begin{cases}
    \mathbf{1}[\hat s=s_o], & s_o\in A_\delta,\\[3pt]
    \pi_\delta(\hat s), & s_o\in J_\delta,
  \end{cases}
\]
with \(\pi_\delta\) supported on \(A_\delta\subseteq\hat S_O\).
If \(s_o\in A_\delta\), the reconstruction is exact and hence has
zero \(\delta\)-distortion.
If \(s_o\in J_\delta\), zero \(\delta\)-distortion follows from
Lemma~\ref{lem:delta-free-substitution}.
The output marginal is \(\pi_\delta\), and exactly as in the proof of
Proposition~\ref{prop:zero-rate-ach},
\[
  I({\mathsf{S}}_o;\hat {\mathsf{S}}_o)=P_\delta H(\pi_\delta).
\]
Thus
\[
  R_{\mathrm{sem}}(0,\delta;\,d_{\Cn}^{\delta},\,P_O)
  =
  P_\delta H(\pi_\delta).
\]
\end{proof}

\subsection{Proof of Theorem~\ref{thm:rate-depth-distortion}}
\label{app:proof-rate-depth-distortion}

\begin{proof}
Fix \(\delta\ge 0\).
If \(P_\delta=0\), then the source is supported on
\(S_O\setminus A_\delta\). Since \(\hat S_O\) is nonempty, any
constant reconstruction in \(\hat S_O\) has zero \(\delta\)-distortion
almost surely by Lemma~\ref{lem:delta-free-substitution}. Hence
\[
  R_{\mathrm{sem}}(D,\delta;\,d_{\Cn}^{\delta},\,P_O)=0,
  \qquad \forall\,D\ge 0,
\]
which agrees with the convention in~\eqref{eq:rdd-surface}.
Assume henceforth that \(P_\delta>0\).

By Lemma~\ref{lem:delta-free-substitution}, every
\(j\in S_O\setminus A_\delta\) is distortion-free against every
reconstruction symbol under \(d_{\Cn}^{\delta}\).
Therefore, for every test channel,
\[
  \E[d_{\Cn}^{\delta}({\mathsf{S}}_o,\hat {\mathsf{S}}_o)]
  =
  P_\delta\,
  \E[d_{\Cn}^{\delta}({\mathsf{S}}_o,\hat {\mathsf{S}}_o)\mid {\mathsf{S}}_o\in A_\delta].
\]
Thus the distortion constraint
\(\E[d_{\Cn}^{\delta}]\le D\) reduces to a
core-only constraint with threshold \(D/P_\delta\).

Now define
\[
  T_\delta(s_o)
  :=
  \begin{cases}
    s_o, & s_o\in A_\delta,\\
    *,   & s_o\notin A_\delta.
  \end{cases}
\]
Since \(T_\delta\) is a deterministic function of \({\mathsf{S}}_o\), 
by the data processing inequality \cite[Ch.~2]{cover2006elements},
we have
\[
  I({\mathsf{S}}_o;\hat {\mathsf{S}}_o)\ge I(T_\delta;\hat {\mathsf{S}}_o).
\]
Following the same concavity argument as in the converse of
Theorem~\ref{thm:rd-decomposition}
\cite[Ch.~2]{cover2006elements}\cite[Ch.~1]{csiszar2011information}, we obtain
\[
  I(T_\delta;\hat {\mathsf{S}}_o)
  \ge
  P_\delta\,
  I(\mathsf{A}_{\delta,o};\hat{\mathsf{S}}_o),
\]
where \(\mathsf{A}_{\delta,o}\sim\pi_\delta\) and the channel is the
restriction of \(P_{\hat S\mid S}\) to \(A_\delta\).
Taking the minimum over all feasible channels yields
\[
  R_{\mathrm{sem}}(D,\delta)
  \ge
  P_\delta\,
  R^{(A_\delta)}\!\Bigl(\frac{D}{P_\delta};\,
  d_{\Cn}^{\delta},\,\pi_\delta\Bigr).
\]

For achievability, let
\(P_{\hat S\mid A_\delta}^*\) be an optimal test channel that achieves
the rate-distortion function \(R^{(A_\delta)}(\cdot)\) on the right-hand side
\cite[Ch.~10]{cover2006elements}
\cite[Ch.~7]{csiszar2011information},
and extend it to all of \(S_O\) by assigning to every
\(j\notin A_\delta\) the output marginal induced by
\(P_{\hat S\mid A_\delta}^*\).
This keeps the distortion unchanged because the symbols outside
\(A_\delta\) are \(\delta\)-redundant, and the same entropy
calculation as in Appendix~\ref{app:proof-rd-decomposition} shows that
the resulting mutual information is exactly
\[
  P_\delta\,
  R^{(A_\delta)}\!\Bigl(\frac{D}{P_\delta};\,
  d_{\Cn}^{\delta},\,\pi_\delta\Bigr).
\]
Hence~\eqref{eq:rdd-surface} holds.
\end{proof}

\subsection{Proof of Theorem~\ref{thm:sem-source-channel}}
\label{app:proof-separation}

\begin{proof}
All distortion functions used in the theorem are bounded
single-letter distortions on finite alphabets.
Therefore the standard source--channel separation theorem applies \cite[Chs.~7,10]{cover2006elements}
\cite[Ch.~7]{csiszar2011information}.

For closure distortion \(d_{\Cn}\), the source-coding side is governed
by \(R_{\mathrm{sem}}(D;\,d_{\Cn},\,P_O)\)\cite[Ch.~10]{cover2006elements}
\cite[Ch.~7]{csiszar2011information}, and the channel-coding
side is governed by the Shannon capacity \(C(W)\)\cite[Ch.~7]{cover2006elements}
\cite[Ch.~6]{csiszar2011information}.
Hence a distortion level \(D\) is achievable whenever
\[
  R_{\mathrm{sem}}(D;\,d_{\Cn},\,P_O)<\kappa C(W),
\]
and conversely every achievable distortion level must satisfy
\[
  R_{\mathrm{sem}}(D;\,d_{\Cn},\,P_O)\le \kappa C(W).
\]

The same reasoning applies to the bounded-inference distortion
\(d_{\Cn}^{\delta}\): for every fixed \(\delta\),
\[
  R_{\mathrm{sem}}(D,\delta;\,d_{\Cn}^{\delta},\,P_O)
  < \kappa C(W)
\]
is sufficient for achievability, whereas
\[
  R_{\mathrm{sem}}(D,\delta;\,d_{\Cn}^{\delta},\,P_O)
  \le \kappa C(W)
\]
is necessary.

Finally, under the hypotheses of
Theorems~\ref{thm:tight-zero-rate}, \ref{thm:confusable-hypergraph-exact},
and~\ref{thm:rate-depth}, substituting the corresponding exact
zero-distortion formulas yields the displayed sufficient and necessary
conditions \eqref{eq:sep-zero-unconstrained-suf}--\eqref{eq:sep-zero-depth-nec},
as well as the hypergraph-entropy threshold stated at the end of
Theorem~\ref{thm:sem-source-channel}.
\end{proof}

\subsection{Proof of Theorem~\ref{thm:semantic-fano-tight}}
\label{app:proof-semantic-fano-tight}

\begin{proof}
If \(A=\varnothing\), then by Proposition~\ref{prop:redundant-free}
every input is closure-correct against every
\(\hat s\in\hat S_O\subseteq\Cn(S_O)\), so
\(\epsilon_{\Cn}=0\) and the claim is trivial.
Assume henceforth that \(A\neq\varnothing\).

Because \(\hat S_O\subseteq\Cn(S_O)\), Proposition~\ref{prop:redundant-free}
implies that
\[
  d_{\Cn}(j,\hat s\mid S_O)=0,
  \qquad
  \forall\,j\in J,\ \forall\,\hat s\in\hat S_O.
\]
Hence closure errors can arise only on core inputs, and therefore
\[
  \epsilon_{\Cn}
  =
  \sum_{a\in A}P_O(a)\,
  \Pr[\hat {\mathsf{S}}_o\notin R_{\Cn}(a)\mid {\mathsf{S}}_o=a].
\]

Let \(B:=\mathbf{1}[{\mathsf{S}}_o\in A]\).
Since \(B\) is a deterministic function of \({\mathsf{S}}_o\), by the chain rule for mutual information \cite[Ch.~2]{cover2006elements}\cite[Ch.~3]{csiszar2011information}, we have
\[
  I({\mathsf{S}}_o;\hat {\mathsf{S}}_o)=I(B,{\mathsf{S}}_o;\hat {\mathsf{S}}_o)
  =I(B;\hat {\mathsf{S}}_o)+I({\mathsf{S}}_o;\hat {\mathsf{S}}_o\mid B)
  \ge I({\mathsf{S}}_o;\hat {\mathsf{S}}_o\mid B).
\]
Therefore
\[
  I({\mathsf{S}}_o;\hat {\mathsf{S}}_o)\ge
  P_A I({\mathsf{S}}_o;\hat {\mathsf{S}}_o\mid B=1),
\]
since the conditional mutual information given \(B=0\) is
nonnegative.

Conditioned on \(B=1\), the source is distributed as \(\pi_A\) on
the alphabet \(A\).
If \(|A|=1\), then \(H(\pi_A)=0\), so the desired inequality reduces
to the trivial bound \(I({\mathsf{S}}_o;\hat {\mathsf{S}}_o)\ge 0\).
Assume now that \(|A|\ge 2\).

Under Assumption~\ref{assump:core-disjoint}, define
\(g:\hat S_O\to A\cup\{\mathsf e\}\) by
\[
  g(\hat s)
  :=
  \begin{cases}
    a, & \hat s\in R_{\Cn}(a)\text{ for some }a\in A,\\
    \mathsf e, & \text{otherwise}.
  \end{cases}
\]
Because the sets \(R_{\Cn}(a)\) are pairwise disjoint, \(g\) is
well-defined.
By data processing \cite[Ch.~2]{cover2006elements},
\[
  H({\mathsf{S}}_o\mid \hat {\mathsf{S}}_o,\ B=1)
  \le
  H({\mathsf{S}}_o\mid g(\hat {\mathsf{S}}_o),\ B=1).
\]

Let \(p_A:=\epsilon_{\Cn}/P_A\).
Applying the classical Fano inequality
\cite[Ch.~2]{cover2006elements}\cite[Ch.~3]{csiszar2011information}
to the \(|A|\)-ary source conditioned on \(B=1\) gives
\[
  H({\mathsf{S}}_o\mid g(\hat {\mathsf{S}}_o),\ B=1)
  \le
  h_b(p_A)+p_A\log(|A|-1).
\]
Hence
\[
  I({\mathsf{S}}_o;\hat {\mathsf{S}}_o\mid B=1)
  \ge
  H(\pi_A)-h_b(p_A)-p_A\log(|A|-1).
\]
Multiplying by \(P_A\) yields
\[
  I({\mathsf{S}}_o;\hat {\mathsf{S}}_o)
  \ge
  P_AH(\pi_A)
  -P_A h_b\!\Bigl(\frac{\epsilon_{\Cn}}{P_A}\Bigr)
  -\epsilon_{\Cn}\log(|A|-1).
\]
Finally, Jensen's inequality gives
\[
  P_A h_b\!\Bigl(\frac{\epsilon_{\Cn}}{P_A}\Bigr)
  \le
  h_b(\epsilon_{\Cn}),
\]
which proves
\[
  I({\mathsf{S}}_o;\hat {\mathsf{S}}_o)
  \ge
  P_AH(\pi_A)-h_b(\epsilon_{\Cn})
  -\epsilon_{\Cn}\log(|A|-1).
\]
\end{proof}

\subsection{Proof of Theorem~\ref{thm:heterogeneous-extension}}
\label{app:proof-heterogeneous-extension}

\begin{proof}
If \(P_A=0\), then the source is supported on \(J=S_O\setminus A\).
Since \(A\neq\varnothing\) and \textup{(H1)} holds, choose any
\(a_0\in A\) and any representative
\(\phi(a_0)\in R_{\Cn}^{(V)}(a_0)\).
Because
\[
  d_{\Cn}(a_0,\phi(a_0)\mid S_O)=0,
\]
one has
\[
  \Cn((S_O\setminus\{a_0\})\cup\{\phi(a_0)\})=\Cn(S_O),
\]
and hence by reflexivity \(\phi(a_0)\in\Cn(S_O)\).
Therefore Proposition~\ref{prop:redundant-free} gives
\[
  d_{\Cn}(j,\phi(a_0)\mid S_O)=0,
  \qquad \forall\,j\in J.
\]
So the constant reconstruction
\(\hat {\mathsf{S}}_o\equiv\phi(a_0)\) achieves zero distortion almost surely,
and hence
\[
  R_{\mathrm{sem}}^{(V)}(0;\,d_{\Cn},\,P_O)=0,
\]
which agrees with the stated convention. Assume henceforth that
\(P_A>0\).

For each \(a\in A\), assumption \textup{(H1)} ensures that
\(R_{\Cn}^{(V)}(a)\neq\varnothing\), so choose one representative
\(\phi(a)\in R_{\Cn}^{(V)}(a)\).
By \textup{(H2)}, the map \(\phi:A\to V\) is injective.

Define a test channel by
\[
  P_{\hat S\mid S}(\hat s\mid s)
  :=
  \begin{cases}
    \mathbf{1}[\hat s=\phi(s)], & s\in A,\\[3pt]
    \displaystyle\sum_{a\in A}\pi_A(a)\,\mathbf{1}[\hat s=\phi(a)],
    & s\in J.
  \end{cases}
\]
For \(a\in A\), zero distortion holds because
\(\phi(a)\in R_{\Cn}^{(V)}(a)\) by construction.
For \(j\in J\), every possible output is of the form \(\phi(a)\) with
\(a\in A\).
Since
\[
  d_{\Cn}(a,\phi(a)\mid S_O)=0,
\]
the equality of closures implies
\[
  \Cn((S_O\setminus\{a\})\cup\{\phi(a)\})=\Cn(S_O),
\]
and hence by reflexivity
\(\phi(a)\in \Cn(S_O)\).
Therefore Proposition~\ref{prop:redundant-free} gives
\[
  d_{\Cn}(j,\phi(a)\mid S_O)=0.
\]
So the expected distortion is zero.

Because \(\phi\) is injective, the induced output distribution on
\(\phi(A)\) has entropy \(H(\pi_A)\), and exactly the same
calculation as in Proposition~\ref{prop:zero-rate-ach} yields
\[
  I({\mathsf{S}}_o;\hat {\mathsf{S}}_o)=P_AH(\pi_A).
\]
Hence
\[
  R_{\mathrm{sem}}^{(V)}(0;\,d_{\Cn},\,P_O)\le P_AH(\pi_A).
\]

For the converse, let \(P_{\hat S\mid S}\) be any feasible
zero-distortion test channel supported on \(V\).
Then for every \(a\in A\), the conditional law
\(P_{\hat S\mid S}(\cdot\mid a)\) must be supported on
\(R_{\Cn}^{(V)}(a)\).
By assumption \textup{(H2)}, these supports are pairwise disjoint, so
the converse proof of Theorem~\ref{thm:tight-zero-rate} applies
verbatim with \(R_{\Cn}(a)\) replaced by \(R_{\Cn}^{(V)}(a)\).
Thus
\[
  I({\mathsf{S}}_o;\hat {\mathsf{S}}_o)\ge P_AH(\pi_A),
\]
and therefore
\[
  R_{\mathrm{sem}}^{(V)}(0;\,d_{\Cn},\,P_O)=P_AH(\pi_A).
\]

For the final claim, if some \(a\in A\) with \(P_O(a)>0\) satisfies
\(R_{\Cn}^{(V)}(a)=\varnothing\), then there is no reconstruction
symbol in \(V\) that can reproduce \(a\) with zero closure
distortion.
Hence the zero-distortion feasible set is empty, and
\[
  R_{\mathrm{sem}}^{(V)}(0;\,d_{\Cn},\,P_O)=\infty.
\]
\end{proof}

\section{Datalog Classes Satisfying the Assumptions}
\label{app:datalog-classes}

This appendix records concrete Datalog classes under which the
standing assumptions used in the paper are satisfied.
The goal is not to give the most general logical characterization,
but to isolate natural and checkable sufficient conditions.

\subsection{Finite Active-Domain Datalog}
\label{app:datalog-basic}

\begin{definition}[Finite active-domain Datalog instance]
\label{def:finite-active-domain-datalog}
A \emph{finite active-domain Datalog instance} consists of:
\begin{enumerate}[label=\textup{(\roman*)}]
  \item a function-free Datalog program \(P\);
  \item a finite active domain \(\mathcal D\);
  \item the finite set \(\mathbb{S}_O\) of all ground atoms over
        \(\mathcal D\) whose predicates occur in \(P\);
  \item a finite stored knowledge base \(S_O\subseteq\mathbb{S}_O\).
\end{enumerate}
The proof system is the standard Datalog immediate-consequence
semantics induced by \(P\).
\end{definition}

\begin{remark}
These properties are standard consequences of least-fixpoint Datalog
semantics over finite active domains
\cite{van1976semantics,ceri1989you,abiteboul1995foundations,
dantsin2001complexity}.
\end{remark}

\begin{proposition}[Basic assumptions hold for finite active-domain Datalog]
\label{prop:datalog-basic-assumptions}
For every finite active-domain Datalog instance,
Assumptions~\textup{\ref{assump:proof-system}}--\textup{\ref{assump:core-extractable}}
and Assumption~\textup{\ref{assump:finite-step-closure}} hold.
\end{proposition}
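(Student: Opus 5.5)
The plan is to verify the assumptions one at a time, always working inside the finite universe \(\mathbb{S}_O\). Since \(P\) is function-free and \(\mathcal D\) is finite, \(\mathbb{S}_O\) is finite. First define the immediate-consequence operator
\[
  T_{\mathsf{PS}}(B):=B\cup\{\text{heads of ground instances over }\mathcal D\text{ of rules of }P\text{ whose bodies lie in }B\},
\]
and set \(\Cn(B):=\bigcup_{n\ge0}T_{\mathsf{PS}}^n(B)\). The inflationary union with \(B\) builds the stored facts in as base facts, which matches the Datalog convention of treating the EDB as facts. Equivalently, \(\Cn(B)\) is the least fixpoint of \(T_{\mathsf{PS}}\) above \(B\), and we take \(\Gamma\vdash_{\mathrm{kb}} s\) to mean \(s\in\Cn(\Gamma)\).

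Assumption~\ref{assump:finite-step-closure} then comes first and directly. (T1) holds by construction, and (T3) holds by definition. For (T2), enlarging \(B\) can only add satisfied rule bodies, so the set of derived heads can only grow. For (T4), the chain \(T^n_{\mathsf{PS}}(B)\) is increasing inside the finite set \(\mathbb{S}_O\), so it stabilises after at most \(|\mathbb{S}_O|\) steps. The operator is computable because there are finitely many ground instances of the rules.

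Next, check Assumption~\ref{assump:proof-system}.
\begin{itemize}
\item Finiteness of \(\Cn(\Gamma)\) is immediate from \(\Cn(\Gamma)\subseteq\mathbb{S}_O\).
\item Decidability of \(\Gamma\vdash_{\mathrm{kb}}s\) holds because one can iterate \(T_{\mathsf{PS}}\) until stabilisation and then test membership.
\item (Cn1) follows from (T1).
\item (Cn2) follows by induction on \(n\): \(T^n_{\mathsf{PS}}(\Gamma)\subseteq T^n_{\mathsf{PS}}(\Gamma')\), and then one takes unions.
\item (Cn3) is the only step needing an argument, and I expect it to be the main, though mild, obstacle. Let \(n_0\) be the stabilisation index from (T4). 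Then \(C:=\Cn(\Gamma)=T^{n_0}_{\mathsf{PS}}(\Gamma)\) is a fixpoint, \(T_{\mathsf{PS}}(C)=C\), so \(T^n_{\mathsf{PS}}(C)=C\) for all \(n\). Hence \(\Cn(C)=C\).
\end{itemize}

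Assumption~\ref{assump:finite-so} holds for the chosen \(S_O\) because \(\mathbb{S}_O\) is finite. Any fixed enumeration of it, such as a lexicographic order on ground atoms, gives the canonical order. The reconstruction alphabet is taken to be any nonempty subset of \(\mathbb{S}_O\).

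Finally, Assumption~\ref{assump:core-extractable} is the special case \(\Gamma\subseteq S_O\), \(s\in S_O\) of the decidability already shown, in line with the earlier Remark. The only point to state carefully is the convention that facts in \(\Gamma\) are included via the union in \(T_{\mathsf{PS}}\), so that reflexivity holds; beyond that the argument is routine bookkeeping.
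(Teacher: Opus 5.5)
Your proof is correct and follows the same route as the paper. Both take the inflationary operator \(T_{\mathsf{PS}}(B)=B\cup T_P(B)\), use finiteness of \(\mathbb{S}_O\) to get stabilisation, and obtain decidability by iterating to the least fixpoint. You additionally check (Cn1)--(Cn3) and (T1)--(T4) explicitly, which the paper leaves to standard Datalog fixpoint semantics.
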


\begin{proof}
Because the program is function-free and the active domain is finite,
the set \(\mathbb{S}_O\) of ground atoms is finite.
Hence every knowledge base \(S_O\subseteq\mathbb{S}_O\) is finite and
effectively listable.

Derivability in Datalog is decidable by iterating the immediate
consequence operator \(T_P\) until the least fixpoint is reached
\cite{van1976semantics,ceri1989you,abiteboul1995foundations,
dantsin2001complexity}. Since the active domain is finite, the chain
of iterates stabilizes in finitely many steps.
Therefore membership \(s\in\Cn(\Gamma)\) is decidable for every finite
\(\Gamma\subseteq\mathbb{S}_O\) and every \(s\in\mathbb{S}_O\),
proving Assumptions~\ref{assump:proof-system} and
\ref{assump:core-extractable}.

Finally, for Assumption~\ref{assump:finite-step-closure}, one may take
the inflationary immediate-consequence operator
\[
  T_{\mathsf{PS}}(B):=B\cup T_P(B),
\]
where \(T_P\) denotes the standard one-step Datalog consequence
operator induced by \(P\). This operator is monotone, inflationary,
computable, and over a finite active domain its iterates stabilize in
finitely many steps at the least fixpoint, which equals the deductive
closure. Thus Assumption~\ref{assump:finite-step-closure} holds as
well.
\end{proof}

\subsection{A Concrete Class Ensuring Core-Disjointness}
\label{app:datalog-disjointness}

\begin{definition}[Tag-seeded Datalog class]
\label{def:tag-propagating-datalog}
Let \(A=\Atom(S_O)\).
A finite active-domain Datalog instance is called
\emph{tag-seeded} if there exists an injective map
\(\tau:A\to\mathcal D\) assigning a unique tag to each core fact,
together with a distinguished argument position in the relevant
predicates, such that:
\begin{enumerate}[label=\textup{(TS\arabic*)}]
  \item each \(a\in A\) appears with tag \(\tau(a)\), and no other
        stored fact in \(S_O\) carries the tag \(\tau(a)\);
  \item every instantiated rule propagates but never creates the
        distinguished tag: if all body facts used in a derivation
        carry the same tag \(t\), then the derived head fact also
        carries tag \(t\); moreover, a fact carrying tag \(t\) can be
        derived only from premises of which at least one already
        carries tag \(t\);
  \item for every \(a\in A\), there exists at least one fact
        \(w_a\in\Cn(S_O)\) carrying tag \(\tau(a)\) such that
        \(w_a\notin\Cn(S_O\setminus\{a\})\).
\end{enumerate}
\end{definition}

\begin{proposition}[Tag seeding implies core-disjointness]
\label{prop:datalog-core-disjoint}
Let a finite active-domain Datalog instance be tag-seeded in the
sense of Definition~\ref{def:tag-propagating-datalog}, and let the
reconstruction alphabet satisfy
\(\hat S_O\subseteq\Cn(S_O)\cap\mathbb{S}_O\).
Then Assumption~\textup{\ref{assump:core-disjoint}} holds.
\end{proposition}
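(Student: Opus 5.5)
The plan is to argue by contradiction. We use (TS3) to produce a tag-carrying witness whose derivation from the perturbed knowledge base must pass through the reconstruction symbol. (TS1)--(TS2) then force that symbol to carry two distinct tags.

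\emph{Step 1 (setup).} Suppose distinct \(a_1,a_2\in A\) and some \(\hat s\in R_{\Cn}(a_1)\cap R_{\Cn}(a_2)\) exist. By Definition~\ref{def:closure-distortion}, zero Jaccard distance means equality of the two closure sets. Hence for \(i=1,2\),
\[
  \Cn\bigl((S_O\setminus\{a_i\})\cup\{\hat s\}\bigr)=\Cn(S_O).
\]
Write \(B_i:=(S_O\setminus\{a_i\})\cup\{\hat s\}\) and \(t_i:=\tau(a_i)\). Since \(\tau\) is injective, \(t_1\neq t_2\).

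\emph{Step 2 (tag back-propagation lemma).} We will show: for any finite \(B\subseteq\mathbb{S}_O\) and any tag \(t\), if some fact carrying \(t\) lies in \(\Cn(B)\), then some element of \(B\) carries \(t\). By Proposition~\ref{prop:datalog-basic-assumptions}, \(\Cn(B)=\bigcup_n T_{\mathsf{PS}}^n(B)\) with \(T_{\mathsf{PS}}(B)=B\cup T_P(B)\). We induct on the least \(n\) with the fact in \(T_{\mathsf{PS}}^n(B)\).
\begin{itemize}
  \item For \(n=0\) the fact is itself in \(B\).
  \item For \(n>0\) the fact is the head of an instantiated rule whose body lies in \(T_{\mathsf{PS}}^{n-1}(B)\). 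The second clause of (TS2) gives a body fact carrying \(t\), and the induction hypothesis applies to it.
\end{itemize}

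\emph{Step 3 (locating the tag).} By (TS3) there is \(w_{a_1}\in\Cn(S_O)\) carrying \(t_1\). We have \(w_{a_1}\in\Cn(B_1)\) by Step~1. Step~2 then gives an element of \(B_1\) carrying \(t_1\). By (TS1), no element of \(S_O\setminus\{a_1\}\) carries \(t_1\), so that element must be \(\hat s\). Symmetrically, \(\hat s\) carries \(t_2\). Since the tag is read off a single distinguished argument position, a ground atom carries exactly one tag value. Hence \(t_1=t_2\), contradicting injectivity of \(\tau\).

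A simpler alternative for Step~3 uses only \(w_{a_1}\notin\Cn(S_O\setminus\{a_1\})\) from (TS3) to see that \(\hat s\) is "needed". The tag argument is cleaner, though, and we will use it.

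The hypothesis \(\hat S_O\subseteq\Cn(S_O)\) is used only to keep \(\hat s\) inside the ground-atom universe, where the tag position is defined.

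\emph{Main obstacle.} The expected difficulty is Step~2 together with the interpretation of "carries a tag". We must read (TS2)'s second clause as applying to every instantiated rule firing, so that the induction is valid. We must also fix that the distinguished position makes the tag of a fact a well-defined single value. We will state both readings explicitly before running the induction.
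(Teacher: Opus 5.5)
Your proof is correct and is essentially the paper's argument. The (TS3) witness \(w_{a}\) lies in \(\Cn\bigl((S_O\setminus\{a\})\cup\{\hat s\}\bigr)=\Cn(S_O)\); (TS2) pushes its tag back to some premise; (TS1) forces that premise to be \(\hat s\); and injectivity of \(\tau\) separates the sets \(R_{\Cn}(a)\). You merely argue by contradiction and make explicit the back-propagation induction and the single-valuedness of a fact's tag, which the paper leaves implicit.

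One caution on your aside: the ``simpler alternative'' using only \(w_{a_1}\notin\Cn(S_O\setminus\{a_1\})\) yields just \(\hat s\notin\Cn(S_O\setminus\{a_1\})\). That cannot exclude a shared reconstruction (compare Example~\ref{ex:confusable-core}), so the tag argument is genuinely needed.
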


\begin{proof}
Fix \(a\in A\), and let \(\hat s\in R_{\Cn}(a)\).
Then
\[
  \Cn\bigl((S_O\setminus\{a\})\cup\{\hat s\}\bigr)=\Cn(S_O).
\]
In particular, the witness fact \(w_a\) belongs to the left-hand
closure.
By \textup{(TS2)}, any derivation of \(w_a\) must preserve the tag
\(\tau(a)\).
By \textup{(TS1)}, the knowledge base \(S_O\setminus\{a\}\) contains
no stored fact carrying the tag \(\tau(a)\).
Therefore \(\hat s\) itself must carry the tag \(\tau(a)\).

Hence every element of \(R_{\Cn}(a)\) carries the tag \(\tau(a)\).
Because the tags are distinct across different core facts by
injectivity of \(\tau\), no reconstruction symbol can belong
simultaneously to \(R_{\Cn}(a_1)\) and \(R_{\Cn}(a_2)\) when
\(a_1\neq a_2\).
Thus
\[
  R_{\Cn}(a_1)\cap R_{\Cn}(a_2)=\varnothing,
  \qquad a_1\neq a_2,
\]
which is Assumption~\ref{assump:core-disjoint}.
\end{proof}

\begin{remark}[Finite-depth variants]
\label{rem:datalog-delta-disjoint}
The tagged-provenance construction above justifies
Assumption~\ref{assump:core-disjoint} for the unconstrained
zero-distortion problem.
A finite-depth analogue can also be imposed to justify
Assumption~\ref{assump:delta-core-disjoint}, but it requires tagging
the \(\delta\)-irredundant symbols themselves rather than only the
final core \(A\).
Since this extra bookkeeping is orthogonal to the main line of the
paper, we do not formalize the finite-depth syntactic condition here.
\end{remark}

\begin{remark}[Scope of the syntactic condition]
\label{rem:datalog-scope}
The tag-seeded class is only a sufficient condition for
Assumptions~\ref{assump:core-disjoint} and
\ref{assump:delta-core-disjoint}, not a necessary one.
Its role is to show that the disjointness assumptions used in the
main theorems hold on a concrete and checkable Datalog family, rather
than being purely abstract hypotheses.
\end{remark}

\section{Extra Examples and Counterexamples}
\label{app:extra-examples}

This appendix records supplementary examples illustrating the main
definitions and the limits of the theory.

\subsection{Order Dependence of the Irredundant Core}
\label{app:example-order-dependence}

\begin{example}[Order dependence]
\label{ex:order-dependence}
Let
\[
  S_O=\{p,q,r\},
\]
and suppose the proof system contains the two rules
\[
  p\leftarrow q,
  \qquad
  q\leftarrow p.
\]
Then \(p\) and \(q\) are deductively interchangeable, while \(r\) is
independent.

If the canonical order is \(p\prec q\prec r\), then \(p\) is deleted
when scanned because \(p\in\Cn(\{q,r\})\), and the resulting core is
\[
  \Atom(S_O)=\{q,r\}.
\]
If the canonical order is \(q\prec p\prec r\), then the resulting
core is
\[
  \Atom(S_O)=\{p,r\}.
\]
Thus \(\Atom(S_O)\) may depend on the canonical order, even though in
both cases the closure it generates is the same.
\end{example}

\subsection{A sufficient condition for order invariance}
\label{app:order-invariance-positive}

\begin{definition}[Essential set]
\label{def:essential-set}
The \emph{essential set} of \(S_O\) is
\[
  \Ess(S_O)
  :=
  \{s\in S_O:\ s\notin \Cn(S_O\setminus\{s\})\}.
\]
\end{definition}

\begin{proposition}[Order-robust core under essential generation]
\label{prop:essential-order-invariance}
For every finite knowledge base \(S_O\):
\begin{enumerate}[label=\textup{(\roman*)}]
  \item \(\Ess(S_O)\) depends only on \(S_O\) and \(\Cn\), not on the
        canonical order;
  \item \(\Ess(S_O)\subseteq \Atom(S_O)\) for every canonical order;
  \item if \(\Cn(\Ess(S_O))=\Cn(S_O)\), then
        \[
          \Atom(S_O)=\Ess(S_O)
        \]
        for every canonical order.
\end{enumerate}
\end{proposition}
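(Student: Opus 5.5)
Part~(i) is immediate from the definition: \(\Ess(S_O)\) is specified by the condition \(s\notin\Cn(S_O\setminus\{s\})\), which involves only \(S_O\) and the closure operator \(\Cn\). No scan is performed, so the canonical order never enters.

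For~(ii), fix an arbitrary canonical order and take \(e\in\Ess(S_O)\). We must show that \(e\) is never deleted by the procedure of Definition~\ref{def:atom-so}. When \(e\) is scanned, the current set \(A_{\mathrm{scan}}\) is a subset of \(S_O\), so \(A_{\mathrm{scan}}\setminus\{e\}\subseteq S_O\setminus\{e\}\). By monotonicity,
\[
  \Cn(A_{\mathrm{scan}}\setminus\{e\})\subseteq\Cn(S_O\setminus\{e\}).
\]
Since \(e\notin\Cn(S_O\setminus\{e\})\), also \(e\notin\Cn(A_{\mathrm{scan}}\setminus\{e\})\), so \(e\) is retained. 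Elements are removed only at the moment they are scanned, so \(e\) survives to the output and \(e\in\Atom(S_O)\).

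For~(iii), assume \(\Cn(\Ess(S_O))=\Cn(S_O)\). By~(ii) it suffices to show \(\Atom(S_O)\subseteq\Ess(S_O)\). Suppose instead that some \(a\in\Atom(S_O)\) lies outside \(\Ess(S_O)\). Then \(\Ess(S_O)\subseteq\Atom(S_O)\setminus\{a\}\), so by monotonicity and the hypothesis
\[
  \Cn(S_O)=\Cn(\Ess(S_O))\subseteq\Cn(\Atom(S_O)\setminus\{a\}).
\]
Since \(a\in S_O\subseteq\Cn(S_O)\) by reflexivity, this gives \(a\in\Cn(\Atom(S_O)\setminus\{a\})\). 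That contradicts the irredundancy property of Proposition~\ref{prop:atom-core-correct}(ii). Hence \(\Atom(S_O)=\Ess(S_O)\) for every canonical order.

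The only point needing care is the bookkeeping in~(ii): deletions occur only at scan time, so an element kept at its own scan stays in the output. Part~(iii) then reduces cleanly to the irredundancy already established in Proposition~\ref{prop:atom-core-correct}, so I expect no real obstacle.
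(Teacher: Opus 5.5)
Your proof is correct and follows the paper's argument; parts (i) and (ii) match it essentially verbatim. The only difference is in (iii): you reach the contradiction on the final output, using $\Ess(S_O)\subseteq\Atom(S_O)\setminus\{a\}$ and the irredundancy in Proposition~\ref{prop:atom-core-correct}(ii), whereas the paper reaches it at the moment the non-essential element is scanned (the current set still contains $\Ess(S_O)$, so the element would have been deleted) --- this is the same monotonicity step applied at a different point in the procedure.
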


\begin{proof}
Part~\textup{(i)} is immediate from the definition.

For~\textup{(ii)}, let \(s\in\Ess(S_O)\), and let \(B\) be the current
set when \(s\) is scanned in the deletion procedure of
Definition~\ref{def:atom-so}.
Since \(B\setminus\{s\}\subseteq S_O\setminus\{s\}\), monotonicity of
\(\Cn\) gives
\[
  \Cn(B\setminus\{s\})
  \subseteq
  \Cn(S_O\setminus\{s\}).
\]
Because \(s\notin\Cn(S_O\setminus\{s\})\), one also has
\(s\notin\Cn(B\setminus\{s\})\), so \(s\) cannot be deleted.
Hence \(s\in\Atom(S_O)\).

For~\textup{(iii)}, assume \(\Cn(\Ess(S_O))=\Cn(S_O)\) and let
\(s\in\Atom(S_O)\).
If \(s\notin\Ess(S_O)\), then
\(s\in\Cn(S_O\setminus\{s\})\), while by part~\textup{(ii)} all
essential elements survive every deletion order.
Hence, at the moment \(s\) is scanned, the current set still contains
\(\Ess(S_O)\setminus\{s\}=\Ess(S_O)\). By monotonicity,
\[
  \Cn(\Ess(S_O))
  \subseteq
  \Cn(B\setminus\{s\}).
\]
Since \(\Cn(\Ess(S_O))=\Cn(S_O)\) by assumption and \(s\in\Cn(S_O)\)
by reflexivity, it follows that
\[
  s\in\Cn(B\setminus\{s\}),
\]
which would force \(s\) to be deleted---a contradiction.
Therefore every element of \(\Atom(S_O)\) is essential, so together
with part~\textup{(ii)} one gets
\(\Atom(S_O)=\Ess(S_O)\).
\end{proof}

\begin{corollary}[Order-invariant materialized EDB/IDB stores]
\label{cor:edb-idb-order-invariant}
Suppose \(S_O\) consists of EDB facts together with materialized IDB
facts of a function-free Datalog program whose rule heads are all IDB
predicates.
Then no EDB fact is derivable from the remaining stored facts, the EDB
part generates \(\Cn(S_O)\), and therefore \(\Atom(S_O)\) equals the
stored EDB part for every canonical order.
\end{corollary}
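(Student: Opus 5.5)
The plan is to reduce the corollary to Proposition~\ref{prop:essential-order-invariance}(iii). To do so we must verify two facts. The first is that the EDB part $E\subseteq S_O$ lies in $\Ess(S_O)$ and that no materialized IDB fact does, i.e., that $\Ess(S_O)=E$. The second is that $\Cn(E)=\Cn(S_O)$. Once both hold, part~(iii) gives $\Atom(S_O)=\Ess(S_O)=E$ for every canonical order.

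\emph{EDB facts are essential.} Here I will use the fixpoint description of $\Cn$ from Proposition~\ref{prop:datalog-basic-assumptions}: $\Cn(B)=\bigcup_n T_{\mathsf{PS}}^n(B)$ with $T_{\mathsf{PS}}(B)=B\cup T_P(B)$. Every rule head is an IDB predicate, so $T_P(B)$ contains only IDB atoms. An easy induction on $n$ then shows that the EDB atoms of $T_{\mathsf{PS}}^n(B)$ are exactly those of $B$. Consequently, for an EDB fact $e\in S_O$ we have $e\notin S_O\setminus\{e\}$, hence $e\notin\Cn(S_O\setminus\{e\})$, so $e\in\Ess(S_O)$.

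\emph{IDB facts are redundant, and $E$ generates the closure.} The intended reading of ``materialized'' is that each stored IDB fact $q$ lies in $\Cn(E)$, the least fixpoint of $P$ over the EDB facts. Granting this, $S_O\subseteq\Cn(E)$. Monotonicity and idempotence of $\Cn$ then give $\Cn(S_O)\subseteq\Cn(E)$, while the reverse inclusion is immediate from $E\subseteq S_O$. This establishes $\Cn(E)=\Cn(S_O)$. Moreover $E\subseteq S_O\setminus\{q\}$ for each IDB fact $q$, so $q\in\Cn(S_O\setminus\{q\})$ and $q\notin\Ess(S_O)$. Therefore $\Ess(S_O)=E$, and $\Cn(\Ess(S_O))=\Cn(S_O)$.

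\emph{Main obstacle.} I expect the delicate point to be making precise what ``materialized'' means, because the conclusion fails if arbitrary underivable IDB facts are stored. Such facts would themselves be essential and would change the core. I will therefore state explicitly that materialized IDB facts are consequences of the stored EDB part, i.e., that they lie in the least model of $P\cup E$. Under that reading the argument above is complete, and the only remaining ingredient is the short induction showing that $T_{\mathsf{PS}}$ never produces EDB atoms.
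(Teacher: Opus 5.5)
Your proposal is correct and follows the paper's own proof. Both reduce to Proposition~\ref{prop:essential-order-invariance}(iii) by showing two things: no rule can derive an EDB fact, so the stored EDB facts are essential, and materialization makes the EDB part generate \(\Cn(S_O)\). Your explicit check that stored IDB facts are non-essential, which gives \(\Ess(S_O)\) equal to the EDB part exactly, and your precise reading of ``materialized'' fill in steps that the paper's short proof leaves implicit.
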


\begin{proof}
Since all rule heads are IDB predicates, no rule derives an EDB fact.
Hence every stored EDB fact is essential.
Moreover, the stored EDB facts generate all materialized IDB facts by
construction, so the EDB subset generates \(\Cn(S_O)\).
Apply Proposition~\ref{prop:essential-order-invariance}(iii).
\end{proof}

\subsection{A Minimal Exact Zero-Distortion Example}
\label{app:example-zero-rate}

\begin{example}[One redundant consequence]
\label{ex:minimal-zero-rate}
Let
\[
  S_O=\{a,b,c\},
\]
and suppose \(c\in\Cn(\{a,b\})\), while neither \(a\) nor \(b\) is
derivable from the other two statements.
Then
\[
  A=\Atom(S_O)=\{a,b\},
  \qquad
  J=\{c\}.
\]
Assume further that \(S_O\subseteq\hat S_O\) and that
Assumption~\ref{assump:core-disjoint} holds for this instance.

If \(P_O\) is uniform on \(S_O\), then
\[
  P_A=\frac{2}{3},
  \qquad
  \pi_A(a)=\pi_A(b)=\frac{1}{2}.
\]
Hence Theorem~\ref{thm:tight-zero-rate} gives
\[
  R_{\mathrm{sem}}(0)
  =
  P_AH(\pi_A)
  =
  \frac{2}{3}\cdot 1
  =
  \frac{2}{3}\ \text{bits}.
\]
By contrast, classical zero-distortion coding under Hamming fidelity
requires
\[
  H(P_O)=\log 3
\]
bits. Thus the redundant consequence \(c\) creates a strict
deductive compression gain.
\end{example}

\subsection{A Simple Rate--Depth Example}
\label{app:example-rate-depth}

\begin{example}[A two-step shortcut]
\label{ex:rate-depth}
Let
\[
  S_O=\{a,b,d,e\},
\]
and suppose the proof system contains auxiliary statements
\(c,f\notin S_O\) and rules
\[
  c\leftarrow a,
  \qquad
  d\leftarrow c,
  \qquad
  f\leftarrow b,
  \qquad
  e\leftarrow f.
\]
Then \(d\) is derivable from \(a\) with derivation depth \(2\), and \(e\)
is derivable from \(b\) with derivation depth \(2\).
The irredundant core is
\[
  A=\{a,b\}.
\]
Assume throughout this example that \(S_O\subseteq\hat S_O\), that
Assumption~\ref{assump:atom-equals-essential} holds, and that
Assumption~\ref{assump:delta-core-disjoint} holds for
\(\delta=0,1,2\).

However, with depth budget \(\delta=1\), neither \(d\) nor \(e\)
is redundant relative to \(S_O\setminus\{d\}\) or \(S_O\setminus\{e\}\),
because the intermediate statements \(c\) and \(f\) are not stored.
Hence
\[
  A_0=A_1=S_O.
\]
At depth budget \(\delta=2\), both \(d\) and \(e\) become redundant,
so
\[
  A_2=\{a,b\}=A.
\]

If \(P_O\) is uniform on \(S_O\), then
\[
  R_{\mathrm{sem}}(0,0)=R_{\mathrm{sem}}(0,1)=H(P_O)=2
\]
bits, whereas
\[
  R_{\mathrm{sem}}(0,2)
  =
  P_AH(\pi_A)
  =
  \frac{1}{2}\cdot 1
  =
  \frac{1}{2}
\]
bit.
This illustrates how additional inference depth can sharply reduce the
effective source rate.
\end{example}

\subsection{A Confusable-Core Counterexample}
\label{app:counterexample-confusable}

\begin{example}[Failure of Assumption~\ref{assump:core-disjoint}]
\label{ex:confusable-core}
Let
\[
  S_O=\{a_1,a_2,b\},
\]
and let the proof system contain an auxiliary statement \(r\notin S_O\)
together with rules
\[
  r\leftarrow a_1,a_2,
  \qquad
  a_1\leftarrow b,r,
  \qquad
  a_2\leftarrow b,r.
\]
Neither \(a_1\) nor \(a_2\) is derivable from the remaining stored
statements alone, so both are core elements.

Now replace \(a_1\) by \(r\).
From \(\{a_2,b,r\}\), one can derive \(a_1\), and the resulting
closure equals \(\Cn(S_O)\).
Similarly, replacing \(a_2\) by \(r\) also preserves the full closure.
Therefore
\[
  r\in R_{\Cn}(a_1)\cap R_{\Cn}(a_2),
\]
so Assumption~\ref{assump:core-disjoint} fails.

This example shows that the general zero-distortion problem has a real
core-confusability phenomenon and is not captured by the disjoint case
alone.
\end{example}

\subsection{Restricted Alphabet Impossibility}
\label{app:counterexample-restricted-alphabet}

\begin{example}[No representative for a core symbol]
\label{ex:restricted-alphabet-impossible}
Let
\[
  S_O=\{a,b\},
\]
and suppose the proof system contains no nontrivial derivation rules.
Then
\[
  A=\Atom(S_O)=\{a,b\}.
\]
If the reconstruction alphabet is restricted to
\[
  V=\{a\},
\]
then \(a\) has a zero-distortion representative in \(V\), but \(b\)
does not:
\[
  R_{\Cn}^{(V)}(b)=\varnothing.
\]
Therefore zero closure distortion is impossible, and
Theorem~\ref{thm:heterogeneous-extension} yields
\[
  R_{\mathrm{sem}}^{(V)}(0)=\infty.
\]
This is the simplest instance showing that heterogeneous receivers
matter only through the availability of zero-distortion
representatives for core symbols.
\end{example}

\subsection{Numerical Illustration on Materialized Datalog Stores}
\label{app:numerical-materialized}

We briefly illustrate the zero-distortion law on a family of
function-free Datalog stores consisting of EDB facts together with a
materialized subset of derived IDB facts, a viewpoint closely related
to the database literature on materialization and materialized views
\cite{gupta1995maintenance,gupta1998materialized,chaudhuri1995optimizing}.
By Corollary~\ref{cor:edb-idb-order-invariant}, the irredundant core
is exactly the stored EDB part for every canonical order, so the
example also illustrates an order-invariant regime of the theory.

For concreteness, the synthetic instance uses three EDB predicates
\(\mathsf{connected}\), \(\mathsf{supplies}\), and \(\mathsf{produces}\),
and two IDB predicates \(\mathsf{reachable}\) and \(\mathsf{available}\),
with rules
\[
  \mathsf{reachable}(X,Y)\leftarrow \mathsf{connected}(X,Y),
  \qquad
  \mathsf{reachable}(X,Z)\leftarrow
  \mathsf{reachable}(X,Y),\,\mathsf{connected}(Y,Z),
\]
\[
  \mathsf{available}(I,L)\leftarrow
  \mathsf{produces}(S,I),\,\mathsf{supplies}(S,L),
\]
\[
  \mathsf{available}(I,L)\leftarrow
  \mathsf{produces}(S,I),\,\mathsf{supplies}(S,L_0),\,
  \mathsf{reachable}(L_0,L).
\]
The numerical values below are illustrative only and play no role in
any theorem.
For the zero-distortion formula quoted below, assume also that
\(S_O\subseteq\hat S_O\) and that
Assumption~\ref{assump:core-disjoint} holds on this synthetic family.

Consider a fixed 200-location supply-chain Datalog instance with
\(|A|=1705\) stored EDB facts.
Let \(S_O:=A\cup J_\mu\), where \(J_\mu\) is a materialized subset of
derived IDB facts of size controlled by a materialization level
\(\mu\).
Under the uniform source on \(S_O\),
Theorem~\ref{thm:tight-zero-rate} gives
\[
  R_{\mathrm{sem}}(0)
  =
  \frac{|A|}{|S_O|}\log |A|,
  \qquad
  R(0;d_H)=\log |S_O|.
\]
Table~\ref{tab:compression-gain-appendix} reports the resulting
compression ratio
\(R_{\mathrm{sem}}(0)/R(0;d_H)\).

\begin{table}[ht]
\centering
\caption{Synthetic illustration of deductive compression gain under
materialized redundant consequences (uniform source).}
\label{tab:compression-gain-appendix}
\renewcommand{\arraystretch}{1.15}
\setlength{\tabcolsep}{5pt}
\footnotesize
\begin{tabular}{rrrrr}
\toprule
\(\mu\) (\%) & \(|J|\) & \(|S_O|\) &
\(R_{\mathrm{sem}}(0)/R(0;d_H)\) &
\(\log|A|/\log|S_O|\) \\
\midrule
  0   &      0 &  1\,705 & 1.000 & 1.000 \\
 10   &  4\,340 &  6\,045 & 0.241 & 0.855 \\
 20   &  8\,680 & 10\,385 & 0.132 & 0.805 \\
 30   & 13\,020 & 14\,725 & 0.090 & 0.775 \\
 50   & 21\,700 & 23\,405 & 0.054 & 0.740 \\
 80   & 34\,720 & 36\,425 & 0.033 & 0.709 \\
100   & 43\,400 & 45\,105 & 0.026 & 0.694 \\
\bottomrule
\end{tabular}
\end{table}

The table illustrates the basic mechanism of the theory:
as more deductively redundant consequences are materialized and stored,
the irredundant core remains unchanged while the classical
zero-distortion benchmark grows with \(|S_O|\).
Accordingly, the closure-based zero-distortion rate becomes a rapidly
shrinking fraction of the classical Hamming zero-distortion rate.

\bibliographystyle{IEEEtran}
\bibliography{ref}

@book{hodges1993model,
  title={Model theory},
  author={Hodges, Wilfrid},
  year={1993},
  publisher={Cambridge university press}
}

@article{dantsin2001complexity,
  title={Complexity and expressive power of logic programming},
  author={Dantsin, Evgeny and Eiter, Thomas and Gottlob, Georg and Voronkov, Andrei},
  journal={ACM Computing Surveys (CSUR)},
  volume={33},
  number={3},
  pages={374--425},
  year={2001},
  publisher={ACM New York, NY, USA}
}

@article{shannon1948mathematical,
  title={A mathematical theory of communication},
  author={Shannon, Claude E},
  journal={Bell System Technical Journal},
  volume={27},
  number={3},
  pages={379--423},
  year={1948}
}

@book{cover2006elements,
  author={Cover, Thomas M and Thomas, Joy A},
  title={Elements of Information Theory},
  edition={2nd},
  year={2006},
  publisher={John Wiley \& Sons}
}

@inproceedings{chaudhuri1995optimizing,
  title={Optimizing queries with materialized views},
  author={Chaudhuri, Surajit and Krishnamurthy, Ravi and Potamianos, Spyros and Shim, Kyuseok},
  booktitle={IEEE International Conference on Data Engineering},
  pages={190--200},
  year={1995}
}

@article{lipkus1999proof,
  title={A proof of the triangle inequality for the Tanimoto distance},
  author={Lipkus, Alan H},
  journal={Journal of Mathematical Chemistry},
  volume={26},
  number={1},
  pages={263--265},
  year={1999},
  publisher={Springer}
}

@inproceedings{broder1997resemblance,
  title={On the resemblance and containment of documents},
  author={Broder, Andrei Z},
  booktitle={Proceedings. Compression and Complexity of SEQUENCES 1997 (Cat. No. 97TB100171)},
  pages={21--29},
  year={1997},
  organization={IEEE}
}

@article{ceri1989you,
  author={Ceri, Stefano and Gottlob, Georg and Tanca, Letizia},
  title={What you always wanted to know about {D}atalog (and never dared to ask)},
  journal={IEEE Transactions on Knowledge and Data Engineering},
  volume={1},
  number={1},
  pages={146--166},
  year={1989},
  publisher={IEEE}
}

@book{abiteboul1995foundations,
  author    = {Abiteboul, Serge and Hull, Richard and Vianu, Victor},
  title     = {Foundations of Databases},
  publisher = {Addison-Wesley},
  year      = {1995}
}

@article{gupta1995maintenance,
  title={Maintenance of Materialized Views: Problems, Techniques, and Applications},
  author={Gupta, Ashish and Mumick, Inderpal Singh},
  journal={IEEE Data Engineering Bulletin},
  volume={18},
  number={2},
  pages={3--18},
  year={1995}
}

@article{shannon1959coding,
  title={Coding theorems for a discrete source with a fidelity criterion},
  author={Shannon, Claude E and others},
  journal={IRE Nat. Conv. Rec},
  volume={4},
  number={142-163},
  pages={1},
  year={1959}
}

@book{csiszar2011information,
  title={Information theory: coding theorems for discrete memoryless systems},
  author={Csisz{\'a}r, Imre and K{\"o}rner, J{\'a}nos},
  year={2011},
  publisher={Cambridge University Press}
}

@techreport{carnap1952outline,
  author      = {Carnap, Rudolf and Bar-Hillel, Yehoshua},
  title       = {An outline of a theory of semantic information},
  institution = {Research Laboratory of Electronics, MIT},
  number      = {Technical Report 247},
  year        = {1952}
}

@article{gunduz2022beyond,
  title={Beyond transmitting bits: Context, semantics, and task-oriented communications},
  author={G{\"u}nd{\"u}z, Deniz and Qin, Zhijin and Aguerri, Inaki Estella and Dhillon, Harpreet S and Yang, Zhaohui and Yener, Aylin and Wong, Kai Kit and Chae, Chan-Byoung},
  journal={IEEE Journal on Selected Areas in Communications},
  volume={41},
  number={1},
  pages={5--41},
  year={2022},
  publisher={IEEE}
}

@article{niu2024mathematical,
  title={A mathematical theory of semantic communication},
  author={Niu, Kai and Zhang, Ping},
  journal={Journal on Communications},
  volume={45},
  number={6},
  pages={7--59},
  year={2024}
}

@article{ma2025theory,
  title={A theory for semantic channel coding with many-to-one source},
  author={Ma, Shuai and Zhang, Chuanhui and Qi, Huayan and Li, Hang and Bi, Yue and Shi, Guangming and Al-Dhahir, Naofal},
  journal={IEEE Transactions on Cognitive Communications and Networking},
  year={2025},
  publisher={IEEE}
}

@article{han2025extended,
  title={Extended Blahut-Arimoto algorithm for semantic rate-distortion function},
  author={Han, Y. and Liu, Y. and Sun, Y. and Niu, K. and Ma, N. and Cui, S. and Zhang, P.},
  journal={Entropy},
  volume={27},
  number={6},
  pages={651},
  year={2025}
}

@article{wyner1976rate,
  author  = {Wyner, Aaron D. and Ziv, Jacob},
  title   = {The rate-distortion function for source coding
             with side information at the decoder},
  journal = {IEEE Trans.\ Inform.\ Theory},
  volume  = {22},
  number  = {1},
  pages   = {1--10},
  year    = {1976}
}

@article{shannon1956zero,
  title={The zero error capacity of a noisy channel},
  author={Shannon, C},
  journal={IRE Transactions on Information Theory},
  volume={2},
  number={3},
  pages={8--19},
  year={1956}
}

@article{witsenhausen1976zero,
  title={The zero-error side information problem and chromatic numbers (Corresp.)},
  author={Witsenhausen, H},
  journal={IEEE Transactions on Information Theory},
  volume={22},
  number={5},
  pages={592--593},
  year={1976}
}

@article{alon1996source,
  title={Source coding and graph entropies},
  author={Alon, N and Orlitsky, A},
  journal={IEEE Transactions on Information Theory},
  volume={42},
  number={5},
  pages={1329--1339},
  year={1996}
}

@article{orlitsky2001coding,
  title={Coding for Computing},
  author={Orlitsky, Alon and Roche, James R},
  journal={IEEE Transactions on Information Theory},
  volume={47},
  number={3},
  pages={903--917},
  year={2001}
}

@article{van1976semantics,
  title={The semantics of predicate logic as a programming language},
  author={Van Emden, Maarten H and Kowalski, Robert A},
  journal={Journal of the ACM (JACM)},
  volume={23},
  number={4},
  pages={733--742},
  year={1976},
  publisher={ACM New York, NY, USA}
}

@article{slepian1973noiseless,
  title={Noiseless Coding of Correlated Information Sources},
  author={SLEPIAN, DAVID and WOLF, JACK K},
  journal={IEEE TRANSACTIONS ON INFORMATION THEORY},
  volume={19},
  number={4},
  pages={471},
  year={1973}
}

@inproceedings{liu2022indirect,
  author    = {Liu, Jiakun and Shao, Shuo and Zhang, Wenyi and Poor, H. Vincent},
  title     = {An indirect rate-distortion characterization for semantic sources: General model and the case of Gaussian observation},
  booktitle = {IEEE Transactions on Communications},
  volume    = {70},
  number    = {9},
  pages     = {5946--5959},
  year      = {2022},
  doi       = {10.1109/TCOMM.2022.3192049}
}

@article{zhao2025semantic,
  title={Semantic Rate-Distortion Theory with Applications},
  author={Zhao, Yi-Qun and Ma, Zhi-Ming and Li, Geoffrey Ye and Yuan, Shuai and Ye, Tong and Zhou, Chuan},
  journal={arXiv preprint arXiv:2509.10061},
  year={2025}
}

@article{salehkalaiabar2024rate,
  author    = {Salehkalaiabar, Sadaf and Chen, Jun and Khisti, Ashish and Yu, Wei},
  title     = {Rate-distortion-perception tradeoff based on the conditional-distribution perception measure},
  journal   = {IEEE Transactions on Information Theory},
  volume    = {70},
  number    = {12},
  pages     = {8432--8454},
  year      = {2024},
  doi       = {10.1109/TIT.2024.3434979}
}

@inproceedings{hu2019datalog,
  author    = {Hu, Pan and Urbani, Jacopo and Motik, Boris and Horrocks, Ian},
  title     = {Datalog reasoning over compressed {RDF} knowledge bases},
  booktitle = {Proceedings of the 28th ACM International Conference on Information and Knowledge Management (CIKM)},
  pages     = {2065--2068},
  year      = {2019},
  doi       = {10.1145/3357384.3358147}
}

@book{gupta1998materialized,
  title={Materialized views: techniques, implementations, and applications},
  author={Gupta, Ashish and Mumick, Inderpal Singh and others},
  year={1998},
  publisher={MIT press Cambridge, MA}
}

@article{korner1986fredman,
  title={Fredman--Koml{\'o}s bounds and information theory},
  author={K{\"o}rner, J{\'e}nos},
  journal={SIAM Journal on Algebraic Discrete Methods},
  volume={7},
  number={4},
  pages={560--570},
  year={1986},
  publisher={SIAM}
}

@inproceedings{korner1971coding,
  title={Coding of an information source having ambiguous alphabet and the entropy of graphs.},
  author={Korner, Janos and others},
  booktitle={6th Prague conference on Information Theory, etc.},
  pages={411--425},
  year={1971},
  organization={Academia, Prague}
}

\vfill
\end{document}